\newtheorem{theorem}{Theorem}
\newtheorem{lemma}[theorem]{Lemma}
\newtheorem{observation}[theorem]{Observation}
\newtheorem{proposition}[theorem]{Proposition}
\newtheorem{corollary}[theorem]{Corollary}
\newtheorem{assumption}{Assumption}
\newcommand*\routri{\begin{tikzpicture}
   \node at (0,0) [regular polygon, regular polygon sides=3, draw, scale = 0.4, inner sep=1, shape border rotate=180] {Y};
\end{tikzpicture}}
\newcommand{\peace}{\routri}
\newtheorem*{rep@theorem}{\rep@title}
\newcommand{\newreptheorem}[2]{%
\newenvironment{rep#1}[1]{%
\def\rep@title{#2 \ref{##1}}%
\begin{rep@theorem}}%
{\end{rep@theorem}}}
\newcommand{\namedref}[2]{\hyperref[#2]{#1~\ref*{#2}}}
\newcommand{\sectionref}[1]{\namedref{Section}{#1}}
\newcommand{\theoremref}[1]{\namedref{Theorem}{#1}}
\newcommand{\figureref}[1]{\namedref{Figure}{#1}}
\newcommand{\algorithmref}[1]{\namedref{Algorithm}{#1}}
\newcommand{\figurerefb}[2]{\hyperref[#1]{Figure~\ref*{#1}#2}}
\newcommand{\lemmaref}[1]{\namedref{Lemma}{#1}}
\newcommand{\lineref}[1]{\namedref{Line}{#1}}
\newcommand{\corollaryref}[1]{\namedref{Corollary}{#1}}
\newcommand{\observationref}[1]{\namedref{Observation}{#1}}
\newcommand{\propositionref}[1]{\namedref{Proposition}{#1}}
\newcommand{\assumptionref}[1]{\namedref{Assumption}{#1}}
\newcommand{\equationref}[1]{\hyperref[#1]{(\ref*{#1})}}
\newcommand{\idlow}[1]{\mathord{\mathcode`\-="702D\tt #1\mathcode`\-="2200}}
\newcommand{\id}[1]{\ensuremath{\idlow{#1}}}
\newcommand{\real}[1]{\bm{\mathcal{#1}}}
\newcommand{\simulated}[1]{\mathit{#1}}
\begin{document}

\title{
Revisionist Simulations: \\
A New Approach to Proving Space Lower Bounds
}
%
%
%

\author{
	Faith Ellen\\
  \small University of Toronto\\
  \small faith@cs.toronto.edu\\
  \and
Rati Gelashvili\\
  \small University of Toronto\\
  \small gelash@cs.toronto.edu\\
  \and
Leqi Zhu\\
  \small University of Toronto\\
  \small lezhu@cs.toronto.edu\\
}


\maketitle

\begin{abstract}
Determining the number of registers required for solving $x$-obstruction-free (or randomized wait-free) 
  $k$-set agreement for $x \leq k$ is an open problem that highlights important gaps 
  in our understanding of the space complexity of synchronization.
In $x$-obstruction-free protocols, processes are required to return in 
  executions where at most $x$ processes take steps.
The best known upper bound on the number of registers needed to solve this problem among $n>k$ processes is $n-k+x$ registers.
  No general lower bound better than $2$ was known.

We prove that any $x$-obstruction-free protocol solving \mbox{$k$-set}
agreement among $n > k$ processes
must use $\lfloor \frac{n-x}{k+1-x} \rfloor + 1$ or more registers.
Our main tool is a simulation that serves as a reduction from the impossibility 
  of deterministic wait-free $k$-set agreement.
In particular, we show that, if a protocol uses fewer registers,
  then it is possible for $k+1$ processes to simulate the protocol and deterministically solve $k$-set agreement in a wait-free manner, which is impossible.

We generalize this simulation to prove space lower bounds for $x$-obstruction-free protocols solving colorless tasks. In particular, we prove a lower bound of $\lfloor \frac{n}{2} \rfloor +1$ for obstruction-free $\epsilon$-approximate agreement, for sufficiently small $\epsilon$. 
An important aspect of the simulation is the ability of simulating processes to revise 
  the past of simulated processes.
We introduce an \emph{augmented snapshot object}, which facilitates this.

We also prove that any 
lower bound on the number of registers used by obstruction-free protocols 
  applies to protocols that satisfy nondeterministic solo termination.
Hence, our lower bounds for the obstruction-free case
also holds for randomized wait-free protocols. 
In particular, we get a tight lower bound of exactly $n$ registers for solving 
  obstruction-free and randomized wait-free consensus. 

\end{abstract}


\epigraph{\ \\
He who controls the past controls the future. He who controls the present controls the past.}{George Orwell, 1984}

\vspace{-20pt}


\section{Introduction}
The $k$-set agreement problem,
introduced by Chaudhuri~\cite{Cha93}, is a well-known synchronization task in which $n > k$ processes, 
each with an input value, are required to output 
at most different $k$ values, each of which is the input of some process.
This is a generalization of the 
classical consensus problem,
which is the case $k=1$.

Two celebrated results in distributed computing 
are the impossibility of solving consensus deterministically when
at most one process may crash~\cite{FLP85,LA87}
and, more generally,
the impossibility of solving $k$-set agreement deterministically 
when at most $k$ processes may crash~\cite{BG93, HS99, SZ00},
using only registers.
One way to bypass these impossibility results is to 
design protocols that are obstruction-free~\cite{HLM03}.
\emph{Obstruction-freedom} is a termination condition that requires a process to terminate given sufficiently many consecutive steps, i.e.,~from any configuration, if only one process takes steps, then it will eventually terminate. 
$x$-{\em obstruction-freedom}~\cite{Tau17} generalizes this condition:
from any configuration, if only $x$ processes take steps, then they will all eventually terminate.
It is known that $k$-set agreement can be solved using only registers
in an $x$-obstruction-free way for $1 \leq x \leq k$ ~\cite{YNG98}.
Another way to overcome the impossibility of solving consensus 
is to use randomized wait-free protocols, where non-faulty processes are required to 
terminate with probability $1$~\cite{BenOr83}.

It is possible to solve consensus for $n$ processes using $n$ registers
in a randomized wait-free way~\cite{Abr88, AH90, SSW91, AC08} or 
in an obstruction-free way~\cite{GR05, Bow11, Zhu15, BRS15}.
A lower bound of $\Omega(\sqrt{n})$ was proved by Ellen, Herlihy, and Shavit in~\cite{FHS98}.
Recently, Gelashvili proved an $\Omega(n)$ lower bound 
for \emph{anonymous} processes~\cite{Gel15}.
Anonymous processes~\cite{FHS98, AGM02} have no identifiers
and run the same code: all processes with the same input 
start in the same initial state and behave identically 
until they read different values.
Then Zhu proved
that any obstruction-free protocol solving consensus for $n$ processes
requires at least $n-1$ registers~\cite{Zhu16}.
All these lower bounds are actually for protocols that satisfy 
\emph{nondeterministic solo termination}~\cite{FHS98},
which includes both obstruction-free and randomized wait-free protocols.

In contrast, there are big gaps between the best known upper and lower bounds 
on the
number of registers needed for
$k$-set agreement.
The best obstruction-free protocols require $n-k+1$ 
registers~\cite{Zhu15, BRS15}.
Bouzid, Raynal, and Sutra~\cite{BRS15} also give an $x$-obstruction-free protocol
that uses $n-k+x$ registers, 
improving on the $min(n + 2x - k, n)$
space complexity of Delporte-Gallet, Fauconnier, Gafni, 
and Rajsbaum's obstruction-free protocol~\cite{DFGR13}.
All of these algorithms work for anonymous processes.
Delporte-Gallet, Fauconnier, Kuznetsov, and Ruppert~\cite{DFKR15} proved that it
is impossible to solve $k$-set agreement using $1$ register.
For anonymous processes, they also proved a lower bound
of $\sqrt{x(\frac{n}{k}-2)}$ for $x$-obstruction-free protocols,
which still leaves a polynomial gap between the lower and upper bounds. 

There are good reasons why proving lower bounds 
on the number of registers needed for $k$-set agreement may be
difficult.
At a high level, the impossibility results for $k$-set agreement consider
some representation (for example, a simplicial complex)
of all possible process states in all possible executions.
Then, a combinatorial property (Sperner's Lemma~\cite{Lef49book}) is used to prove that, 
roughly speaking, for any given number of steps, there exists an execution leading 
to a configuration in which $k+1$ outputs are still possible.
Although there is ongoing work to develop a more general theory~\cite{GKM14, SHG16, GHKR16},
we do not know enough about the topological representation
of protocols that are $x$-obstruction-free 
or use fewer than $n$ multi-writer registers~\cite{HKR13book}
to adapt topological arguments to prove space lower bounds for $k$-set agreement.
There are similar problems adapting known proofs that do not explicitly use topology~\cite{AC13, AP16}.


Approximate agreement~\cite{DLPSW86} is another important task for which 
no good space lower bound was known.
In $\epsilon$-approximate agreement, each process starts with an input in $\{0, 1\}$.
The processes are required to output values in the interval $[0,1]$ that are all within $\epsilon$ of each other. Moreover, each output value must lie between the smallest input and the largest input.
This problem can be deterministically solved in a \emph{wait-free} manner, i.e. every non-faulty process eventually outputs a value.
The only space lower bound for this problem, $\Omega(\log(\frac{1}{ \epsilon}))$, 
was in a restricted setting with single-bit registers~\cite{Sch96}.
The best upper bounds
are $\lceil\log_2 (\frac{1}{ \epsilon})\rceil$ \cite{Sch96} and $n$ \cite{attiya1994wait}.

\textbf{Our contribution.} 
In this paper, we prove a
lower bound of 
$\lfloor \frac{n-x}{k+1-x} \rfloor + 1$
on the number of registers
necessary for solving $n$-process 
$x$-obstruction-free $k$-set agreement.
As corollaries,
we get a tight lower bound of $n$ registers for obstruction-free consensus
and a tight lower bound of 2 for obstruction-free $(n-1)$-set consensus.
We also prove a space lower bound 
  of $\lfloor \frac{n}{2} \rfloor + 1$ registers 
  for obstruction-free $\epsilon$-approximate agreement,
  for sufficiently small $\epsilon$. More generally, we prove space lower bounds for colorless tasks.
  
In addition, 
  in~\sectionref{sec:ndst2of},
we prove that any lower bound on the number registers needed for obstruction-free protocols 
to solve a task 
 also applies to nondeterministic solo terminating protocols and, in particular, to randomized wait-free protocols
solving that task.
Hence, 
our space lower bounds
for obstruction-free protocols
also apply to such protocols. We also show that the same result may be obtained for a large class of objects.

\textbf{Technical Overview.}
Using a novel simulation, we convert any  obstruction-free protocol for $k$-set agreement 
that uses too few registers to a protocol that solves wait-free $k$-set agreement
using only registers.
Since solving  wait-free $k$-set agreement
is impossible using only registers, this reduction gives a lower bound on the number of registers needed
to solve obstruction-free $k$-set agreement.
This simulation technique, described in detail in~\sectionref{sec:simulation},
  is the main technical contribution of the paper.
It is the first technique that proves lower bounds on space complexity
by applying results obtained by topological arguments.
We also use this new technique to prove a
lower bound on the number of registers needed
  for $\epsilon$-approximate agreement by a reduction from a step complexity 
  lower bound for $\epsilon$-approximate agreement.
Specifically, we convert any obstruction-free protocol for $\epsilon$-approximate agreement 
  to a protocol that uses few registers to a protocol that solves 
  $\epsilon$-approximate agreement for two processes 
  such that both processes take few steps. 

The executions of the simulated processes in our simulation
are reminiscent of the executions
constructed by adversaries in covering arguments~\cite{BL93,AE14}.
In those proofs,
the adversary modifies an execution it has constructed
by revising the past of some process, so that the old and new executions are indistinguishable
to the other processes. It does so
by inserting consecutive steps of the process starting from some 
  carefully chosen configuration.
In our simulation, a real
process may revise the past of a simulated process, 
  in a way that is indistinguishable to other simulated processes.
This is possible because each
simulated process is simulated by a single
real process.
In contrast, in the BG simulation~\cite{BGLR01}, different steps of simulated processes 
  can be performed by different
real processes,
  so this would be
much more difficult to do.
  
A crucial component of our simulation is the use of an {\em augmented snapshot object}, which we implement in a non-blocking manner from registers.
 Like a standard snapshot object, this object consists of a fixed number of components and supports a \id{Scan} operation,
which returns the contents of all components.
However, it generalizes the  \id{update} operation to a \id{Block-Update} operation, which can update multiple components of the object. In addition, a \id{Block-Update} returns 
some information, which is used by our simulation.
The specifications of \id{Block-Update} and our implementation of an
augmented snapshot object appears in Section \ref{sec:fullaugsnapshot}.
\section{Preliminaries}
\label{sec:prelim}
An asynchronous shared memory system consists of a set of processes 
  and instances of base objects, 
which processes use to communicate.
An \emph{object} has a set of possible values and a set of operations, 
  each of which takes some fixed number of inputs and returns a response.
The processes take steps at arbitrary speeds 
  and may fail, at any time, by crashing.
Every step consists of an operation on some base object 
by some process plus local computation
by that process to determine its next state from the response
returned by the operation.

\textbf{Configurations and Executions.}
A \emph{configuration} of a system consists of the state 
  of each process
and the value of each object.
An \emph{initial} configuration
  is determined by the input value of each process.
Each object has the same value in all initial configurations.
A configuration $C$ is \emph{indistinguishable} from a configuration $C'$ to 
  a set of processes $\mathcal{P}$ in the system, if every process in $\mathcal{P}$ 
  is in the same state in $C$ as it is in $C'$ and 
each object in the system has the same value in $C$ as in $C'$. 

A step $e$ by a process $p$ is \emph{applicable} at a configuration $C$ 
  if $e$ can be the next step of process $p$ given its state in $C$. 
If $e$ is applicable at $C$, then we use $Ce$ to denote the configuration 
  resulting from $p$ taking step $e$ at $C$.
A sequence of steps $\alpha = e_1, e_2, \dots$ is \emph{applicable} 
  at a configuration $C$ if $e_1$ is applicable at $C$ and, 
  for each $i \geq 1$, $e_{i+1}$ is applicable at $Ce_1\cdots e_i$. 
In this case, $\alpha$ is called an \emph{execution from} $C$. 
An execution $\alpha \beta$ denotes the execution $\alpha$
  followed by the execution $\beta$.
A configuration $C$ is \emph{reachable} if there exists a finite execution 
  from an initial configuration that results in $C$.

For a finite execution $\alpha$ from a configuration $C$, 
  we use $C\alpha$ to denote the configuration reached after applying $\alpha$ to $C$. 
If $\alpha$ is empty, then $C\alpha = C$. 
We say an execution $\alpha$ is \emph{$\mathcal{P}$-only}, 
  for a set of processes $\mathcal{P}$, 
  if all steps in $\alpha$ are by processes in $\mathcal{P}$. 
A $\{p\}$-only execution, for some process $p$, 
  is also called a \emph{solo execution by $p$}. 
Note, if configurations $C$ and $C'$ are indistinguishable 
  to a set of processes $\mathcal{P}$, 
  then any $\mathcal{P}$-only execution from $C$ is applicable at $C'$.

\textbf{Implementations and Linearizability.}  
An \emph{implementation} of an object specifies, for each process and each operation of the object, a deterministic procedure describing how the process carries out the operation. The \emph{execution interval} of an invocation of an operation in an execution is the subsequence of the execution that begins with its first step and ends with its last step.
If an operation does not complete, for example, if the process that invoked it crashed
before receiving a response, then its execution interval is infinite.
An implementation of an object is \emph{linearizable} if, for every execution, there is a point in
each operation's execution interval,
called the \emph{linearization point} of the operation, such that the operation can be said to have taken place atomically at that point \cite{HW90}. This is equivalent to saying that the operations can be ordered (and all incomplete operations can be given responses)
so that any operation which ends before another one begins is ordered earlier
and the responses of the operations are consistent with the sequential specifications of the object~\cite{HW90}. 


\textbf{Progress Conditions.} An implementation of an object is \emph{wait-free} if every process is able to complete its current operation on the object after taking sufficiently many steps, regardless of what other processes are doing. An implementation is \emph{non-blocking} if infinitely many operations are completed in every infinite execution. 

A protocol is \emph{$x$-obstruction-free} if, from any configuration $C$ and for any subset $P$ of at most $x$ processes, every process in $P$ that takes sufficiently many steps after $C$ outputs a value, as long as only processes in $P$ take steps after $C$. A protocol is 
\emph{obstruction-free} if it is 1-obstruction-free  and 
\emph{wait-free} if it is $n$-obstruction-free.

\textbf{Registers and Snapshot objects.}
A \emph{register} is an object that supports two operations, $\id{write}$ and $\id{read}$.
A $\id{write(v)}$ operation writes value $v$ to the register,
  and a $\id{read}$ operation returns the last value that was written to the 
  register before the read.
A \emph{multi-writer register} allows all processes to write to it,
  while a \emph{single-writer register} can only be written to by one fixed process.
A process is said to be \emph{covering} a register if its next step
  is a write to this register. 
A \emph{block write} is a consecutive sequence of \id{write} operations to different 
  registers performed by different processes.

An \emph{$m$-component multi-writer
snapshot} object~\cite{AADGMS93}
stores a sequence of $m$ values and
supports two operations, $\id{update}$ and $\id{scan}$. An $\id{update}(j,v)$ operation
sets component $j$ of the object to $v$. 
A $\id{scan}$ operation returns the current \emph{view}, 
consisting of 
the values of all components.
A \emph{single-writer} snapshot object 
shared by a set of processes has one component for each process
and each process may only \id{update} its own component.
A process is said to be covering component $j$ of a snapshot object if its next step
  is an update to the component $j$.
A \emph{block update} is a consecutive sequence of \id{update} operations to different 
  components of a snapshot object performed by different processes.

It is easy to implement $m$ registers from an $m$-component multi-writer snapshot object,
by  replacing each \id{write} to the $j$'th register by an \id{update} to the $j$'th component
and replacing a \id{read} to the $j$'th register by a \id{scan} and then discarding 
all but the value of the $j$'th component.
An $m$-component snapshot object can also
be implemented from $m$ registers~\cite{AADGMS93}.
 
\textbf{Tasks and Protocols.} 
A \emph{task} specifies a set of allowable combinations of inputs to the processes and, for each such combination, what combinations of outputs can be returned by the processes.
A \emph{protocol} for a task provides a procedure for each process to compute its
output, so that the task's specifications are satisfied. 

A task is \emph{colorless} if the input or output of any process may be the input or output, respectively, of another process. Moreover, the specification of the task does not depend on the number of processes in the system. More precisely, a colorless task is a triple $(\mathcal{I},\mathcal{O},\Delta)$, where $\mathcal{I}$ contains sets of possible inputs, $\mathcal{O}$ contains sets of possible outputs, and, for each input set $I \in \mathcal{I}$, $\Delta(I)$ specifies a subset of $\mathcal{O}$, corresponding to valid outputs for $I$. Moreover, $\mathcal{I}$, $\mathcal{O}$, and $\Delta(I)$, for each $I \in \mathcal{I}$, are closed under taking subsets; i.e.~if a set is present, then so are its non-empty subsets. The following are all examples of colorless tasks:

\begin{itemize}
	\item \emph{Consensus}: Each process begins with an arbitrary value as its input  and, if it does not crash, must output a value such that no two processes output different values and each output value is the input of some process.
	\item \emph{$k$-Set agreement}: Each process begins with an arbitrary value as its input  and, if it does not crash, must output a value such that at most $k$ values are output and each output value is the input of some process.
	\item \emph{$\epsilon$-Approximate Agreement}: Each process begins with an arbitrary (real) value as its input and, if it does not crash, must output a value such that any two output values are at most $\epsilon$ apart. Moreover, the set of output values is in the interval $[min,max]$, where $min$ and $max$ are the smallest and largest input values, respectively.
\end{itemize} 

The \emph{space complexity} of a protocol 
is the maximum number of registers used in any execution of the protocol.
Each $m$-component snapshot object it uses counts as $m$ registers.
The \emph{space complexity} of a task is the minimum space complexity of any protocol
for the task.


\subsection{Our Setting}
We consider two asynchronous shared memory systems, 
the simulated system and the real system.

%
%

\textbf{Simulated system.}
The \emph{simulated system} consists of $n$ simulated processes, $p_1, \ldots, p_n$,
that communicate through an $m$-component multi-writer snapshot object.
Thus, any task that can be solved in the simulated system has space complexity at most $m$.

Without loss of generality, we assume that each process $p_i$ alternately performs \id{scan} and \id{update} 
  operations on the snapshot object:
Between two consecutive \id{update} operations, $p_i$ can perform a \id{scan}
  and ignore its result.
If $p_i$ is supposed to perform multiple consecutive \id{scan}s,
  it can, instead, perform one \id{scan} and use its result as the result
  of the others.
This is because it is possible for all these \id{scan}s to occur
  consecutively in an execution, in which case, they {\em would}
all  get the same result.


\textbf{Real system.}
The real system consists of $f$ real processes, $q_1, \ldots, q_f$, that communicate through
a single-writer 
snapshot object.
For clarity of presentation, real processes
use single-writer registers in addition to the single-writer snapshot object.
The single-writer registers to which a particular process writes can be treated
as additional separate fields of the component of the snapshot object
belonging to that process.
In~\sectionref{sec:fullaugsnapshot}, we define and implement an
  \emph{$m$-component augmented snapshot object}
  shared by the real processes.

In our simulation,
the processes in the simulated system are partitioned into $f$
  sets, $P_1, \ldots , P_f$,
and real process $q_i$
is solely responsible for simulating
the actions of
all processes in $P_i$ in the simulated system. 
This is illustrated in~\figureref{fig:overview}.

\begin{figure*} [!ht]
	\centering
	\resizebox{0.8\textwidth}{!}{
	\includegraphics[scale=1]{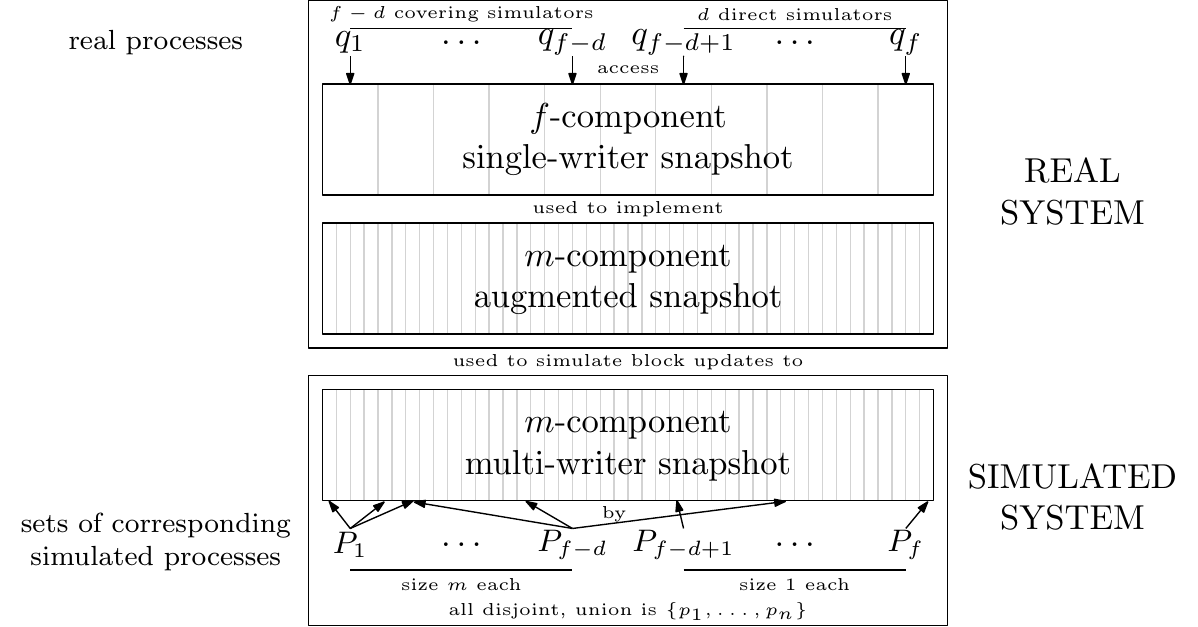}}
	\caption{High-level overview of real and simulated systems.}
	\label{fig:overview}
\end{figure*}

\section{Augmented Snapshot Object}
\label{sec:fullaugsnapshot}
In this section, we define an 
augmented snapshot object 
  and show how it can be deterministically implemented
in the real system.
This object plays a central role in our simulation.
It is used by real processes to simulate steps performed by simulated processes.
In particular, a real process $q_i$ uses this object to simulate 
  an $\id{update}$ or a $\id{scan}$ by any simulated process in $P_i$, 
  or a block update by
any subset of
  processes in $P_i$.
Our simulation, 
which is explained in~\sectionref{sec:simulation}, 
is 
non-standard.
Unfortunately, to satisfy its technical requirements, 
the augmented snapshot
has to satisfy
some
non-standard properties.

An $m$-component
augmented snapshot object is
a generalization of an $m$-component multi-writer snapshot object. 
A \id{Scan} operation returns the current \emph{view}, 
consisting of  the values of all components.
The components can be updated using a \id{Block-Update} operation. 
The key difference between a multi-writer snapshot and an augmented snapshot is that 
a \id{Block-Update} may update \emph{multiple} components, 
 although not necessarily atomically.
In addition, a \id{Block-Update} may return a view from some earlier point in the execution.
Otherwise, it returns a special \emph{yield} symbol, $\peace$.

A linearizable, non-blocking implementation of an augmented snapshot object
in the real system is impossible.
This is because a
\id{Block-Update} operation that updates 2 components
would then be the same as a \id{2-assignment} operation.
However,  \id{2-assignment}, together with \id{read} or \id{Scan},
can be used to deterministically solve wait-free consensus among 2 processes~\cite{her91},
which is impossible in the real system~\cite{AADGMS93,LA87}.

Instead, a \id{Block-Update} operation can be considered to be
a sequence of atomic \id{Update} operations,
which each update one component of 
 the augmented snapshot object.
(Analogously, a \emph{collect} operation \cite{beame1986limits,attiya1994wait} 
 is not atomic, but the individual reads that comprise it are atomic.) 
  
\subsection{Specification}
An \emph{$m$-component augmented snapshot object}, $\mathcal{M}[1..m]$, shared by $f$ processes, 
$q_1,\dots,q_f$, consists of $m$ components and supports two operations, $\id{Scan}$ and $\id{Block-Update}$, which can be performed by all processes. 
A \emph{view} of the augmented snapshot consists of the value of each of the $m$ components
at some point in an execution.
A \id{Scan} operation returns the current view.
A \id{Block-Update} operation to a sequence of $r \geq 1$ different components $[j_1,\dots,j_r]$ of $\mathcal{M}$ with 
a sequence of $r$ values $[v_1,\dots,v_r]$ is comprised of a sequence of \id{Update} operations $U_1,\dots,U_r$. Each \id{Update} $U_g$ atomically sets $\mathcal{M}[j_g]$ to $v_g$.
These \id{Update} operations may occur in any order.
A \id{Block-Update} operation also returns either $\peace$ or a view of $\mathcal{M}$.

A \id{Block-Update} that does not return $\peace{}$ is called \emph{atomic}.
We require that every execution has a linearization in which the \id{Update} operations comprising each atomic \id{Block-Update} are linearized consecutively.

Consider any atomic \id{Block-Update}, $B$.
Let $Z$ be the first 
\id{Update} in $B$.
Let $Z'$ be the
last \id{Update} prior to $Z$ that is part of
an atomic \id{Block-Update} or the beginning of the execution, if there is no such \id{Update}.
Then $B$ must return a view of $\mathcal{M}$  at some point $T$ between $Z'$ and $Z$
such that
no \id{Scan} occurs
between $T$ and $Z$.

We  prove that, in our implementation, a \id{Block-Update} only returns $\peace{}$
under certain circumstances, as described in Theorem~\ref{thm:fulnb}.
For example,
a \id{Block-Update} by process $q_1$ will always be atomic
and, if a \id{Block-Update} experiences no step contention~\cite{AGK05},
it will be atomic.
The simulation in Section \ref{sec:simulation} relies on this property of our implementation.

\subsection{Implementation}
In this section, we describe how to implement an $m$-component augmented snapshot object, $\mathcal{M}$, shared by $f$ processes, $q_1,\dots,q_f$, in the real system. Our implementation is non-blocking: every  \id{Block-Update} operation is wait-free, while a \id{Scan} operation can only be blocked by an
infinite sequence of concurrent \id{Block-Update} operations.

The implementation uses a shared single-writer
snapshot object $\mathsf{H}[1..f]$.
All $f$ components of $\mathsf{H}$ are initially $\bot$.
The $i$'th component of $\mathsf{H}$ is used by $q_i$ to
record a list of every \id{Update} it performs,
each represented by a \emph{triple}.   A
triple contains a component of $\mathcal{M}$, 
a value, and a \emph{timestamp}.
Each time $q_i$ performs a \id{Block-Update} to $r$ components of $\mathcal{M}$, it appends $r$ triples to $\mathsf{H}[i]$, all with the same timestamp.
For clarity of presentation,
we also use $n(n-1)$ unbounded arrays of single-writer registers, 
$\mathsf{L}_{i,j}$, for all $j \neq i$, each indexed by the non-negative integers.
Each register is initially $\bot$.
Only process  $q_i$ can write to $\mathsf{L}_{i,j}[b]$ and only process $q_j$ reads from it. The register $\mathsf{L}_{i,j}[b]$ is used by $q_i$ to help $q_j$ determine what value to return from its $b$'th \id{Block-Update}.

The set of arrays $\mathsf{L}_{i,j}$, for all $j \neq i$, may be viewed as an additional field, $\mathsf{L}$, of $q_i$'s component $\mathsf{H}[i]$. To write value $v$ to $\mathsf{L}_{i,j}[b]$, $q_i$ updates $\mathsf{H}[i]$, appending $(j,b,v)$ to the $\mathsf{L}$ field. If a process $q_j$, $j \neq i$, wishes to read the value of $\mathsf{L}_{i,j}[b]$, then it scans $\mathsf{H}$ and checks if a triple $(j,b,v)$ is present in the $\mathsf{L}$ field of $\mathsf{H}[i]$. If not, it considers the value of $\mathsf{L}_{i,j}[b]$ to be $\bot$. Otherwise, $q_j$ finds the last such triple $(j,b,v)$ and it considers the value of $\mathsf{L}_{i,j}[b]$ to be $v$. 

Observe that, given this representation of $\mathsf{L}_{i,j}$, it is possible for $q_i$ to perform a sequence of writes to $\mathsf{L}_{i,j}$, for $j \neq i$, by performing a single update to $\mathsf{H}$. Similarly, it can read the arrays $\mathsf{L}_{j,i}$, for all $j \neq i$, by performing a single scan on $\mathsf{H}$.

\textbf{Notation.}
We use upper case letters to denote instances
of \id{scan} and \id{update} on $\mathsf{H}$, instances of \id{read} and \id{write} on single-writer registers,
and instances of \id{Scan}, \id{Block-Update} and \id{Update} on $\mathcal{M}$.
The corresponding lower case letter denotes the result of a \id{scan}, \id{read}, or
\id{Block-Update} operation.
For example, $h$ denotes the result of a \id{scan} $H$.
We use $h_i$ to denote the value of the $i$'th component of $h$ and
$\# h_i$ to denote the number of \id{Block-Update} operations $q_i$ has performed on $\mathcal{M}$, which is exactly the number of different timestamps associated with the triples recorded in $h_i$. The only shared variables are $\mathsf{H}$ and $\mathsf{L}_{i,j}$, the rest are local variables.

\textbf{Auxiliary Procedures.}
A \emph{timestamp} is a label from a partially ordered set, which can be associated with an operation, such that, if one operation completes before another operation begins,
the first operation has a smaller timestamp~\cite{Lam78}.
We use a variant of vector timestamps~\cite{Fid91, Mat89,  AW04book}:  Each
\emph{timestamp} is an $f$-component vector of non-negative integers, with one component per process. Timestamps are ordered lexicographically. 
We use $t' \succ t$ to denote that timestamp $t'$ is lexicographically 
larger than timestamp $t$ and $t' \succeq t$ to denote that 
$t'$ is lexicographically at least is large as $t$.

Let $h$ be the result of a  \id{scan}.
Process $q_i$ \emph{generates a new timestamp} 
$t = (t_1,\dots,t_f)$ from $h$ using the locally computed function $\textsc{New-Timestamp}(h)$.
It sets $t_j$ to $\# h_j$ for all $j \neq i$ and sets $t_i$ to $\# h_i + 1$.

\begin{algorithm} [H]
	\begin{algorithmic} [1]
\Procedure{New-Timestamp}{$h$} 
		\For{$j \in \{1,\dots,f\}$} 
		\State $t_j \gets \# h_j$
		\EndFor
		\State \Return $(t_1,\dots,t_{i-1},t_i+1,t_{i+1},\dots,t_f)$	
		\EndProcedure
	\end{algorithmic}
	\caption{Generating a new timestamp from the result $h$ of an $\mathsf{H}.\id{scan}$ for process $q_i$.}
	\label{alg:newtimestamp}
\end{algorithm}

For each 
$j \in \{1,\ldots,m\}$,
let $v_j$ be the value with the lexicographically largest associated timestamp 
among all update triples $(j, v, t)$ in all components of $h$, or $\bot$ if no such triple exists.
The \emph{view} of $h$, denoted $\id{view}(h)$, is the vector $(v_1,\dots,v_m)$. 
It is obtained using the locally computed function $\textsc{Get-View}(h)$.

\begin{algorithm} [H]
	\begin{algorithmic}[1]
		\Procedure{Get-View}{$h$}
		\For{$j \in \{1,\dots,m\}$}
		\If{there is an update triple in $h$ with first component $j$}
		\State $t \gets \max\{t' : (j,v',t')$ is a triple in $h\}$
		\State let $(j,v,t)$ be the \emph{unique} triple in $h$ with component $j$ and timestamp $t$
		\State $v_j \gets v$
		\Else
		\State $v_i \gets \bot$
		\EndIf
		\EndFor
		\State \Return $(v_1,\ldots,v_m)$
		\EndProcedure
	\end{algorithmic}
	\caption{Generating a view of the object given the result $h$ of an $\mathsf{H}.\id{scan}$ for process $q_i$.}
	\label{alg:get-view}
\end{algorithm}

\textbf{Main Procedures.}
To perform a $\id{Scan}()$ of $\mathcal{M}$, process $q_i$ repeatedly performs 
\id{scans} of $\mathsf{H}$
until two consecutive results are the same. Then $q_i$
returns  the \id{view} of its last \id{scan}. 
Notice that \id{Scan} is not necessarily wait-free.
However, it can only be blocked by an infinite sequence of \id{Block-Update} operations
that modify $\mathsf{H}$ between every two \id{scan} operations performed by the \id{Scan}.
To help other processes determine what
to return from a \id{Block-Update},
$q_i$ records the result, $h$, of each  \id{scan}
in register $\mathsf{L}_{i,j}[\# h_j]$, for all $j \neq i$.

\begin{algorithm} [!ht]
	\begin{algorithmic} [1]
		\Procedure{Scan}{}
		\State $h' \gets \mathsf{H}.\id{scan}()$
		\Repeat
		\State $h \gets h'$
		\For {$j \in \{1,\dots,f\} - \{i\}$} \label{line:xloop0}
		\State $\mathsf{L}_{i,j}[\#h_j].\id{write}(h)$\label{line:xwriteL}
		\EndFor
		\State $h' \gets \mathsf{H}.\id{scan}()$\label{line:xlastscan}
		\Until{$h = h'$ \label{line:xscancond}}
		\State \Return $\Call{Get-view}{h}$
		\EndProcedure 
	\end{algorithmic}
	\caption{Implementation of $\id{Scan}$ for process $q_i$.}
	\label{alg:augscan}
\end{algorithm}

To perform a $\id{Block-Update}([j_1,\ldots,j_c],[v_1,\ldots,v_c])$ of $\mathcal{M}$, 
$q_i$ first performs a scan $H$ of $\mathsf{H}$. 
Then  it generates a timestamp, $t$, from the result, $h$, of $H$
and appends the triples $(j_1,v_1,t)$, $\ldots$, $(j_c,v_c,t)$ to $\mathsf{H}[i]$
via an \id{update}.
This associates the same timestamp, $t$, with the \id{Block-Update} and
each of the \id{Update} operations comprising it.
Next, process $q_i$  helps processes with lower identifiers by performing another \id{scan} $G$ of $\mathsf{H}$
and recording its result, $g$, in $L_{i,j}[\# g_j]$ for all $j < i$.
Then $q_i$ performs a third \id{scan} to check whether any process 
  with a higher identifier has performed an $\id{update}$ after $H$.
If so, $q_i$ returns \peace{}.
This is the only way in which a \id{Block-Update} can return \peace{}.
Consequently, 
all \id{Block-Update} operations performed by $q_1$
  are atomic.

If $q_i$ does not return \peace{}, it
reads $\mathsf{L}_{j,i}[b]$  for all $j \neq i$,
where $b$ is the number of \id{Block-Update} operations that $q_i$ 
had previously performed.
It determines which among these and $h$
is the result of the latest \id{scan} and returns its view.
The mechanism for determining the latest \id{scan} is described next.

\begin{algorithm} [!ht]
	\begin{algorithmic} [1]
		\Procedure{Block-Update}{$[j_1,\dots,j_r],[v_1,\dots,v_r]$}
		\State $h \gets \mathsf{H}.\id{scan}()$\label{line:xh}
		\State $t \gets \Call{New-timestamp}{h}$\label{line:xts}
		\State $\mathsf{H}.\id{update}_i(h_i \oplus [(j_1,v_1,t), \ldots, (j_r, v_r, t)])$ \label{line:xupdate}
		\State $g \gets \mathsf{H}.\id{scan}()$\label{line:xhelp}
		\For {$j \in \{1,\dots,i-1\}$} \label{line:xloop1}
		\State $\mathsf{L}_{i,j}[\#g_j].\id{write}(g)$\label{line:xlow}
		\EndFor 
		\State $h' \gets \mathsf{H}.\id{scan}()$\label{line:xhh}
		\If {$h'$ contains new $\id{Block-Update}$ \label{line:xheqhh}}
		\State \Return $\peace$
		\EndIf
		\State $\mathit{last} \gets h$ \label{line:xlastinit}
		\For{$j \in \{1,\dots,f\} - \{i\}$} \label{line:xloop2}
		\State $r[j] \gets \mathsf{L}_{j,i}[\# h_i].\id{read}()$\label{line:xreadLj}
		\If{$r[j] \neq \bot$ {\rm and} $\mathit{last}$ {\rm is a proper prefix of} $r[j]$\label{line:xlastmax}}
		\State $\mathit{last} \gets r[j]$\label{line:xlastmax2}
		\EndIf
		\EndFor 
		\State \Return $\Call{Get-View}{\mathit{last}}$\label{line:xbview}
		\EndProcedure
	\end{algorithmic}
	\caption{Implementation of $\id{Block-Update}$ for process $q_i$.}
	\label{alg:augblock}
\end{algorithm}

If $h$ and $h'$ are the results of two scans of $\mathsf{H}$
and, if $h_j$ is a prefix of $h_j'$ for all $j \in \{1,\ldots,f\}$, then
we say that $h$ is a \emph{prefix} of $h'$. In addition, if $h_j \neq h'_j$
for some $j \in \{1,\ldots,f\}$, we say that $h$ is a \emph{proper} prefix of $h'$.
Since each \id{update} to the single-writer snapshot 
$\mathsf{H}$ appends one or more update triples to a component,
the following is true.

\begin{observation}
\label{obs:prefix-obs}
Let $H$ and $H'$ be {\em scans} of $\mathsf{H}$ with results $h$ and $h'$, respectively. If $H$ occurred before $H'$, then $h$ is a prefix of $h'$. Conversely, if $h$ is a proper prefix of $h'$,
then $H$ occurred before $H'$.
\end{observation}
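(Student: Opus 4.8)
The plan is to base everything on two facts. First, $\mathsf{H}$ is a linearizable single-writer snapshot object, so we may regard each \id{scan} and \id{update} on $\mathsf{H}$ as taking effect atomically at its linearization point, and "$H$ occurred before $H'$" just means that the atomic event of $H$ precedes that of $H'$ in this order. Second, every \id{update} to a component $\mathsf{H}[j]$ — and only $q_j$ ever performs one — appends one or more triples to $\mathsf{H}[j]$ and removes or reorders nothing. Hence the content of $\mathsf{H}[j]$ evolves monotonically in the prefix order over the execution: if $T$ is a point of the execution no later than a point $T'$, then the content of $\mathsf{H}[j]$ at $T$ is a prefix of its content at $T'$. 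Moreover, the result of a \id{scan} $H$ reflects exactly one time point, namely its linearization point: for each $j$, $h_j$ is the content of $\mathsf{H}[j]$ at that point.

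For the forward direction, suppose $H$ occurred before $H'$, and fix $j \in \{1,\dots,f\}$. Then $h_j$ is the content of $\mathsf{H}[j]$ at the linearization point of $H$ and $h'_j$ is its content at the linearization point of $H'$; since the former point is no later than the latter, the monotonicity property gives that $h_j$ is a prefix of $h'_j$. As this holds for every component, $h$ is a prefix of $h'$ by definition.

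For the converse, suppose $h$ is a proper prefix of $h'$ and, for contradiction, that $H$ did not occur before $H'$. Since linearization points are totally ordered, either $H$ and $H'$ are the same operation, or $H'$ occurred before $H$. In the first case $h = h'$, contradicting that the prefix is proper. In the second case, the forward direction applied with the roles of $H$ and $H'$ interchanged shows that $h'$ is a prefix of $h$; combined with $h$ being a prefix of $h'$ (which holds since it is a proper prefix), antisymmetry of the prefix order forces $h = h'$, again a contradiction. Hence $H$ occurred before $H'$.

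I expect no real difficulty here: the only step deserving care is the per-component monotonicity claim for $\mathsf{H}$, and that is immediate from the single-writer, append-only nature of updates to $\mathsf{H}[j]$ together with the fact that a \id{scan} returns a snapshot taken at a single instant. Everything else is routine bookkeeping with the prefix partial order and the total order on linearization points.
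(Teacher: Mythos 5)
Your proof is correct and rests on exactly the fact the paper invokes (the paper states the observation without a formal proof, justifying it only by the remark that each \id{update} to the single-writer snapshot $\mathsf{H}$ appends triples and never removes or reorders them). You have simply spelled out the per-component monotonicity and the standard antisymmetry argument for the converse, which is the same approach.
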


Thus, by~\observationref{obs:prefix-obs}, for any set of \id{scans},
the result of the earliest of these \id{scans}
is a prefix of the result of every other \id{scan} in the set.

The next lemma shows that our implementation of $\id{Block-Update}$ is wait-free while our implementation of $\id{Scan}$ is non-blocking. 

\begin{lemma} [Step Complexity]
	\label{lem:impldetails}
	Each $\id{Block-Update}$ operation consists of 6 steps. If $k$ is the number of different updates by other processes (which append update triples) that are concurrent with an $\id{Scan}$ operation, then it completes after at most $2k+3$ steps.
\end{lemma}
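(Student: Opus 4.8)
The plan is to count, for each of the two procedures, the number of operations it performs on the shared single-writer snapshot $\mathsf{H}$; every other instruction is local computation and contributes no step, and accesses to the arrays $\mathsf{L}_{i,j}$ are not counted separately because, as observed in the construction, $q_i$ can write an arbitrary batch of entries to $\mathsf{L}_{i,\cdot}$ with a single \id{update} of $\mathsf{H}$ and can read the whole family $\mathsf{L}_{\cdot,i}$ with a single \id{scan} of $\mathsf{H}$. With this in hand the \id{Block-Update} bound is immediate by walking down \algorithmref{alg:augblock}: the accesses to $\mathsf{H}$ are the \id{scan} on \lineref{line:xh}, the \id{update} on \lineref{line:xupdate}, the \id{scan} on \lineref{line:xhelp}, the batch of writes to $\mathsf{L}_{i,j}$ for $j<i$ on \lineref{line:xlow} (one \id{update}), the \id{scan} on \lineref{line:xhh}, and the batch of reads of $\mathsf{L}_{j,i}$ for $j\ne i$ on \lineref{line:xreadLj} (one \id{scan}) --- at most four \id{scans} and two \id{updates}, i.e.\ $6$ steps, so \id{Block-Update} is wait-free. (If it returns $\peace$ on \lineref{line:xheqhh} the last \id{scan} is skipped, and for $q_1$ the loop on \lineref{line:xloop1} is empty, so in those cases the count is $5$; $6$ is in any event an upper bound.)

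For \id{Scan} (\algorithmref{alg:augscan}), the accesses to $\mathsf{H}$ are the initial \id{scan}, and then, in each iteration of the main loop, the batch of writes to $\mathsf{L}_{i,j}$ for $j\ne i$ (one \id{update}) followed by the \id{scan} on \lineref{line:xlastscan}; so if the loop runs for $T$ iterations the total is $1+2T$ steps, and it remains to prove $T\le k+1$. The key --- and only non-routine --- step is to show that the loop continues from an iteration to the next only when some \id{update} appending an update triple is performed by a process other than $q_i$ strictly between the two \id{scans} of that iteration, namely the ones whose results are compared as $h$ and $h'$: the loop continues exactly when $h\ne h'$, and since by \observationref{obs:prefix-obs} the earlier of any two \id{scans} of $\mathsf{H}$ is a prefix of the later, while $q_i$'s own step in between (a write to $\mathsf{L}$) appends no update triple, the discrepancy must come from such a concurrent \id{update}. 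All of $q_i$'s \id{scans} during this \id{Scan} are sequential and nested under the prefix relation, so the time windows lying between consecutive ones are pairwise disjoint; as each non-final iteration's discrepancy is witnessed by at least one concurrent \id{update} falling in its window, distinct non-final iterations are witnessed by distinct concurrent updates. There are only $k$ such concurrent updates (each lies between two internal \id{scans} of the \id{Scan}, hence within its execution interval), so there are at most $k$ non-final iterations, giving $T\le k+1$ and a step count of at most $2k+3$. The same observation shows that a \id{Scan} can be prevented from terminating only by an infinite stream of concurrent update-triple-appending operations, so it is non-blocking.

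I expect the only real obstacle to be the bookkeeping in the \id{Scan} analysis: making precise the claim that each concurrent update is ``charged'' to at most one iteration --- which relies on the monotone prefix structure of the successive \id{scans} --- and checking that $q_i$'s own auxiliary writes to $\mathsf{L}$ can never by themselves keep the loop spinning, precisely because those writes append to the $\mathsf{L}$ field rather than an update triple.
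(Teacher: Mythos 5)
Your proof is correct and follows the same approach as the paper: for \id{Block-Update}, walk the code and count accesses to $\mathsf{H}$ with the $\mathsf{L}$-field reads/writes batched into a single \id{scan}/\id{update}; for \id{Scan}, observe that each iteration contributes one \id{update} and one \id{scan} after the initial \id{scan}, and charge each unsuccessful iteration to a distinct concurrent update by another process (via the prefix structure of \observationref{obs:prefix-obs}). If anything, you are slightly more careful than the paper's own proof --- you explicitly note that $q_i$'s own $\mathsf{L}$-field \id{write} cannot by itself cause $h \neq h'$ (so it doesn't keep the loop spinning), and you explicitly justify the ``distinct witnesses'' charging via the disjointness of the windows between consecutive \id{scans}, both of which the paper leaves implicit.
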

\begin{proof}
	
	The writes in the loop on lines~\ref{line:xloop1}--\ref{line:xlow} in the pseudocode for an $\id{Block-Update}$ may be simultaneously performed by a single update. Similarly, the reads in the loop on lines~\ref{line:xloop2}--\ref{line:xlastmax2} may be simultaneously performed with a single scan. Thus, each $\id{Block-Update}$ operation consists of 6 operations on the single-writer snapshot $\mathsf{H}$.
	
	An $\id{Scan}$ operation begins with a scan of $\mathsf{H}$. The writes in the loop on lines~\ref{line:xloop0}--\ref{line:xwriteL} may be simultaneously performed by a single update. Hence, each iteration of the loop performs two steps: an update and a scan. Each unsuccessful iteration of the loop is caused by a different update by another process  that occurs between the scan in that iteration and the scan in the previous iteration. In addition,
	there is one successful iteration of the loop.
	Hence, the $\id{Scan}$ performs at most $2k+3$ steps.
\end{proof}

\subsection{Proof of Correctness}
\label{sec:augsnap}

In this section, we prove that our implementation is correct. We begin by describing the linearization points of our operations.

\textbf{Linearization Points.} 
A complete \id{Scan} operation is linearized at its last \id{scan} of $\mathsf{H}$,
performed on~\lineref{line:xlastscan}.
Now consider a  \id{Block-Update}, 
with associated timestamp $t$, that updates components $j_1,\ldots,j_r$.
For $1 \leq g \leq r$, the \id{Update} to component $j_g$
is linearized at the first point in the execution at which
$\mathsf{H}$ contains a triple beginning with
$j_g$ and ending with a timestamp $t' \succeq t$. 
If multiple \id{Update} operations are linearized at the same point, then they are
ordered by their associated timestamps (from earliest to latest)
and then in increasing order of the components they update.

Each \id{Update}  of a \id{Block-Update}, $B$, performed without step contention is linearized
at $B$'s  \id{update}  to $\mathsf{H}$ on~\lineref{line:xupdate}. 
However, it is not possible to do this for all \id{Block-Update} operations;
otherwise, we would be implementing
a linearizable, non-blocking augmented snapshot, which, as discussed
earlier, is impossible.
In our linearization, 
if an \id{Update}, $U$, that is part of $B$ updates a component which is also
updated by an \id{Update}, $U'$,  that is part of a concurrent
\id{Block-Update} by a process with a lower identifier, then $U$ may be linearized before $U'$.

We now prove 
a useful property of our helping mechanism.
\begin{lemma}
	\label{lem:xhelp}
	Let $Y$ be a \id{write} by process $q_j$, where 
	it writes $h$ to $\mathsf{L}_{j,i}[\# h_i]$.
	Let $B$ be a \id{Block-Update} in which process $q_i$ \id{read}s $r[j]$ 
	from $\mathsf{L}_{j,i}[\# h_i]$ on~\lineref{line:xreadLj} after $Y$.
	Let $\ell$ be the value of $\mathit{last}$ when $B$ returns on~\lineref{line:xbview}.
	Then, $h$ is prefix of $\ell$.
\end{lemma}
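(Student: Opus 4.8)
The plan is to prove that $h$ is a prefix of $\ell$ by transitivity of the prefix relation, via two facts: (i) $h$ is a prefix of $r[j]$, the value $q_i$ reads on~\lineref{line:xreadLj} of $B$, and (ii) $r[j]$ is a prefix of $\ell$. Write $h_B$ for the result of the \id{scan} of $\mathsf{H}$ that $q_i$ performs on~\lineref{line:xh} of $B$; since the register read on~\lineref{line:xreadLj} is $\mathsf{L}_{j,i}[\#(h_B)_i]$ and the hypothesis names this register $\mathsf{L}_{j,i}[\#h_i]$, we have $b := \#(h_B)_i = \#h_i$.

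\emph{Fact (ii).} The variable $\mathit{last}$ is initialised to $h_B$ on~\lineref{line:xlastinit}, and every later assignment to it (on~\lineref{line:xlastmax2}) sets it to some $r[j'] \ne \bot$. Each such $r[j']$, as well as $h_B$ and $h$, is the result of a \id{scan} of $\mathsf{H}$ (the registers $\mathsf{L}_{j',i}[\cdot]$ only ever hold \id{scan} results, none of which is the sentinel $\bot$). Hence, by~\observationref{obs:prefix-obs}, the finite set $S = \{h_B\}\cup\{\,r[j'] : j'\ne i,\ r[j']\ne\bot\,\}$ is totally ordered by the prefix relation, and so has a prefix-maximum. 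A straightforward invariant for the loop on lines~\ref{line:xloop2}--\ref{line:xlastmax2} shows that $\mathit{last}$ always equals the prefix-maximum of $\{h_B\}$ together with the already-processed $r[j']$'s: if $r[j']$ strictly extends the current $\mathit{last}$ it becomes the new maximum, and otherwise total ordering makes $r[j']$ a prefix of $\mathit{last}$. Thus $\ell$ equals the prefix-maximum of $S$; since $r[j]\in S$ (it is $\ne\bot$, because the read occurs after $Y$), $r[j]$ is a prefix of $\ell$.

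\emph{Fact (i).} I will show that, listed in $q_j$'s program order, the values $q_j$ writes to $\mathsf{L}_{j,i}[b]$ are nondecreasing under the prefix relation. Process $q_j$ writes to $\mathsf{L}_{j,i}[b]$ only on~\lineref{line:xwriteL} of a \id{Scan} (when the current \id{scan} result $h'$ has $\#h'_i = b$) or on~\lineref{line:xlow} of a \id{Block-Update} (when $i<j$ and the result $g$ of the \id{scan} on~\lineref{line:xhelp} has $\#g_i = b$); in either case the value written is the result of a \id{scan} of $\mathsf{H}$ by $q_j$ performed earlier in $q_j$'s execution. Moreover the \id{scan}s used by successive writes to $\mathsf{L}_{j,i}[b]$ occur in program order: across distinct operations of $q_j$ this is clear; inside one \id{Block-Update} there is at most one write to $\mathsf{L}_{j,i}[b]$; and inside one \id{Scan} each iteration uses the \id{scan} performed at the end of the previous iteration (or the \id{scan} preceding the loop), and since $\#h'_i$ is nondecreasing across iterations the iterations writing to index $b$ are consecutive. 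By~\observationref{obs:prefix-obs}, these \id{scan} results, hence the written values, are prefix-ordered. Since $\mathsf{L}_{j,i}[b]$ is written only by $q_j$ and $q_i$ reads $r[j]$ from it after $Y$, the value read is that of the last write by $q_j$ to $\mathsf{L}_{j,i}[b]$ before the read, which is $Y$ or a later write; by the monotonicity just established, $h$ is a prefix of $r[j]$. Combining (i) and (ii) gives the claim.

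The main obstacle is the bookkeeping for Fact (i): one must locate every program point at which $q_j$ can write to $\mathsf{L}_{j,i}[b]$, confirm each such write deposits an earlier \id{scan} result, and verify that these \id{scan}s increase along $q_j$'s program order — in particular handling multiple writes within a single \id{Scan} and the monotonicity of the index $\#h'_i$ there. Fact (ii) is routine once the total-ordering of \id{scan} results via~\observationref{obs:prefix-obs} is noted.
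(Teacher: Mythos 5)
Your proof is correct and follows the same decomposition as the paper's: first show $r[j]$ is a prefix of $\ell$ via \observationref{obs:prefix-obs} and the update rule on lines~\ref{line:xlastmax}--\ref{line:xlastmax2}, then show $h$ is a prefix of $r[j]$ by noting that any write to $\mathsf{L}_{j,i}[\#h_i]$ after $Y$ deposits the result of a later \id{scan} by $q_j$. The paper treats this second step as immediate, while you spell out the monotonicity bookkeeping (all program points where $q_j$ can write to that cell, the nondecreasing index $\#h'_i$ inside a \id{Scan}); that extra detail is sound but not a different argument.
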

\begin{proof}
	By~\observationref{obs:prefix-obs},~\lineref{line:xlastmax} and~\lineref{line:xlastmax2}, 
	$r[j]$ is a prefix of $\ell$.
	Hence, it suffices to show that $h$ is a prefix of $r[j]$.
	Suppose $r[j] \neq h$. Then $r[j]$ was written by $q_j$
	after $Y$ and it was the result of a \id{scan} by $q_j$
	that occurs after the \id{scan} by $q_j$ that returns $h$.
	It follows by~\observationref{obs:prefix-obs} that $h$ is a prefix of $r[j]$.
\end{proof}
Since $q_i$ only appends new triples on~\lineref{line:xupdate}, we also have the following.
\begin{observation}
	\label{obs:xbeforex}
	Let $X$ be the first \id{update} performed on~\lineref{line:xupdate} by process $q_i$
	after some \id{scan} $H$ of $\mathsf{H}$ with result $h$.
	Let $G$ be any other \id{scan} of $\mathsf{H}$ before $X$ with result $g$.
	Then, $\# g_i \leq \# h_i$.
\end{observation}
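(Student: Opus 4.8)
The plan is a short case analysis on the order in which the two \id{scan}s $H$ and $G$ of the (linearizable) snapshot object $\mathsf{H}$ occur; since they are totally ordered, \observationref{obs:prefix-obs} guarantees that one of $h,g$ is a prefix of the other, and I would treat the two orderings separately.

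First, suppose $G$ occurs before $H$. Then $g$ is a prefix of $h$ by \observationref{obs:prefix-obs}, so $g_i$ is a prefix of the list $h_i$. Hence the update triples recorded in $g_i$ form a prefix of those recorded in $h_i$, so the set of timestamps associated with the triples of $g_i$ is contained in the set associated with the triples of $h_i$, and therefore $\# g_i \le \# h_i$.

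Otherwise $H$ occurs before $G$, so $h_i$ is a prefix of $g_i$. The key claim here is that $q_i$ performs no \id{update} on \lineref{line:xupdate} between $H$ and $G$: by hypothesis $X$ is the first such \id{update} by $q_i$ after $H$, and $G$ precedes $X$. Now \lineref{line:xupdate} is the only line of \algorithmref{alg:augscan} or \algorithmref{alg:augblock} on which $q_i$ appends an update triple to $\mathsf{H}[i]$ — the writes to the $\mathsf{L}_{i,j}$ arrays on \lineref{line:xwriteL} and \lineref{line:xlow}, though realized as \id{update}s to $\mathsf{H}[i]$, append only triples of the form $(j,b,v)$ with $j \in \{1,\dots,f\}$, which are not update triples and carry no timestamp. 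Consequently no new update triple is appended to $\mathsf{H}[i]$ between $H$ and $G$, so $h_i$ and $g_i$ contain exactly the same update triples and hence the same set of associated timestamps; thus $\# g_i = \# h_i$. In both cases $\# g_i \le \# h_i$, as required.

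The argument is routine; the only place that needs care is the second case, where one must separate the update triples (which alone determine $\# h_i$) from the $\mathsf{L}$-field triples used by the helping mechanism — this is exactly the remark, made just before the observation, that $q_i$ appends new (update) triples only on \lineref{line:xupdate}.
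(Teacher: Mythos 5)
Your proof is correct and takes essentially the same approach as the paper, which justifies the observation with the one-line remark that $q_i$ only appends new triples on~\lineref{line:xupdate}. You expand this with the case split on the relative order of $G$ and $H$ and with the explicit separation of update triples (which carry timestamps and determine $\# h_i$) from the $\mathsf{L}$-field triples used for helping --- a distinction worth spelling out, since $q_i$ does append (non-update) triples to $\mathsf{H}[i]$ on~\lineref{line:xwriteL} and~\lineref{line:xlow}.
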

Our linearization rule for $\id{Update}$s implies the following observations.
\begin{observation}
	\label{obs:xupdlin}
	Let $U$ be an \id{Update} to component $j$ with an associated timestamp $t$ that is part of a \id{Block-Update}
	and let $X$ be any \id{update} to $\mathsf{H}$ that appends an update triple 
	with component $j$ and timestamp $t' \succeq t$ to $\mathsf{H}$.
	Then $U$ is linearized no later than $X$. 
\end{observation}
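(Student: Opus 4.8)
The plan is to unwind the linearization rule for \id{Update} operations and observe that it makes the claim essentially immediate. Recall that $U$, being the \id{Update} to component $j$ with associated timestamp $t$ of some \id{Block-Update}, is by definition linearized at the \emph{first} point in the execution at which $\mathsf{H}$ contains a triple whose first coordinate is $j$ and whose timestamp is $\succeq t$.

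First I would note that, since every \id{update} to the single-writer snapshot $\mathsf{H}$ only appends triples to a component and never removes them, the predicate ``$\mathsf{H}$ contains a triple with component $j$ and timestamp $\succeq t$'' is monotone along the execution; hence the ``first point'' referred to in the linearization rule is well defined. Next, consider the given \id{update} $X$: by hypothesis it appends a triple with component $j$ and timestamp $t' \succeq t$ to $\mathsf{H}$, so immediately after $X$ is performed this predicate holds. Therefore the first point at which the predicate holds occurs no later than the point immediately following $X$, and so $U$'s linearization point occurs no later than $X$, which is exactly the claim. (If the predicate already held before $X$, then $U$ is linearized even earlier, which only strengthens the conclusion.)

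The only thing to be a little careful about is the boundary case in which $U$'s linearization point coincides with $X$ itself, i.e.~$X$ is the very first \id{update} that makes the predicate true and several \id{Update} operations are linearized at that same point. Here the secondary tie-breaking rule (by associated timestamp, then by component) only reorders operations that are all linearized \emph{at} $X$, so $U$ is still linearized no later than $X$. I do not expect any real obstacle: the statement is a direct reading of the linearization rule for \id{Update}s together with the append-only nature of $\mathsf{H}$ (already recorded just before \observationref{obs:prefix-obs}).
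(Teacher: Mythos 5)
Your argument is correct and matches what the paper intends: \observationref{obs:xupdlin} is stated without proof precisely because it follows directly from the stated linearization rule for \id{Update}s together with the append-only nature of $\mathsf{H}$, which is exactly the reading you give. The tie-breaking remark is a sensible precaution but not actually needed, since ``no later than $X$'' already covers the case of coincidence.
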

\begin{observation}
	\label{obs:xtscontain}
	If a \id{scan} $H$ of $\mathsf{H}$ occurs after the linearization point of an \id{Update} $U$ 
	to component $j$ with associated timestamp $t$, then the result of $H$ contains 
	an update triple with component $j$ and timestamp at least as large as $t$.
\end{observation}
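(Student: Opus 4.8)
The plan is to read this straight off the linearization rule for \id{Update} operations, together with the fact that the contents of $\mathsf{H}$ only ever grow. By definition, the \id{Update} $U$ to component $j$ with associated timestamp $t$ is linearized at the first point $P$ in the execution at which $\mathsf{H}$ contains some triple of the form $(j,v',t')$ with $t' \succeq t$. So immediately after $P$, such a triple is present in $\mathsf{H}$.

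First I would note that the state of $\mathsf{H}$ changes only when an \id{update} to $\mathsf{H}$ (performed on~\lineref{line:xupdate} of some \id{Block-Update}) executes, and that every such \id{update} merely \emph{appends} triples to a single component of $\mathsf{H}$, never removing or overwriting anything. Hence $P$ coincides with the completion of some \id{update} step --- the first one after which $\mathsf{H}$ contains a triple $(j,v',t')$ with $t' \succeq t$ --- and this triple remains in $\mathsf{H}$ at every later point. Equivalently, phrasing it through~\observationref{obs:prefix-obs}: letting $h^{(P)}$ denote the contents of $\mathsf{H}$ just after $P$, the result of any \id{scan} of $\mathsf{H}$ performed after $P$ has $h^{(P)}$ as a prefix, and $h^{(P)}$ already contains the triple $(j,v',t')$.

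Since the \id{scan} $H$ occurs after the linearization point $P$ of $U$, its result therefore contains the triple $(j,v',t')$, which has component $j$ and timestamp $t' \succeq t$, as required. I expect this to be essentially immediate; the only point needing a moment's care is spelling out that ``occurs after the linearization point of $U$'' should be read as ``occurs after the \id{update} step corresponding to $P$'', so that the later \id{scan} genuinely observes the effect of that \id{update}. Beyond this monotonicity bookkeeping there is no real obstacle.
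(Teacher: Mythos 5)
Your argument is correct and matches the paper's intent: the paper gives no explicit proof, treating this as an immediate consequence of the linearization rule together with the monotonicity of $\mathsf{H}$ (Observation~\ref{obs:prefix-obs}), which is exactly what you spell out. The one subtlety you flag --- that the linearization point coincides with the \id{update} step that first places the witnessing triple in $\mathsf{H}$ --- is the right thing to note and is handled correctly.
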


We say that  the result, $h$, of a \id{scan} of $\mathsf{H}$ \emph{contains} a timestamp $t$,
if $h$ (or, more precisely, some component $h_i$ of $h$) contains an update triple with timestamp $t$. 
The corollary of the next lemma says that a timestamp generated from $h$
is lexicographically larger than any timestamp contained in $h$.

\begin{lemma}
	\label{lem:tsdom}
	For any timestamp $t$ contained in the result, $h$, of a \id{scan} $H$ of $\mathsf{H}$, 
	$\# h_j \geq t_j$, for all $1 \leq j \leq f$. 
\end{lemma}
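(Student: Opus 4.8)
The plan is to trace each timestamp back to the \id{Block-Update} that produced it and exploit the determinism of \textsc{New-Timestamp}. The crux is an auxiliary fact: \emph{for every process $q_k$ and every $r\ge 1$, if $q_k$ performs an $r$-th \id{Block-Update}, then the timestamp $t$ it generates on~\lineref{line:xts} satisfies $t_k=r$.} In particular, the timestamps generated by a single process are pairwise distinct, so $\#h_k$, the number of distinct timestamps appearing among the update triples in component $k$ of a \id{scan} result $h$, is exactly the number of \lineref{line:xupdate} \id{update}s by $q_k$ that are linearized before that \id{scan}.

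Granting this fact, I would deduce the lemma as follows. Suppose $t$ is contained in $h$; then $t$ is the timestamp $q_k$ generates in its $r$-th \id{Block-Update} for some $k,r$, and---since each \id{Block-Update} appends triples only with its own timestamp and, by the auxiliary fact, no other \id{Block-Update} generates $t$---the update triple carrying $t$ in $h_k$ was appended by the \lineref{line:xupdate} \id{update} $X$ of that \id{Block-Update}. Hence the \id{scan} $H$ producing $h$ is linearized after $X$, so, since $q_k$ is sequential and performs its \lineref{line:xupdate} \id{update}s in the order of its \id{Block-Update}s, after all of $q_k$'s first $r$ such \id{update}s; thus $h_k$ contains the $r$ distinct timestamps generated by $q_k$'s first $r$ \id{Block-Update}s, so $\#h_k\ge r=t_k$. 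For $j\ne k$, let $g$ be the result of the \lineref{line:xh} \id{scan} $G$ of that same \id{Block-Update}; by the code $t_j=\#g_j$, and since $G$ precedes $X$ and $X$ precedes $H$, \observationref{obs:prefix-obs} gives that $g$ is a prefix of $h$, whence $\#h_j\ge\#g_j=t_j$.

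To prove the auxiliary fact, I would induct on $r$. When $r=1$, $q_k$ has performed no \lineref{line:xupdate} \id{update} before the \lineref{line:xh} \id{scan} of its first \id{Block-Update}, so that \id{scan}'s result contains no update triple in component $k$ and \textsc{New-Timestamp} sets the $k$-th coordinate to $0+1=1$. For $r>1$, the \lineref{line:xh} \id{scan} of $q_k$'s $r$-th \id{Block-Update} is linearized after $q_k$'s $(r{-}1)$-st \id{Block-Update} has completed---in particular after its \lineref{line:xupdate} \id{update}---and before the $r$-th \lineref{line:xupdate} \id{update}; since $\mathsf{H}$ is atomic and only $q_k$ writes to $\mathsf{H}[k]$, the update triples in component $k$ of its result are exactly those appended by $q_k$'s first $r-1$ \lineref{line:xupdate} \id{update}s, whose timestamps are, by the induction hypothesis, $r-1$ distinct values; so \textsc{New-Timestamp} sets the $k$-th coordinate to $(r{-}1)+1=r$.

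I expect the only real work to be the bookkeeping in the inductive step: pinning down the happens-before relations among the \lineref{line:xh} \id{scan}s and \lineref{line:xupdate} \id{update}s of $q_k$'s successive \id{Block-Update}s (which uses only that each process runs sequentially, that the single-writer snapshot $\mathsf{H}$ is atomic, and~\observationref{obs:prefix-obs}) and then reading off exactly which update triples are visible in component $k$ at the relevant instant. An equivalent route, if one prefers not to name all the timestamps, is to argue directly that a timestamp $t=\textsc{New-Timestamp}(g)$ with $t_k=\#g_k+1$ cannot already occur in $g_k$---otherwise the earliest \id{Block-Update} of $q_k$ carrying timestamp $t$ would have a \lineref{line:xh} \id{scan} result $g''$ with $\#g''_k=\#g_k$ while $g''_k$ is a proper prefix of $g_k$ missing the timestamp $t$, a contradiction---and combine this with \observationref{obs:prefix-obs} exactly as above.
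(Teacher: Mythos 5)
Your proof is correct and follows essentially the same route as the paper's, just more explicitly. The paper's proof relies silently on the implicit claim embedded in the definition of $\#h_i$, namely that ``the number of \id{Block-Update} operations $q_i$ has performed on $\mathcal{M}$'' equals ``the number of different timestamps associated with the triples recorded in $h_i$,'' which requires that successive \id{Block-Update}s by $q_i$ carry distinct timestamps. Your auxiliary fact (that the $r$-th \id{Block-Update} by $q_k$ generates $t$ with $t_k=r$, proved by the straightforward induction on $r$ using sequentiality of $q_k$ and the single-writer property of $\mathsf{H}[k]$) is exactly what is needed to justify this identification. Once that is in place, your handling of the $k$-th coordinate (the \id{scan} $H$ is linearized after $q_k$'s $r$-th \lineref{line:xupdate} \id{update}, so at least $r$ distinct timestamps appear in $h_k$) mirrors the paper's ``$\#h_i \ge \#h'_i + 1 = t_i$'' step, and your treatment of coordinates $j\ne k$ via $g$ being a prefix of $h$ (Observation~\ref{obs:prefix-obs}) is verbatim the paper's argument with $H'$ and $h'$ renamed to $G$ and $g$. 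The only difference is presentational: you bring the invariant $t_k=r$ to the surface and induct on it, whereas the paper folds it into the definition of $\#h_i$.
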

\begin{proof}
	Suppose $t$ is generated from the result $h'$ of a scan $H'$ by some process $q_i$.
	Then $t_{i} = \# h'_{i} + 1$ and $t_j = \# h'_j$, for $j \neq i$.
	Since $q_i$ appends an update triple with timestamp $t$ to $\mathsf{H}[i]$
	before $t$ is contained in the result of a \id{scan}, 
	$\# h_i \geq \# h'_i +1 = t_i$ and
	$H$ occurs after $H'$.
	By \observationref{obs:prefix-obs}, $h'$ is a prefix of $h$.
	Hence, $\#h_j \geq \# h'_j = t_j$, for $j \neq i$.
\end{proof}
\begin{corollary}
	\label{cor:tsdom}
	Let $h$ be the result of a \id{scan} and let $t = \textsc{New-Timestamp}(h)$
	by any process.
	Then, for any timestamp $t'$ contained in $h$, $t' \prec t$.
\end{corollary}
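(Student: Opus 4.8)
The plan is to derive this immediately from Lemma~\ref{lem:tsdom} together with the definition of $\textsc{New-Timestamp}$. Fix a scan $H$ with result $h$, let $t = \textsc{New-Timestamp}(h)$ computed by some process $q_i$, and let $t'$ be any timestamp contained in $h$. By construction, $t_j = \# h_j$ for $j \neq i$ and $t_i = \# h_i + 1$. By Lemma~\ref{lem:tsdom} applied to $t'$ and $h$, we have $\# h_j \geq t'_j$ for all $1 \leq j \leq f$.

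From here I would compare $t'$ and $t$ lexicographically. For every coordinate $j \neq i$ we get $t'_j \leq \# h_j = t_j$, and for coordinate $i$ we get $t'_i \leq \# h_i < \# h_i + 1 = t_i$, so in fact $t'_i < t_i$ strictly. Now walk through the coordinates in the order used by the lexicographic comparison (i.e. $1, 2, \dots, f$). Either all coordinates before $i$ satisfy $t'_j = t_j$, in which case coordinate $i$ already gives $t'_i < t_i$ and hence $t' \prec t$; or there is a smallest coordinate $j < i$ with $t'_j \neq t_j$, and since $t'_j \leq t_j$ this forces $t'_j < t_j$, again giving $t' \prec t$. In either case $t' \prec t$, as required.

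There is essentially no obstacle here: the corollary is a routine unpacking of the lexicographic order given the coordinatewise bound from Lemma~\ref{lem:tsdom} and the $+1$ in the $i$-th coordinate of a freshly generated timestamp. The only point that warrants a sentence of care is making explicit that the strict inequality in coordinate $i$ is enough even though coordinates after $i$ are not controlled — this is exactly why lexicographic order (rather than the product order) is the right notion, and why the strict $<$ at position $i$ cannot be undone by later coordinates.
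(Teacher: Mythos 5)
Your proof is correct and fills in exactly the routine deduction the paper leaves implicit for this corollary: apply \lemmaref{lem:tsdom} coordinatewise, observe that $t$ strictly exceeds $t'$ at coordinate $i$ with no earlier coordinate where $t'$ exceeds $t$, and conclude $t' \prec t$ lexicographically. One small correction to your closing remark: the coordinates after $i$ \emph{are} controlled --- you yourself established $t'_j \leq \# h_j = t_j$ for all $j \neq i$ --- so the bound is actually a strict coordinatewise domination (hence even the product order would give $t' < t$); lexicographic order is used because that is what the timestamp mechanism requires elsewhere, not because later coordinates escape the bound.
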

Now we show that timestamps are unique.
\begin{lemma}
	\label{lem:tsunique}
	Any two triples appended to $\mathsf{H}$ that involve the same component
	of $\mathcal{M}$ are associated with a different timestamp.
\end{lemma}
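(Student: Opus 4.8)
The plan is to reduce the claim to the statement that two distinct \id{Block-Update} operations never generate the same timestamp, and then establish the latter by a short case analysis. First I would note that any two triples with the same component $j$ appended to $\mathsf{H}$ are appended by two \emph{distinct} \id{Block-Update} operations: a single \id{Block-Update} updates pairwise distinct components $j_1,\dots,j_r$, so it appends at most one triple whose first component is $j$. Since all triples appended by a given \id{Block-Update} carry the single timestamp it generates via \textsc{New-Timestamp}, it suffices to show that two distinct \id{Block-Update}s $B$ and $B'$ generate distinct timestamps.

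For $B$ and $B'$ performed by the \emph{same} process $q_i$, say with $B$ preceding $B'$, let $h$ and $h'$ be the results of their scans of $\mathsf{H}$ on line~\ref{line:xh}, and $t,t'$ the timestamps they generate. Since $B$'s \id{update} on line~\ref{line:xupdate} appends $B$'s triples — carrying one fresh timestamp, hence adding one to the \id{Block-Update} count of $\mathsf{H}[i]$ — and occurs between the two scans, we get $\#h'_i \ge \#h_i + 1$, so $t'_i = \#h'_i + 1 \ge \#h_i + 2 > \#h_i + 1 = t_i$ and hence $t \ne t'$.

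For $B$ performed by $q_i$ and $B'$ performed by $q_{i'}$ with $i \ne i'$, suppose for contradiction that both generate the same timestamp $t$, from the results $h$ and $h'$ of their scans on line~\ref{line:xh}. Since $q_i$ sets the $i$-th coordinate of its timestamp to $\#h_i + 1$ while $q_{i'}$ sets the $i$-th coordinate of its timestamp to $\#h'_i$, equality of the two timestamps forces $\#h'_i = \#h_i + 1$; symmetrically, $\#h_{i'} = \#h'_{i'} + 1$. From $\#h'_i > \#h_i$ and \observationref{obs:prefix-obs} it cannot be that the scan producing $h'$ precedes the one producing $h$ (that would make $h'$ a prefix of $h$, so $\#h'_i \le \#h_i$); as the two scans are totally ordered in the execution, the scan producing $h$ precedes the one producing $h'$, so $h$ is a prefix of $h'$ and in particular $\#h_{i'} \le \#h'_{i'}$ — contradicting $\#h_{i'} = \#h'_{i'} + 1$. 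Hence $t \ne t'$ in this case too, which proves the lemma.

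The step I expect to require the most care is the cross-process case, since $B$ and $B'$ may interleave arbitrarily and one cannot assume that $B$'s \id{update} precedes $B'$'s scan. The argument above avoids this by reasoning only about the \id{Block-Update} counts $\#h_i,\#h'_i,\#h_{i'},\#h'_{i'}$ of the scan results and invoking the monotonicity of $\mathsf{H}$ from \observationref{obs:prefix-obs}; alternatively, one could obtain this case from \corollaryref{cor:tsdom} after arguing that a late enough scan sees both triples, but the count-based argument is self-contained.
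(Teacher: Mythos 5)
Your proof is correct and takes essentially the same route as the paper: reduce to the claim that distinct \id{Block-Update}s generate distinct timestamps, then split into same-process and different-process cases, using the monotonicity of $\mathsf{H}$ (\observationref{obs:prefix-obs}) for the latter. The only minor difference is that in the same-process case you count the increment to $\#h_i$ directly rather than invoking \corollaryref{cor:tsdom}, but both arguments rest on the same underlying fact.
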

\begin{proof}
	We show that every \id{Block-Update} operation is associated 
	with a different timestamp.
	Since no \id{Block-Update} operation appends more than one triple 
	for any component of $\mathcal{M}$, the claim follows. 
	
	Suppose two processes $q_i \neq q_{j}$ generate timestamps $t$ and $t'$ from \id{scans} 
	$H$ and $H'$ of $\mathsf{H}$ that return $h$ and $h'$, respectively. 
	Then $t_i = \# h_i + 1$, $t_j = \# h_j$,  $t'_{j} = \# h'_{j} + 1$, and $t'_i = \# h'_i$. 
	If $t = t'$, then $\# h_i + 1 = \# h_i'$ and $\# h_j' + 1 = \# h_j$. 
	It follows that $\# h_i < \# h'_i$ and $\# h_j > \# h'_j$. 
	However, by~\observationref{obs:prefix-obs}, this is impossible. Therefore, $t \neq t'$.
	
	Now, consider two timestamps generated by the same process $q_i$.
	Since $q_i$ appends one or more updates triples with timestamp $t$ to $\mathsf{H}[i]$ 
	immediately after it generates $t$, the result of any subsequent scan by $q_i$ contains $t$.
	Thus, by~\corollaryref{cor:tsdom}, 
	any timestamp $t'$ generated by $q_i$ after $t$ is lexicograpically larger than $t$.
\end{proof}

Next, we show that, 
\id{Block-Update} operations that do not return \peace{} can be considered to 
take effect atomically at their update on~\lineref{line:xupdate}.
\begin{lemma}
	\label{lem:xbots}
	Let $B$ be a \id{Block-Update} operation performed by $q_i$ that does not return \peace{}.
	Let $H$ and $G$ be the \id{scan} operations on~\lineref{line:xh} and~\lineref{line:xhh} 
	in $B$, respectively.
	Then, no \id{update} may be performed by $q_j$ with $j < i$ between $H$ and $G$.  
\end{lemma}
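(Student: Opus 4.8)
The plan is a short proof by contradiction that rests entirely on the permanent visibility of \id{update}s to $\mathsf{H}$ (Observation~\ref{obs:prefix-obs}) together with the purpose of the test on~\lineref{line:xheqhh}.

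First I would assume, for contradiction, that some process $q_j$ with $j<i$ performs an \id{update} $X$ to $\mathsf{H}$ strictly between $H$ and $G$. Since $\mathsf{H}$ is single-writer, only $q_j$ ever writes $\mathsf{H}[j]$ (which, by our convention, also carries the arrays $\mathsf{L}_{j,\cdot}$), and, as already noted in the paragraph preceding Observation~\ref{obs:prefix-obs}, every \id{update} appends one or more triples to the component it modifies; so $\mathsf{H}[j]$ strictly grows when $X$ is applied. Writing $h$ and $h'$ for the results of $H$ and $G$, I would then invoke Observation~\ref{obs:prefix-obs} (using that $H$ precedes $G$) to conclude that $h_j$ is a \emph{proper} prefix of $h'_j$; i.e.\ $h'$ contains, in a component of index $j<i$, at least one triple absent from $h$, namely the one written by $X$. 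Finally I would note that the two \id{update}s $q_i$ itself performs within $B$, on~\lineref{line:xupdate} and~\lineref{line:xlow}, modify only $\mathsf{H}[i]$ and so cannot be responsible for this new content in a lower-indexed component; hence the test on~\lineref{line:xheqhh} necessarily succeeds and $B$ returns \peace{}, contradicting the hypothesis. This rules out $X$ and proves the lemma.

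The step I expect to require the most care is not any calculation but making precise what ``$h'$ contains new \id{Block-Update}'' on~\lineref{line:xheqhh} actually detects: the argument needs it to fire on \emph{any} growth of a component $\mathsf{H}[j]$, $j<i$, relative to $h$, and not to be spuriously triggered by the changes $q_i$ makes to $\mathsf{H}[i]$. If the implementation instead phrases the test in terms of timestamps (e.g.\ a triple in $h'$ whose timestamp exceeds the timestamp $t$ that $B$ generated on~\lineref{line:xts}), I would additionally appeal to Corollary~\ref{cor:tsdom}, which guarantees that $t$ is lexicographically above every timestamp already contained in $h$, so that the triple appended by $X$ is still flagged; the remainder of the argument is unchanged. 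Aside from pinning down this bookkeeping, the proof follows immediately from Observation~\ref{obs:prefix-obs}.
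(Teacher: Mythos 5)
Your proof is correct and follows the same route as the paper's: you use Observation~\ref{obs:prefix-obs} together with the fact that every \id{update} appends to its component, conclude that the hypothesized \id{update} by $q_j$ forces a strict growth of component $j$ between $H$ and the scan on~\lineref{line:xhh}, and observe that this triggers the test there, forcing a \peace{} return. The paper's version expresses the same fact as $\# g_j > \# h_j$ rather than in proper-prefix language, and your extra care about what the informal test on~\lineref{line:xheqhh} actually detects is warranted but resolves to the same reading the paper implicitly adopts (it must fire precisely on growth in components $j<i$, as the proof of Lemma~\ref{lem:minblock} also requires).
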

\begin{proof}
	Let $h$ be the result of $H$ and $g$ be the result of $G$.
	Suppose that $q_j$ performs an \id{update} after $H$ and before $G$.
	Since every \id{update} appends triples with a new timestamp to $\mathsf{H}[j]$,
	$\# g_j > \# h_j$ will hold on~\lineref{line:xheqhh} in $B$,
	and the \id{Block-Update} $B$ must return \peace{}.
\end{proof}
\begin{lemma}
	\label{lem:xblocklin}
	Let $B$ be a \id{Block-Update} operation by $q_i$ that does not return \peace{}
	and let $X$ be the \id{update} on~\lineref{line:xupdate} in $B$.
	Then, all \id{Update}s in $B$ are linearized at $X$, consecutively,
	in order of the components they update.
\end{lemma}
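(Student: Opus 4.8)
The plan is to show that all \id{Update}s comprising $B$ can legitimately be placed at the point $X$ (the \id{update} on line~\ref{line:xupdate}), consecutively, without violating the linearization rule for \id{Update}s stated earlier. Recall the rule: an \id{Update} to component $j_g$ with associated timestamp $t$ is linearized at the \emph{first} point at which $\mathsf{H}$ contains a triple $(j_g, \cdot, t')$ with $t' \succeq t$. So I need two things: (i) $X$ itself is such a point for every $g$, i.e.\ after $X$, $\mathsf{H}$ contains a triple $(j_g, v_g, t)$; and (ii) $X$ is the \emph{first} such point, i.e.\ no earlier \id{update} to $\mathsf{H}$ already placed a triple $(j_g, \cdot, t')$ with $t' \succeq t$ for any $g$.

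For (i): $X$ is precisely the \id{update} in which $q_i$ appends the triples $(j_1,v_1,t),\dots,(j_r,v_r,t)$ to $\mathsf{H}[i]$, so immediately after $X$ each required triple is present with timestamp exactly $t$. Hence $X$ is a valid linearization point for each of the $r$ \id{Update}s, and since they are all linearized at the same point $X$, the tie-breaking rule orders them by timestamp (all equal to $t$) and then by increasing component index — which gives exactly the claimed "consecutive, in order of the components they update" conclusion. So the real content is (ii): I must rule out any \id{update} to $\mathsf{H}$ strictly before $X$ that already contains a triple $(j_g, \cdot, t')$ with $t' \succeq t$.

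The hard part is argument (ii), and it is where I would invoke the helping/yield machinery. Let $H$ be the \id{scan} on line~\ref{line:xh} in $B$, with result $h$, from which $t = \textsc{New-Timestamp}(h)$ is generated. By \corollaryref{cor:tsdom}, every timestamp contained in $h$ is lexicographically strictly smaller than $t$; and by \observationref{obs:prefix-obs} (together with the fact that $\mathsf{H}$ only grows), any triple present in $\mathsf{H}$ at the time of $H$ stays, so those cannot be the offending $t' \succeq t$ triples. Thus an offending triple would have to be appended by some \id{update} occurring strictly between $H$ and $X$. Since $X$ is $q_i$'s own next \id{update} (and $q_i$ performs no \id{update} between $H$ and $X$ in the code), any such \id{update} is by some $q_j$ with $j \neq i$. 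If $j > i$: by the time $B$ reaches line~\ref{line:xhh} it would observe this new \id{Block-Update} and return $\peace$ on line~\ref{line:xheqhh} — contradicting the hypothesis that $B$ does not return $\peace$. If $j < i$: this is exactly the situation excluded by \lemmaref{lem:xbots}, which states that no \id{update} by a $q_j$ with $j < i$ occurs between $H$ and $G$ (the scan on line~\ref{line:xhh}), and $X$ lies between $H$ and $G$. Either way we reach a contradiction, so no offending triple exists, establishing (ii).

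One gap to close carefully: I should also confirm that the timestamp $t'$ of a competing triple being merely $\succeq t$ (not just equal) is handled — but this is automatic, since any triple appended after $H$ with timestamp $t' \succeq t$ arises from some $q_j$'s \id{update}, and the case analysis above does not use $t' = t$, only that such an \id{update} occurs between $H$ and $X$. Also, I should note that \lemmaref{lem:tsunique} guarantees the triple $(j_g, v_g, t)$ appended at $X$ is the unique one with that component and timestamp, so \textsc{Get-View} and the linearization are well defined; but this is ancillary to the statement at hand. With (i) and (ii) in place, the lemma follows directly.
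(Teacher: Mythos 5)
Your overall decomposition — (i) show $X$ is a valid linearization point for each \id{Update} in $B$, (ii) show no earlier \id{update} to $\mathsf{H}$ already placed an offending triple with timestamp $\succeq t$ — matches the paper's structure, and your reduction in (ii) to ``the offending \id{update} must occur strictly between $H$ and $X$ and must be by some $q_j$ with $j \neq i$'' is also correct. However, your case $j > i$ has a genuine gap. You claim that if $q_j$ with $j > i$ appends between $H$ and $X$, then $B$ observes it on~\lineref{line:xhh} and returns $\peace$. That is not what the algorithm does: as is needed for \lemmaref{lem:minblock} and \theoremref{thm:fulnb} (and for the claim that $q_1$'s \id{Block-Update}s are always atomic), a \id{Block-Update} by $q_i$ returns $\peace$ \emph{only} when some $q_j$ with $j < i$ interferes. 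Appends by higher-indexed processes between $H$ and~\lineref{line:xhh} do \emph{not} cause $B$ to return $\peace$; indeed, the helping arrays $\mathsf{L}_{j,i}$ for $j > i$ exist precisely because such interference is tolerated rather than aborted. (The sentence in the prose of the implementation section saying ``higher identifier'' appears to be a typo; the pseudocode together with \lemmaref{lem:xbots}, \lemmaref{lem:minblock}, \theoremref{thm:fulnb}, and the termination argument in \lemmaref{lem:numblocks} all require ``lower identifier.'')

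The case $j > i$ is genuinely possible and needs the more delicate argument the paper uses. Suppose the offending \id{update} $X'$ appends a triple with timestamp $s \succ t$, and $X'$ lies between $H$ and $X$, belonging to a \id{Block-Update} $B'$ whose initial \id{scan} $H'$ returned $h'$. Since $H'$ occurs before $X'$ and hence before $X$, and $q_i$ appends nothing to $\mathsf{H}[i]$ between $H$ and $X$, we get $s_i = \# h'_i \leq \# h_i < t_i$. Then $s \succ t$ forces the first differing coordinate to be some $k < i$ with $s_k > t_k$; and since $t_k = \# h_k$ and (for $k$ not the invoker of $B'$) $s_k = \# h'_k$, this means $q_k$ appended to $\mathsf{H}[k]$ between $H$ and $H'$ — i.e.\ between $H$ and $G$, since $H'$ precedes $X'$, which precedes $X$, which precedes $G$. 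This is the contradiction with \lemmaref{lem:xbots}. (If $q_k$ itself invoked $B'$ then $X'$ is directly a $q_k$-update between $H$ and $G$, again contradicting \lemmaref{lem:xbots}.) In short: the contradiction always bottoms out at a \emph{lower}-indexed process interfering, whatever the index of the process that actually performed $X'$, and one must trace the timestamp vector to find that lower-indexed process rather than splitting on the index of $X'$'s owner. Your part~(i) and the tie-breaking conclusion are fine.
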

\begin{proof}
	Let $H$ and $G$ be the \id{scan} operations on~\lineref{line:xh} and~\lineref{line:xhh} in $B$, 
	respectively, and let $h$ be the result of $H$.
	Consider the timestamp $t = \textsc{New-Timestamp}(h)$ associated with $B$.  
	Suppose some \id{update} to $\mathsf{H}$ before $H$ appends a triple with 
	a timestamp $s$.
	Then, $h$ contains this triple with timestamp $s$ and,
	by~\corollaryref{cor:tsdom}, $t \succ s$.
	
	Consider any \id{update} $X'$ that has appended an update triple with timestamp $s \succ t$.
	If $X'$ occurs before $X$, then $X'$ occurs between $H$ and $X$.
	Let $B'$ be the \id{Block-Update} that contains $X'$,
	let $H'$ be the \id{scan} of $\mathsf{H}$ in $B'$ 
	on~\lineref{line:xh} from which $s$ is generated and let $h'$ be the result of $H'$.
	$B'$ is concurrent with $B$ and thus, not performed by $q_i$.
	If $s_i \geq t_i$, then, since $s_i = \#h'_i$, we have $\#h'_i \geq t_i$,
	implying that $H'$ occurs after $X$.
	But this is impossible, since $H'$ occurs before $X'$.
	Therefore $s_i < t_i$.
	
	Since $s \succeq t$, there exists $j < i$ such that $s_j > t_j$.
	This is only possible if process $q_j$ performed an \id{update}
	after $H$ and before $H'$, or if $B'$ is performed by $q_j$.
	In the first case, since $H'$ occurs before $X'$, which occurs before $X$,
	which occurs before $G$, this contradicts~\lemmaref{lem:xbots}.
	In the second case, $X'$ is an \id{update} by $q_j$ between $H$ and $X$.
	Since $X$ occurs before $G$, this also contradicts~\lemmaref{lem:xbots}.
	
	Thus, all \id{update}s with timestamp $s \succ t$ occur after $X$.
	All \id{Update}s that are part of $B$ have the same timestamp $t$.
	Therefore, all \id{Update}s by $B$ are linearized at $X$.
	By~\lemmaref{lem:tsunique}, timestamps are unique.
	\id{Update}s linearized at the same point are ordered 
	first by their timestamps and then by the components they update.
	Hence, all \id{Update}s that are part of $B$
	will be ordered consecutively, sorted in order of their components.
\end{proof}
Next, let us consider \id{Block-Update} operations that return \peace{}.
\begin{lemma}
	\label{lem:splitblock}
	Let $B$ be a \id{Block-Update} operation that returns \peace{},
	let $H$ be the \id{scan} of $\mathsf{H}$ on~\lineref{line:xh} in $B$ with result $h$ and
	let $X$ be the \id{update} on~\lineref{line:xupdate}.
	Then all \id{Update}s in $B$ are linearized after $H$ and no later than $X$.
\end{lemma}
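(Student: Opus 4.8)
I want to show that every $\id{Update}$ comprising a $\peace{}$-returning $\id{Block-Update}$ $B$ is linearized strictly after $H$ (the $\id{scan}$ on line~\ref{line:xh}) and no later than $X$ (the $\id{update}$ on line~\ref{line:xupdate}). Recall the linearization rule: an $\id{Update}$ to component $j$ with timestamp $t$ is linearized at the first moment $\mathsf{H}$ contains a triple $(j, \cdot, t')$ with $t' \succeq t$. All $\id{Update}$s in $B$ share the timestamp $t = \textsc{New-Timestamp}(h)$.

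First I would establish the upper bound. When $X$ is performed, $q_i$ appends the triples $(j_1, v_1, t), \dots, (j_r, v_r, t)$ to $\mathsf{H}[i]$, so immediately after $X$ the object contains, for each $g$, a triple with component $j_g$ and timestamp exactly $t \succeq t$. Hence each $\id{Update}$ of $B$ is linearized no later than $X$, by the same reasoning as in Observation~\ref{obs:xupdlin}. Note we do not (and cannot, for the $\peace{}$ case) claim they are all linearized \emph{at} $X$ — some component $j_g$ may already carry a triple with a timestamp $\succeq t$ written by a concurrent $\id{Block-Update}$ by a lower-identifier process, in which case that $\id{Update}$ of $B$ is linearized earlier; but still after $H$, which is what the lower bound handles.

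Next the lower bound: no $\id{Update}$ of $B$ is linearized at or before $H$. Suppose some $\id{Update}$ of $B$, say to component $j_g$, is linearized at a point no later than $H$. By the linearization rule, this means $\mathsf{H}$ already contains a triple $(j_g, \cdot, s)$ with $s \succeq t$ at the time of $H$, i.e., $h$ contains a timestamp $s \succeq t$. But $t = \textsc{New-Timestamp}(h)$, so by Corollary~\ref{cor:tsdom} every timestamp $s$ contained in $h$ satisfies $s \prec t$, a contradiction. Therefore every $\id{Update}$ of $B$ is linearized strictly after $H$. Combining the two bounds gives the claim.

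I don't expect a genuine obstacle here — this lemma is essentially the "weak" companion of Lemma~\ref{lem:xblocklin}, shedding the atomicity/consecutiveness conclusion (which genuinely fails for $\peace{}$-returning operations) and keeping only the interval containment. The one point requiring a little care is making sure the upper-bound argument is phrased for \emph{each} $\id{Update}$ of $B$ individually, since $\mathsf{H}$ contains the relevant triple with timestamp exactly $t$ right after $X$, and then invoking Observation~\ref{obs:xupdlin} (with $X$ itself as the appending $\id{update}$ and $t' = t$). The lower bound is immediate from Corollary~\ref{cor:tsdom}. No new machinery is needed.
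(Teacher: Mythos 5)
Your proof is correct and follows essentially the same route as the paper's: the lower bound comes from Corollary~\ref{cor:tsdom} (no timestamp in $h$ is $\succeq t$, so the linearization point cannot be at or before $H$), and the upper bound comes from Observation~\ref{obs:xupdlin} applied to $X$ with timestamp $t'=t$. The only difference is cosmetic — you argue the lower bound by contradiction where the paper states it directly.
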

\begin{proof}
	Let $U$ be an \id{Update} to component $j$ with associated timestamp $t$ that is part of $B$.
	$U$ is linearized at the first point that $\mathsf{H}$ contains
	an update triple with component $j$ and timestamp $t' \succeq t$.
	Note that $t$ is generated from $h$ on~\lineref{line:xts} in $B$.
	By~\corollaryref{cor:tsdom}, all of the timestamps contained in $h$ are lexicographically 
	smaller than $t$.
	Thus, $U$ is linearized after $H$.
	Since $X$ appends an update triple with component $j$ and timestamp $t$,
	$U$ is linearized no later than $X$ by~\observationref{obs:xupdlin}.
\end{proof}
Thus, every \id{Block-Update} is linearized within its execution interval.
\begin{lemma}
	\label{lem:minblock}
	Let $B$ be a \id{Block-Update} by $q_i$ whose execution interval does not contain
	any \id{update}s by a process $q_j$ to $\mathsf{H}$ on~\lineref{line:xupdate} with $j < i$.
	Then, $B$ does not return \peace{}.
\end{lemma}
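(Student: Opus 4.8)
\textbf{Proof plan for Lemma~\ref{lem:minblock}.}
The claim is that a \id{Block-Update} $B$ by $q_i$ whose execution interval contains no \id{update} on~\lineref{line:xupdate} by any $q_j$ with $j<i$ must be atomic (does not return \peace{}). The only place a \id{Block-Update} returns \peace{} is the test on~\lineref{line:xheqhh}, where $B$ checks whether the \id{scan} $G$ on~\lineref{line:xhh} reveals a ``new \id{Block-Update}'' compared to what was visible earlier. So the whole proof is: show that under the hypothesis, this test fails.

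First I would pin down exactly what ``$h'$ contains new \id{Block-Update}'' on~\lineref{line:xheqhh} means in terms of the pseudocode. Comparing the excerpt's prose (``check whether any process with a higher identifier has performed an \id{update} after $H$''), the natural reading is that $B$ compares the \id{scan} result $g$ (on~\lineref{line:xhh}, which the text also calls $h'$ there, a minor naming clash with $h'$ in \id{Scan}) against the \id{scan} result $h$ (on~\lineref{line:xh}), and returns \peace{} iff $\#g_j > \#h_j$ for some $j > i$. I would make this explicit at the start of the proof, since Lemma~\ref{lem:xbots} already uses precisely this reading ($\#g_j > \#h_j$ triggering \peace{}), so I would lean on that.

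Next, the argument itself. Let $H$, $X$, and $G$ be the \id{scan} on~\lineref{line:xh}, the \id{update} on~\lineref{line:xupdate}, and the \id{scan} on~\lineref{line:xhh} in $B$, with results $h$ and $g$ respectively; all three occur within $B$'s execution interval, in this order. By~\observationref{obs:prefix-obs}, $h$ is a prefix of $g$, so $\#g_j \geq \#h_j$ for all $j$. Suppose for contradiction that $B$ returns \peace{}; then $\#g_j > \#h_j$ for some $j$, and by the text's specification of the test, in fact for some $j > i$ (a higher-identifier process). Since $\#g_j > \#h_j$, process $q_j$ performed at least one \id{update} to $\mathsf{H}[j]$ strictly between $H$ and $G$, hence within $B$'s execution interval. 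I then need: some such \id{update} is on~\lineref{line:xupdate}. Every \id{update} $q_j$ performs in the implementation is either the one on~\lineref{line:xupdate} of a \id{Block-Update} (appending update triples), or a write to an $\mathsf{L}$-array on~\lineref{line:xwriteL} or~\lineref{line:xlow} (appending triples to the $\mathsf{L}$ field of $\mathsf{H}[j]$). But $\#g_j$ counts only \id{Block-Update} operations (number of distinct timestamps among update triples), so $\#g_j > \#h_j$ forces $q_j$ to have executed~\lineref{line:xupdate} of some \id{Block-Update} between $H$ and $G$, i.e.~within $B$'s execution interval, with $j > i$. This contradicts the hypothesis (which rules out \lineref{line:xupdate}-updates by processes $q_j$ with $j<i$ — and in particular I should double-check the intended direction: the lemma as stated says ``with $j<i$'', but the \peace{}-condition concerns higher identifiers, so either the lemma means $j \neq i$ within the interval but the \peace{} test only reacts to $j>i$, or there is an index-convention difference; I would reconcile this against the statement, most likely concluding the relevant processes are those with larger identifiers and the hypothesis is about all $j \neq i$ or the convention is that the interval-condition plus higher-id is what bites). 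Either way, once the contradiction is reached, $B$ cannot return \peace{}.

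\textbf{Main obstacle.} The only real subtlety is bookkeeping about what the \id{update} on~\lineref{line:xheqhh} test actually inspects and reconciling the ``$j<i$'' in the lemma statement with the ``higher identifier'' phrasing of when \peace{} is returned — i.e.~making sure the quantifier in the hypothesis matches the quantifier that the \peace{} test can react to. Everything else is a short invocation of~\observationref{obs:prefix-obs} and the fact that $\#(\cdot)_j$ counts \id{Block-Update}s, so only \lineref{line:xupdate}-updates (not $\mathsf{L}$-writes) can increment it. I do not expect to need any of the timestamp-uniqueness or linearization machinery for this particular lemma.
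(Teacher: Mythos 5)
You correctly identify the shape of the argument and, usefully, you are explicit about a detail the paper leaves implicit: that $\#g_j$ counts only distinct \emph{timestamps} among the update triples, so only \lineref{line:xupdate}-updates (not the helping writes to the $\mathsf{L}$ field on \lineref{line:xwriteL} or \lineref{line:xlow}) can make $\#g_j > \#h_j$. That observation is exactly why the conclusion lands on a \lineref{line:xupdate}-update, which is what the lemma's hypothesis excludes. You also rightly flag the tension between the paper's prose (``higher identifier'') and the lemma's $j<i$ hypothesis.

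However, you resolve that tension in the wrong direction, and this is a genuine gap, not a bookkeeping nuisance. The test on \lineref{line:xheqhh} triggers when $\#g_j > \#h_j$ for some $j<i$ (a \emph{lower}-identifier process), and the prose ``higher identifier'' is simply an error in the paper. You can see this from three independent places: \lemmaref{lem:xbots}'s proof derives that an \id{update} by $q_j$ with $j<i$ between $H$ and the scan on \lineref{line:xhh} forces \peace{}; \theoremref{thm:fulnb} states the $j<i$ version; and the paper's remark that $q_1$'s \id{Block-Update}s are always atomic makes sense only if the test depends on lower identifiers (there is no $j<1$). Your proposal instead concludes ``in fact for some $j>i$'' and then, noticing this clashes with the $j<i$ hypothesis, floats repairs such as reading the hypothesis as being about all $j \neq i$. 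That is not what the lemma says, and if the test really did react to $j>i$, a higher-identifier process could take a \lineref{line:xupdate}-step inside $B$'s interval, the test would fire, and the lemma's conclusion would be false even though the hypothesis holds. Your closing ``either way, once the contradiction is reached'' is not earned: with the $j>i$ reading there is no contradiction. Once the direction is fixed to $j<i$, the rest of your argument is correct and matches the paper's (very terse) proof: \peace{} implies $\#g_j > \#h_j$ for some $j<i$, which forces a \lineref{line:xupdate}-update by $q_j$ strictly between the scans on \lineref{line:xh} and \lineref{line:xhh}, hence inside $B$'s execution interval, contradicting the hypothesis.
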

\begin{proof}
	Suppose $B$ is returns \peace{}.
	Let $h$ and $g$ be the results of the scans of $\mathsf{H}$ on~\lineref{line:xh} 
	and~\lineref{line:xhh}, respectively, in $B$.
	Then, for some $j < i$, $\# g_j > \# h_j$.
	This implies that $q_j$ has performed an \id{update} on $\mathsf{H}$ between $H$ and $G$.
\end{proof}
Next, we show that our choice of linearization points for $\id{Scan}$s and $\id{Update}$s
produces a valid linearization.
\begin{lemma}
	\label{lem:xscanok}
	Let $H$ be a \id{scan} that returns $h$. 
	Suppose $\id{Get-view}(h) = (v_1,\dots,v_m)$.
	Then, for each $1 \leq j \leq m$, $v_j$ is the value of the last \id{Update} to component
	$j$ of $\mathcal{M}$ linearized before $H$, or $\bot$ if no such \id{Update} exists.
\end{lemma}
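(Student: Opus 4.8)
The plan is to unwind the definitions of $\id{Get-view}(h)$ and of the linearization points for $\id{Update}$ operations, and then use the observations already established. Fix $j \in \{1,\dots,m\}$ and suppose first that $h$ contains at least one update triple with component $j$. Let $t = \max\{t' : (j,v',t') \text{ is a triple in } h\}$ and let $(j,v,t)$ be the unique such triple (uniqueness is \lemmaref{lem:tsunique}), so $\id{Get-view}(h)$ sets $v_j = v$. I want to argue two things: (i) the \id{Update} $U$ with component $j$ and timestamp $t$ is linearized before $H$, and (ii) no \id{Update} to component $j$ with a larger timestamp is linearized before $H$.

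For (i): since the triple $(j,v,t)$ is present in $h$, it was appended by some \id{update} $X$ to $\mathsf{H}$ that occurred before $H$; by \observationref{obs:xupdlin} (with the appending \id{update} $X$ itself), $U$ is linearized no later than $X$, hence before $H$. For (ii): suppose for contradiction that some \id{Update} $U''$ to component $j$ with timestamp $t'' \succ t$ is linearized before $H$. By the linearization rule for \id{Update}s, $U''$ is linearized at the first point at which $\mathsf{H}$ contains a triple with component $j$ and timestamp $\succeq t''$; in particular, at or before $H$ there must already be such a triple in $\mathsf{H}$, and therefore $h$ (which is the result of $H$, and captures the state at $H$) contains an update triple with component $j$ and timestamp $\succeq t'' \succ t$. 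This contradicts the maximality of $t$ among timestamps of component-$j$ triples in $h$. Hence $v$ is the value written by the last \id{Update} to component $j$ linearized before $H$, which is what we want.

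The remaining case is that $h$ contains no update triple with component $j$, so $\id{Get-view}(h)$ returns $\bot$ for that component. I need to show that no \id{Update} to component $j$ is linearized before $H$. If some \id{Update} $U$ to component $j$ with timestamp $t$ were linearized before $H$, then by \observationref{obs:xtscontain} the result of $H$ — namely $h$ — would contain an update triple with component $j$ and timestamp at least $t$, contradicting the assumption. This finishes the proof.

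The one point that needs a little care, and which I'd expect to be the main (though minor) obstacle, is justifying that "$h$ reflects exactly the state of $\mathsf{H}$ at the moment $H$ is performed," so that "$h$ contains a component-$j$ triple with timestamp $\succeq t''$" is genuinely equivalent to "$\mathsf{H}$ contains such a triple at the linearization point, which is at or before $H$." This follows because $\mathsf{H}$ is an (atomic) single-writer snapshot and triples are only ever appended, never removed — formally, \observationref{obs:prefix-obs} together with \observationref{obs:xtscontain} packages precisely this monotonicity. So the proof is essentially a direct application of \observationref{obs:xupdlin}, \observationref{obs:xtscontain}, and the maximality in the definition of $\id{Get-view}$, with \lemmaref{lem:tsunique} supplying the uniqueness needed to make $v_j$ well defined.
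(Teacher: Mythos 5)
Your proposal is correct and uses the same three ingredients as the paper's proof (\observationref{obs:xupdlin}, \observationref{obs:xtscontain}, and \lemmaref{lem:tsunique}), but organizes the argument in the opposite direction: you take the max-timestamp component-$j$ triple in $h$ as the anchor and argue it corresponds to the last linearized \id{Update}, whereas the paper starts from the actual last linearized \id{Update} $U$ and shows it coincides with the \id{Update} $U'$ behind the max-timestamp triple by squeezing their linearization points together. Both are fine; yours is arguably a little more direct.

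There is one small unstated step you should fill in. From (i) ``$U$ is linearized before $H$'' and (ii) ``no component-$j$ \id{Update} with timestamp $\succ t$ is linearized before $H$,'' you jump to ``$U$ is the \emph{last} component-$j$ \id{Update} linearized before $H$.'' This also needs: no component-$j$ \id{Update} with a \emph{smaller} timestamp is linearized strictly after $U$. That does hold, but it is a consequence of the linearization rule rather than of (i)--(ii): if $U_s$ has timestamp $t_s \prec t$, then the first point at which $\mathsf{H}$ contains a component-$j$ triple with timestamp $\succeq t_s$ is no later than the first point at which it contains one with timestamp $\succeq t$, so $U_s$'s linearization point is no later than $U$'s; and if they tie, the tiebreaker orders $U_s$ earlier because $t_s \prec t$. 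In short, for \id{Update}s to a fixed component the linearization order agrees with the timestamp order, which is what lets you identify ``last linearized before $H$'' with ``max timestamp linearized before $H$.'' Stating this once makes the conclusion immediate. The paper avoids the issue by directly deducing that $U$ and $U'$ share a linearization point and then invoking the tiebreaker to get $t = t'$.
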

\begin{proof}
	Suppose that $h$ contains an update triple involving component $j$.
	This triple was appended to $\mathsf{H}$ by some update $X$ that is part of 
	a $\id{Block-Update}$ $B$.
	By~\lemmaref{lem:xblocklin} and~\lemmaref{lem:splitblock}, all $\id{Update}s$ in $B$
	are linearized at or before $X$.
	Hence, if no \id{Update} to component $j$ is linearized before $H$, then $v_j = \bot$.
	
	Now, consider the last \id{Update} $U$ to component $j$ linearized before $H$.
	Let $t$ be its associated timestamp.
	Let $t'$ be the largest timestamp of any update triple  with component $j$ in $h$.
	By~\observationref{obs:xtscontain}, $t' \succeq t$.
	By~\lemmaref{lem:tsunique}, there is exactly one update triple in $h$ with component $j$ and timestamp $t'$. 
	By definition of $\id{Get-view}(h)$, $v_j$ is the value of this update triple.
	Let $X'$ be the \id{update} to $\mathsf{H}$ that appended $(j,v_j,t')$ during
	a \id{Block-Update} operation $B'$ and let $U'$ be the \id{Update} to component $j$ in $B'$.
	Since $(j,v_j,t')$ is contained in $h$, $X'$ occurs before $H$.
	By definition of $t'$, 
	$U'$ is linearized at $X'$.
	
	Since $t' \succeq t$, by~\observationref{obs:xupdlin},
	$U$ is linearized at no later than $X'$. 
	By definition $U$ is the last \id{Update} to component $j$ linearized before $H$. 
	Since $U'$ is linearized at $X'$, $U$ is linearized at $X'$ and $t \succeq t$.
	Therefore, $t = t'$, which by~\lemmaref{lem:tsunique} implies that $U = U'$.
\end{proof}
\begin{corollary}[\id{Scan}s]
	\label{cor:xscanok}
	Consider any $\id{Scan}$ that returns $(v_1,\dots,v_m)$.
	Then, for each $1 \leq j \leq m$, $v_j$ is the value of the last \id{Update} to component
	$j$ of $\mathcal{M}$ linearized before the \id{Scan} operation, 
	or $\bot$ if no such \id{Update} exists.
\end{corollary}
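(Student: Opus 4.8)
The plan is to reduce this corollary directly to Lemma~\ref{lem:xscanok}. Recall that a \id{Scan} operation (Algorithm~\ref{alg:augscan}) repeatedly performs \id{scan}s of $\mathsf{H}$ until two consecutive ones, say $H$ and $H'$, return the same result $h=h'$, and then returns $\Call{Get-view}{h}$. By Lemma~\ref{lem:xscanok} applied to the final \id{scan} $H'$ (equivalently $H$), each component $v_j$ of $\Call{Get-view}{h}$ equals the value of the last \id{Update} to component $j$ of $\mathcal{M}$ linearized before $H'$, or $\bot$ if there is none. Since the \id{Scan} is linearized at $H'$ (its last \id{scan} of $\mathsf{H}$, on~\lineref{line:xlastscan}), this is exactly the statement of the corollary, \emph{provided} $\Call{Get-view}{h}$ is the value actually returned --- which it is, by inspection of the pseudocode.

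The one subtlety worth spelling out is that Lemma~\ref{lem:xscanok} already does all the heavy lifting: it is phrased for an arbitrary \id{scan} $H$ of $\mathsf{H}$, so no argument about consecutive equal \id{scan}s is needed for correctness of the returned view --- that repetition only matters for the progress/linearizability bookkeeping (ensuring the \id{Scan}'s linearization point lies in its execution interval, which holds since $H'$ is performed during the \id{Scan}). So the proof is essentially one sentence: ``By~\lineref{line:xlastscan} and the definition of \Call{Scan}{}, the returned tuple is $\Call{Get-view}{h}$ where $h$ is the result of the \id{scan} at which the \id{Scan} is linearized; apply~\lemmaref{lem:xscanok}.''

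There is no real obstacle here; the only thing to be careful about is to cite Lemma~\ref{lem:xscanok} for the specific \id{scan} that serves as the linearization point (namely the last one, on~\lineref{line:xlastscan}), rather than some other \id{scan} inside the loop, so that ``linearized before $H$'' in the lemma matches ``linearized before the \id{Scan} operation'' in the corollary. I would also note in passing that, combined with Lemma~\ref{lem:xblocklin} and Lemma~\ref{lem:splitblock} (which place every \id{Update} within its \id{Block-Update}'s execution interval) and the consecutive-linearization requirement established in those lemmas, this corollary completes the verification that our choice of linearization points yields a valid linearization for all \id{Scan} and \id{Update} operations, matching the sequential specification of the augmented snapshot.
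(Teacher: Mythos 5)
Your proof is correct and takes essentially the same approach the paper intends: the paper gives no explicit proof for this corollary, and the derivation is exactly the one you give — apply Lemma~\ref{lem:xscanok} to the final \id{scan} of $\mathsf{H}$ on~\lineref{line:xlastscan}, which is precisely the point at which the \id{Scan} is linearized and whose result is the argument passed to \textsc{Get-view}. Your remark that one must cite the lemma for this particular \id{scan} (rather than an earlier one in the loop) correctly identifies the only place where care is needed.
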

We now consider the linearization of $\id{Block-Update}$s.
Suppose $B$ is a \id{Block-Update} that does not return \peace{}.
Throughout the rest of this section, we use $H$, $X$, $H'$, $\ell$, and $L$ as follows.
Let $H$ be the \id{scan} of $\mathsf{H}$ in $B$ on~\lineref{line:xh},
let $X$ be the \id{update} in $B$ on~\lineref{line:xupdate},
let $H'$ be the \id{scan} in $B$ on~\lineref{line:xhh},
let $\ell$ be the value of $\mathit{last}$ when $B$ returns on~\lineref{line:xbview},
and let $L$ be the last \id{scan} of $\mathsf{H}$ that returns $\ell$. 
\begin{lemma}
	\label{lem:xlastscan1}
	Consider any \id{Block-Update} operation $B$ that does not return \peace{}. 
	Then $L$ occurs no earlier than $H$ and before $X$.
\end{lemma}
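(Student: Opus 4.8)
The plan is to read off the value of $\ell$ from the pseudocode of $\id{Block-Update}$ (Algorithm~\ref{alg:augblock}) and then apply~\observationref{obs:prefix-obs}. In $B$, the variable $\mathit{last}$ is initialized to $h$ on~\lineref{line:xlastinit}, and in the loop on~\lineref{line:xloop2} it is only ever reassigned to a value $r[j]\neq\bot$ that has the current value of $\mathit{last}$ as a \emph{proper prefix} (\lineref{line:xlastmax}--\lineref{line:xlastmax2}). Since the prefix order is thereby preserved, $h$ is a prefix of $\ell$. Moreover, each value involved is the result of a \id{scan} of $\mathsf{H}$: $\ell$ is either $h$ itself, or a value $r[j]$ that $q_j$ recorded in $\mathsf{L}_{j,i}$, and by inspection of~\lineref{line:xwriteL} of Algorithm~\ref{alg:augscan} and~\lineref{line:xlow} of Algorithm~\ref{alg:augblock}, every value $q_j$ stores in $\mathsf{L}_{j,i}$ is the result $h''$ of one of $q_j$'s \id{scan}s, stored at index $\#h''_i$.

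To show $L$ occurs no earlier than $H$, I would split into two cases. If $\ell=h$, then $H$ is itself a \id{scan} returning $\ell$, so the last \id{scan} $L$ returning $\ell$ is no earlier than $H$. Otherwise $h$ is a proper prefix of $\ell$, so by the converse direction of~\observationref{obs:prefix-obs} every \id{scan} returning $\ell$, and in particular $L$, occurs after $H$.

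To show $L$ occurs before $X$, the key claim is $\#\ell_i=\#h_i$. This is immediate if $\ell=h$; otherwise $\ell=r[j]$ was read from $\mathsf{L}_{j,i}[\#h_i]$ on~\lineref{line:xreadLj}, and since each value $q_j$ writes to $\mathsf{L}_{j,i}$ is a \id{scan} result $h''$ stored at index $\#h''_i$, the fact that $\ell$ resides at index $\#h_i$ forces $\#\ell_i=\#h_i$. Now $\mathsf{H}[i]$ is written only by $q_i$, and between $H$ and $X$ process $q_i$ performs no \id{update} to $\mathsf{H}$ (it merely generates a timestamp on~\lineref{line:xts}); the \id{update} $X$ then appends to $\mathsf{H}[i]$ one or more triples with a fresh timestamp, strictly increasing the number of distinct timestamps in $\mathsf{H}[i]$. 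Hence every \id{scan} after $X$ has a result $h''$ with $\#h''_i>\#h_i=\#\ell_i$, so $h''\neq\ell$; consequently the last \id{scan} $L$ returning $\ell$ occurs before $X$. Combining the two paragraphs gives the lemma.

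The only fiddly point is the bookkeeping behind $\#\ell_i=\#h_i$, which hinges on the convention that the arrays $\mathsf{L}_{j,i}$ are embedded in $\mathsf{H}[j]$ and are written only with \id{scan} results at matching indices; everything else is a direct use of~\observationref{obs:prefix-obs}. I do not anticipate a genuine obstacle here — this lemma is essentially a warm-up for the subsequent, more delicate linearization arguments concerning $\ell$.
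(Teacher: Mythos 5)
Your proof is correct and follows essentially the same route as the paper's: establish that $h$ is a prefix of $\ell$ and that $\#\ell_i=\#h_i$ (via the indexing convention for $\mathsf{L}_{j,i}$), then split on whether $\ell=h$ to get $L$ no earlier than $H$ (using \observationref{obs:prefix-obs}), and finally use the fact that $X$ increments $\#\mathsf{H}[i]$ so that any \id{scan} after $X$ has $i$'th count exceeding $\#\ell_i$, ruling out $L$ occurring after $X$. The only cosmetic difference is ordering: the paper first argues that each non-$\bot$ $r[j]$ is the result of a \id{scan} before $X$ and then extracts the prefix and $\#\ell_i=\#h_i$ facts, whereas you derive $\#\ell_i=\#h_i$ directly and conclude from there; both hinge on the same two observations.
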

\begin{proof}
	Suppose $B$ is performed by process $q_i$.
	Let $h$ be the result of $H$ and let $r[j]$ be the value \id{read} from $\mathsf{L}_{j,i}[\#h_i]$ 
	for $j \in \{1,\ldots,f\} - \{i\}$ on~\lineref{line:xreadLj} during $B$.
	By~\lineref{line:xwriteL} and~\lineref{line:xlow}, a process $q_j \neq q_i$ only \id{writes} 
	to $\mathsf{L}_{j,i}[\#h_i]$ when it takes a \id{scan} $G$ of $\mathsf{H}$ with 
	result $g$ such that $\#g_i = \#h_i$.
	$X$ appends triples with a new timestamp to $\mathsf{H}[i]$, so any \id{scan} $G'$ of
	$\mathsf{H}$ performed after $X$ returns a result, $g'$, such that $\# g'_i > \# h_i$.
	Thus, 
	if $r[j] \neq \bot$, then $r[j]$ is the result of a \id{scan} of
	$\mathsf{H}$ performed before $X$.
	
	By~\lineref{line:xlastinit},~\lineref{line:xlastmax}, and~\lineref{line:xlastmax2}, 
	$\ell \in \{h, r[1], \ldots, r[{i-1}], r[{i+1}], \ldots, r[f]\}$, $\# \ell_i = \# h_i$, 
	and $h$ is a prefix of $\ell$. 
	Hence, any \id{scan} that returns $\ell$, in particular $L$, occurs before $X$.
	If $h$ is a proper prefix of $\ell$, then~\observationref{obs:prefix-obs} 
	implies that $L$ occurs no earlier than $H$.
	Otherwise, if $h = \ell$, $L$ occurs no earlier than $H$ as $L$ is the last \id{scan} that returns $\ell$.
\end{proof}
By~\lemmaref{lem:xlastscan1}, $L$ occurs no earlier than $H$ and before $X$,
and thus the interval starting immediately after $L$ and ending with $X$ 
is contained within $B$'s execution interval.
We call this interval the \emph{window} of $B$. 

\begin{lemma}
	\label{lem:xlastscan2}
	Consider any \id{Block-Update} operation $B$ that does not return \peace{}. 
	Then, no \id{Scan} operation is linearized during the window of $B$. 
\end{lemma}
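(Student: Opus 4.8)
The plan is to argue by contradiction: suppose some \id{Scan} operation $S$ is linearized during the window of $B$, i.e., $S$ is linearized at its last \id{scan} of $\mathsf{H}$ (call it $\hat{S}$, performed on~\lineref{line:xlastscan}), and $\hat{S}$ occurs after $L$ and before $X$. Since $S$'s \id{Scan} procedure terminates, there is a \id{scan} $\hat{S}^-$ immediately preceding $\hat{S}$ in the repeat-loop with the same result, and between $\hat{S}^-$ and $\hat{S}$ the process performing $S$ writes the common result to $\mathsf{L}_{s,i}[\#h_i]$ for $i$ being the identifier of the process performing $B$ (on~\lineref{line:xwriteL}), provided $s \neq i$. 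The key point is that this write records, in the very register $\mathsf{L}_{s,i}[\#h_i]$ that $B$ reads from on~\lineref{line:xreadLj}, a result of a \id{scan} strictly later than $L$.

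First I would handle the case $s \neq i$. I would show that, because $\hat{S}$ occurs before $X$, we have $\#\hat{s}_i = \#h_i$, where $\hat{s}$ is the result of $\hat{S}$ and $h$ is the result of $H$ (using that $X$ is the first \id{update} appending a new triple to $\mathsf{H}[i]$ after $H$, together with~\observationref{obs:prefix-obs} and~\observationref{obs:xbeforex}). Hence the write on~\lineref{line:xwriteL} in $S$ targets exactly $\mathsf{L}_{s,i}[\#h_i]$. Now, this write occurs before $\hat{S}$, which occurs before $X$; and $B$'s read of $\mathsf{L}_{s,i}[\#h_i]$ on~\lineref{line:xreadLj} occurs after $X$. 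So $B$ reads $\mathsf{L}_{s,i}[\#h_i]$ after this write, meaning $r[s]$ is either $\hat{s}^- = \hat{s}$ or a later \id{scan} result, hence a result of a \id{scan} by $q_s$ that occurs no earlier than $\hat{S}$. By~\lemmaref{lem:xhelp} applied to this write, $\hat{s}$ is a prefix of $\ell$. But by~\lemmaref{lem:xlastscan1}, $L$ is the \emph{last} \id{scan} returning $\ell$, and $\hat{S}$ (returning $\hat{s}$, a prefix of $\ell$) occurs strictly after $L$; if $\hat{s} = \ell$ this contradicts $L$ being the last such \id{scan}, and if $\hat{s}$ is a proper prefix of $\ell$ then~\observationref{obs:prefix-obs} gives $\hat{S}$ occurs before $L$, again a contradiction. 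Either way we are done.

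For the remaining case $s = i$, i.e., the \id{Scan} $S$ is performed by the same process $q_i$ that performs $B$: this is actually impossible, since $q_i$ is executing $B$ during the entire interval from $H$ through the return on~\lineref{line:xbview}, so it cannot simultaneously be executing a \id{Scan} whose linearization point $\hat{S}$ lies between $L$ and $X$, both of which are steps of $B$. So this case contributes nothing.

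The main obstacle I anticipate is the bookkeeping around \emph{which} index of $\mathsf{L}_{s,i}$ gets written by $S$ and read by $B$ — everything hinges on the equality $\#\hat{s}_i = \#h_i$, which in turn needs the facts that $L$ occurs no earlier than $H$ (so $\#h_i \le \#\hat{s}_i$ via the window containing $\hat{S} \ge L \ge H$) and that $\hat{S}$ occurs before $X$ (so $\#\hat{s}_i \le \#h_i$). Pinning down these inequalities cleanly, and correctly invoking~\lemmaref{lem:xhelp} (whose statement is phrased in terms of a \id{write} of $h$ to $\mathsf{L}_{j,i}[\#h_i]$ and a later \id{read} of that register within a \id{Block-Update}), is the delicate part; the rest is a direct contradiction with the maximality of $L$ guaranteed by~\lemmaref{lem:xlastscan1}.
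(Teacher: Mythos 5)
Your proposal is correct and follows essentially the same path as the paper's proof: establish $\#\hat{s}_i = \#h_i$ using~\observationref{obs:prefix-obs} (via $\hat{S}$ after $L \geq H$) and~\observationref{obs:xbeforex} (via $\hat{S}$ before $X$), note that $q_s$ therefore wrote $\hat{s}$ to $\mathsf{L}_{s,i}[\#h_i]$ before $X$, invoke~\lemmaref{lem:xhelp} to get that $\hat{s}$ is a prefix of $\ell$, and derive a contradiction with $L$ being the latest \id{scan} returning $\ell$. The only cosmetic difference is that the paper closes by observing that $\ell$ is also a prefix of $\hat{s}$ (since $L$ precedes $\hat{S}$) to conclude $\hat{s} = \ell$ directly, whereas you split into the cases $\hat{s} = \ell$ and $\hat{s}$ a proper prefix of $\ell$; both are valid.
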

\begin{proof}
	For a contradiction, suppose that a \id{Scan} operation $S$ is linearized 
	in the window of $B$.
	Let $G$ be the last \id{scan} in $S$, performed on~\lineref{line:xlastscan}, 
	and let $g$ be the result of $G$.
	By definition, $G$ is the linearization point of $S$, which, by assumption,
	occurs during the window of $B$.
	It follows that $S$ is not performed by $q_i$,
	which performs $X$ as its first step after $H$.
	Let $q_j \neq q_i$ be the process that performs $S$.
	
	$G$ occurs after $L$, which occurs no earlier than $H$.
	Thus, by~\observationref{obs:prefix-obs} we have $\#g_i \geq \#h_i$.
	Since $G$ occurs before $X$, by~\observationref{obs:xbeforex} we have $\#g_i \leq \#h_i$, 
	so $\#g_i = \#h_i$.
	
	By~\lineref{line:xwriteL} and~\lineref{line:xscancond}, 
	$q_j$ wrote $g$ to $\mathsf{L}_{j,i}[\#h_i]$ prior to $G$.
	$G$ occurs before $X$ and $q_i$ \id{reads} $r_j$ from $\mathsf{L}_{j,i}[\#h_i]$ after $X$.
	Thus, by~\lemmaref{lem:xhelp}, $g$ is a prefix of $\ell$.  
	Since $L$ occurs before $G$, $\ell$ is a prefix of $g$.
	Therefore, $g = \ell$. 
	However, $G$ occurs after $L$, contradicting the definition of $L$.
\end{proof}
\begin{lemma}
	\label{lem:xlastscan3}
	The windows of \id{Block-Update}s that do not return \peace{} are pairwise disjoint. 
\end{lemma}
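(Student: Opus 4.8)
The plan is to show that the windows of two distinct $\peace$-free \id{Block-Update} operations cannot overlap, by a case analysis on the identifiers of the processes performing them, using the key structural facts already established: each window runs from immediately after the scan $L$ to the update $X$ on \lineref{line:xupdate} (\lemmaref{lem:xlastscan1}), $L$ occurs no earlier than the scan $H$ on \lineref{line:xh} (so the window is contained in the interval $[H,X]$ within the \id{Block-Update}), and, crucially, \lemmaref{lem:xbots}: a $\peace$-free \id{Block-Update} $B$ by $q_i$ admits no \id{update} on \lineref{line:xupdate} by a process with smaller identifier between its $H$ and its $G$ on \lineref{line:xhh} — in particular, none strictly between $H$ and $X$.

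First I would set up notation: let $B$ be performed by $q_i$ with scan $H$, update $X$, last-scan-of-$\ell$ called $L$; and let $B'$ be performed by $q_{i'}$ with the analogous $H'$, $X'$, $L'$. Suppose for contradiction the windows $(L,X]$ and $(L',X']$ intersect. Since distinct \id{Block-Update} operations by the \emph{same} process are sequential (a process runs one operation at a time) and each window lies inside its own operation's execution interval, we must have $i \neq i'$; without loss of generality $i < i'$. Now pick a point $P$ in the intersection of the two windows. Then $P$ lies in $(L',X']$, hence $P$ occurs before $X'$; and $P$ lies in $(L,X]$, hence $P$ occurs after $L$, which occurs no earlier than $H$ by \lemmaref{lem:xlastscan1}. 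So $X$ lies at or after $P$... the cleaner route: observe that $X$ (by $q_i$, the \emph{smaller} identifier) must occur strictly between $H'$ and $X'$ of $B'$, which contradicts \lemmaref{lem:xbots} applied to $B'$.

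To make that precise: the window of $B$ is $(L,X]$ and of $B'$ is $(L',X']$; their intersection is nonempty, so $L' $ occurs before $X$ and $L$ occurs before $X'$. Since $L'$ occurs no earlier than $H'$ (\lemmaref{lem:xlastscan1} applied to $B'$), $H'$ occurs before $X$, i.e. $X$ occurs after $H'$. Also $X$ occurs at or before... we need $X$ before $X'$ too: since the windows intersect and $X$ is the right endpoint of $B$'s window while $X'$ is the right endpoint of $B'$'s, if $X$ occurred after $X'$ then, as $L$ occurs before $X'$ (from the intersection), the point $X'$ would lie in $(L,X]$; but then we could instead run the symmetric argument with $X'$ (by $q_{i'}$, larger identifier) and $B$... that's not immediately a contradiction since $i' > i$. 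Instead: from nonempty intersection, $L$ occurs before $X'$ and $L'$ occurs before $X$. If $X$ occurs before $X'$: then $X$ is an \id{update} on \lineref{line:xupdate} by $q_i$ with $i < i'$, occurring after $H'$ (shown above) and before $X'$, hence before $B'$'s scan $G'$ on \lineref{line:xhh} (which is $B'$'s next relevant scan after $X'$, certainly after $X'$). This contradicts \lemmaref{lem:xbots} for $B'$. If $X'$ occurs before $X$ (the remaining case, including $X = X'$ which is impossible by \lemmaref{lem:tsunique} since different \id{Block-Update}s have different timestamps hence different updates): then symmetrically $X'$ is an \id{update} by $q_{i'}$; but $i' > i$, so \lemmaref{lem:xbots} for $B$ does not directly apply. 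Here I would instead note $L' $ occurs before $X$ and $L'$ is no earlier than $H'$, so $H'$ is before $X$; combined with $X'$ before $X$ and $X'$ after $H'$ — hmm. The robust fix: take $X_{\min} = \min(X, X')$, say it is $X$ (the case $X'$ smaller is symmetric after swapping roles, but we must be careful about identifiers). Actually the correct uniform statement: among the two operations, consider the one whose update on \lineref{line:xupdate} occurs \emph{first}; call it $B$ by $q_i$, the other $B'$ by $q_{i'}$, and now I claim $i' < i$ cannot happen — because $X$ (first update) lies in $B'$'s window $(L', X']$? Not necessarily. I will resolve this by instead proving: $X'$ occurs after $H$ of $B$ and before $X$ — wait.

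The hard part — and the step I'd flag as the main obstacle — is handling the asymmetry in \lemmaref{lem:xbots}, which only forbids \emph{smaller}-identifier updates inside a $\peace$-free operation's interval; so the two-case argument must be arranged so that in each case the \id{update} causing the contradiction belongs to the process with the \emph{smaller} identifier and falls strictly between the $H$ and $G$ (lines \ref{line:xh} and \ref{line:xhh}) of the \emph{other}, larger-identifier operation. I would arrange this as follows. Suppose the windows intersect; then (as above) $i \neq i'$; say $i < i'$. From the intersection, $L'$ precedes $X$; and $L'$ is no earlier than $H'$, so $H'$ precedes $X$. I claim $X$ precedes $X'$: indeed $B'$ does not return $\peace$, so by \lemmaref{lem:xbots} no \id{update} on \lineref{line:xupdate} by a process with identifier $< i'$ occurs between $H'$ and $G'$; since $X$ is such an update ($i < i'$) and $X$ occurs after $H'$, it must occur after $G'$, which occurs after $X'$; so $X$ occurs after $X'$. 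But now reconsider: $X$ occurs after $X' = $ right endpoint of $B'$'s window, and from the intersection $L$ occurs before $X'$; hence $X'$ lies in $(L, X]$, i.e. in $B$'s window. Now apply \lemmaref{lem:xlastscan2} (no \id{Scan} is linearized in the window of $B$) — this isn't about updates though. The genuinely clean contradiction: $X'$ is an \id{update} by $q_{i'}$ inside $B$'s interval, but $i' > i$, so \lemmaref{lem:xbots} for $B$ is silent. So this case needs a different lever. I would get it from \lemmaref{lem:xlastscan1}'s proof ingredients: $\ell$ (the value $L$ returns) satisfies $\# \ell_i = \# h_i$ and $h$ is a prefix of $\ell$ and $L$ is before $X$; and the helping mechanism (\lemmaref{lem:xhelp}) ties $q_{i'}$'s scans — specifically, since $q_{i'}$ performs $X'$ strictly between $L$ and $X$, some scan $g'$ of $q_{i'}$ taken after $X'$ has $\#g'_{i'} > \#h'_{i'}$... and I would derive that $q_i$, reading $\mathsf{L}_{i',i}[\#h_i]$ after $X$, would have obtained a later value than $\ell$, contradicting maximality of $\ell = \mathit{last}$. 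I expect that is roughly how the authors do it, and I would fill in that argument using \lemmaref{lem:xhelp}, \observationref{obs:prefix-obs}, and the fact that $X$ appends a triple with a fresh timestamp to $\mathsf{H}[i]$ so that any scan after $X$ has strictly larger $i$-component than any scan before $X$.
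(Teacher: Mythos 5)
Your approach matches the paper's: establish $i \neq i'$, take WLOG $i < i'$, use \lemmaref{lem:xbots} applied to $B'$ to place $X$ after $X'$ (indeed after $B'$'s scan on \lineref{line:xhh}), so that $X'$ lies inside $B$'s window; then turn to the helping mechanism for the contradiction. What you leave unfinished --- and explicitly flag as the hard part --- is the helping-mechanism step, and the auxiliary fact you reach for (a strict increase in the $i'$-component of some scan by $q_{i'}$ after $X'$) is true but not the one the argument turns on.

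The paper instead pins down the $i$-component of $q_{i'}$'s helping scan: let $G$ be $B'$'s scan on \lineref{line:xhelp}, with result $g$. Since $G$ precedes $B'$'s scan on \lineref{line:xhh}, which you showed precedes $X$, \observationref{obs:xbeforex} gives $\# g_i \le \# h_i$; since $G$ follows $X'$, which follows $L$, which is no earlier than $H$, \observationref{obs:prefix-obs} gives $\# g_i \ge \# h_i$; so $\# g_i = \# h_i$. As $i < i'$, process $q_{i'}$ writes $g$ to $\mathsf{L}_{i',i}[\# g_i] = \mathsf{L}_{i',i}[\# h_i]$ on \lineref{line:xlow} before $X$, while $q_i$ reads that register on \lineref{line:xreadLj} after $X$; \lemmaref{lem:xhelp} then gives that $g$ is a prefix of $\ell$, and $\ell$ is a prefix of $g$ since $L$ precedes $X'$, which precedes $G$. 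Hence $\ell = g$, yet $G$ is a scan returning $\ell$ occurring strictly after $L$ --- contradicting $L$'s definition as the \emph{last} scan returning $\ell$. (So the contradiction is with the choice of $L$, not quite with ``maximality of $\mathit{last}$'' as you phrase it, since the prefix comparison goes both ways and yields equality.)
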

\begin{proof}
	Assume to the contrary that the windows of two \id{Block-Update} operations $B$ and $B'$
	that do not return \peace{} do intersect.
	Let $H(B')$, $X(B')$, and $H'(B')$ be defined in a similar fashion for $B'$
	as $H$, $X$, $H'$ for $B$.
	In particular, let $H(B')$ be the \id{scan} of $\mathsf{H}$ in $B'$ on~\lineref{line:xh},
	let $X(B')$ be the \id{update} in $B'$ on~\lineref{line:xupdate}, and
	let $H'(B')$ be the \id{scan} in $B'$ on~\lineref{line:xhh}.
	
	Suppose $B$ is performed by process $q_i$, and $B'$ is performed by $q_j \neq q_i$.
	Without loss of generality, suppose that $X(B')$ occurs before $X$.
	Since the windows of $B$ and $B'$ intersect, 
	$X(B')$ occurs after $L$.
	By~\lemmaref{lem:xlastscan1}, $L$ occurs no earlier than $H$. 
	\lemmaref{lem:xbots} applied to $B$ implies that $j > i$,
	as $X(B')$ is an \id{update} by $q_j$ that occurs between $H$ and $H'$.
	Since $H(B')$ occurs before $X(B')$, which occurs before $X$,~\lemmaref{lem:xbots}
	applied to $B'$ implies that $H'(B')$ occurs before $X$.
	
	Let $G$ be the \id{scan} on~\lineref{line:xhelp} in $B'$ with result $g$.
	$G$ occurs before $H(B')$, which occurs before $X$.
	By~\observationref{obs:xbeforex}, we get $\# g_i \leq \# h_i$.
	On the other hand, $G$ occurs after $X(B')$, which occurs after $L$,
	and $L$ occurs no earlier than $H$.
	Thus, by~\observationref{obs:prefix-obs}, $\# g_i \geq \# h_i$.
	Hence, $\#g_i = \#h_i$.
	
	In $B'$, process $q_j$ \id{write}s $g$ to $\mathsf{L}_{j,i}[\#g_i] = \mathsf{L}_{j,i}[\#h_i]$
	on~\lineref{line:xlow} before $H'(B')$.
	$H'(B')$ occurs before $X$ and $q_i$ reads $r[j] = \mathsf{L}_{j,i}[\#h_i]$ 
	on~\lineref{line:xreadLj} after $X$.
	Thus, by~\lemmaref{lem:xhelp}, $g$ is a prefix of $\ell$.
	Since $L$ occurs before $X(B')$, which occurs before $G$, 
	$\ell$ is also a prefix of $g$.
	Therefore, $\ell = g$.
	However, $G$ occurs after $L$, contradicting the definition of $L$.
\end{proof}
Combining the last few lemmas, we prove that \id{Block-Update}s return correct values.
\begin{lemma}[\id{Block-Update}s]
	\label{lem:xblock}
	Consider any \id{Block-Update} operation $B$ by $q_i$ that does not return \peace{}.
	Let $Z$ be the first linearization point of $B$'s \id{Update}s
	and let $Z'$ be the linearization point of the last \id{Update} prior to $Z$ from 
	a \id{Block-Update} $B'$ that does not return \peace{},
	or the beginning of the execution if all \id{Block-Update}s prior to $Z$ return \peace{}.
	$B$ returns the values of all components of $\mathcal{M}$ at $L$, which 
	occurs between $Z'$ and $Z$. 
	Only \id{Update}s from \id{Block-Update}s that return \peace{} by processes $q_j \neq q_i$ 
	are linearized between $L$ and $Z$.
\end{lemma}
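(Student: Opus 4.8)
The plan is to reduce the whole statement to two facts proved earlier in this section: that every \id{Update} of a \id{Block-Update} which does not return \peace{} is linearized at its \id{update} on~\lineref{line:xupdate} (\lemmaref{lem:xblocklin}), and that the windows of such \id{Block-Update}s are pairwise disjoint (\lemmaref{lem:xlastscan3}). First I would record the consequences of~\lemmaref{lem:xblocklin} for $B$ itself: since all of $B$'s \id{Update}s are linearized at $X$, we have $Z = X$, and the window of $B$ is precisely the interval starting immediately after $L$ and ending with $X$. For the value $B$ returns, note that~\lineref{line:xbview} returns $\id{Get-View}(\ell)$, where $\ell$ is the result of the \id{scan} $L$; applying~\lemmaref{lem:xscanok} to $L$ shows that, for each $j$, the $j$'th entry of $\id{Get-View}(\ell)$ is the value of the last \id{Update} to component $j$ of $\mathcal{M}$ linearized before $L$ (and $\bot$ if there is none). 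This is exactly the vector of values of the components of $\mathcal{M}$ immediately after $L$, which establishes the first claim. It then remains to place $L$ between $Z'$ and $Z$, and to characterize the \id{Update}s linearized strictly between $L$ and $Z$.

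To locate $L$: the bound $L < X = Z$ is immediate from~\lemmaref{lem:xlastscan1}, which also gives $H \le L$, where $H$ is the \id{scan} of $\mathsf{H}$ in $B$ on~\lineref{line:xh}. For $Z' < L$, the case where $Z'$ is the beginning of the execution is trivial; otherwise $Z'$ is the linearization point of an \id{Update} belonging to some \id{Block-Update} $B'$ that does not return \peace{}, and by~\lemmaref{lem:xblocklin} this linearization point is exactly $X(B')$, the \id{update} of $B'$ on~\lineref{line:xupdate}. Since $Z' < Z$ we have $B' \ne B$. Suppose, toward a contradiction, that $L$ does not occur after $X(B')$; as $L$ is a \id{scan} and $X(B')$ an \id{update} they are distinct, so $L$ occurs before $X(B')$, and together with $X(B') = Z' < Z = X$ this places $X(B')$ strictly inside the window of $B$. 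But $X(B')$ is the last operation of the window of $B'$, so the windows of $B$ and $B'$ intersect, contradicting~\lemmaref{lem:xlastscan3}. Hence $Z' < L < Z$.

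For the final sentence, let $U$ be an \id{Update} linearized strictly between $L$ and $Z = X$, and let $B''$ be the \id{Block-Update} containing it. I would first rule out $B''$ being a \id{Block-Update} of $q_i$: if $B'' = B$ then $U$ is linearized at $X = Z$ by~\lemmaref{lem:xblocklin}, not strictly before $Z$; if $B''$ is a \id{Block-Update} of $q_i$ that precedes $B$, then $X(B'')$ precedes $B$'s first step $H$, and by~\lemmaref{lem:xblocklin} (or~\lemmaref{lem:splitblock}, should $B''$ return \peace{}) $U$ is linearized no later than $X(B'')$, hence before $H \le L$; if $B''$ is a \id{Block-Update} of $q_i$ that follows $B$, then $U$ is linearized after $B''$'s first step, which in turn follows $X = Z$. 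Each possibility contradicts $U$ being linearized strictly in the open interval between $L$ and $Z$, so $B''$ is performed by some $q_j \ne q_i$. Finally, suppose that $B''$ does not return \peace{}; then by~\lemmaref{lem:xblocklin} $U$ is linearized exactly at $X(B'')$, so $X(B'')$ lies strictly inside the window of $B$, while $X(B'')$ is the last operation of the window of $B''$; since $B'' \ne B$ and neither returns \peace{}, this again contradicts~\lemmaref{lem:xlastscan3}. Therefore $B''$ returns \peace{}, which together with $q_j \ne q_i$ completes the proof.

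The step I expect to be the main obstacle is the characterization of the \id{Update}s linearized between $L$ and $Z$: the linearization point of an \id{Update} need not be a step of the process that issued it, so one must use~\lemmaref{lem:xblocklin} and~\lemmaref{lem:splitblock} to pin down where each \id{Update} is linearized relative to the \id{scan} and \id{update} of its own \id{Block-Update} before the disjointness of windows can be invoked; everything else is a routine assembly of the lemmas already established in this section.
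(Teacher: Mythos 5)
Your proof follows the paper's approach quite closely in outline: you get the return value from \lemmaref{lem:xscanok}, get $Z=X$ from \lemmaref{lem:xblocklin}, place $L$ using \lemmaref{lem:xlastscan1} and the disjointness of windows (\lemmaref{lem:xlastscan3}), and you characterize the \id{Update}s linearized strictly between $L$ and $Z$ by ruling out (i) \id{Update}s belonging to \id{Block-Update}s of $q_i$ and (ii) \id{Update}s from non-\peace{} \id{Block-Update}s of other processes. Those parts are all correct, and your explicit three-way case split on $q_i$'s own \id{Block-Update}s is actually a welcome expansion of the paper's terse ``Since $q_i$ performs $Z$ as its first step after $H$ in $B$ and $B$ is linearized at $X$, no operation by $q_i$ can be linearized between $L$ and $Z$.'' For the exclusion of non-\peace{} \id{Update}s you re-invoke window disjointness whereas the paper simply observes that, once $Z'< L$ is established, the definition of $Z'$ as the \emph{last} such \id{Update} before $Z$ immediately rules out any others in $(L,Z)$; both routes work, the paper's being shorter since the work is already absorbed into the $Z' < L$ argument.

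The gap is the \id{Scan}s. The last sentence of the lemma asserts that the \emph{only operations} linearized strictly between $L$ and $Z$ are \id{Update}s from \peace{}-returning \id{Block-Update}s by $q_j\neq q_i$ --- and this must, in particular, exclude \id{Scan} operations, because the augmented-snapshot specification requires $B$ to return a view at a point $T$ such that ``no \id{Scan} occurs between $T$ and $Z$.'' Your argument opens with ``let $U$ be an \id{Update} linearized strictly between $L$ and $Z$'' and never addresses the possibility that a \id{Scan} by another process is linearized in that window. This is precisely why the paper proves \lemmaref{lem:xlastscan2} (no \id{Scan} is linearized in the window of $B$) and cites it here; without it, you have not verified the part of the specification that distinguishes a returned view from an arbitrary earlier view. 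You need a line invoking \lemmaref{lem:xlastscan2} to complete the proof.
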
 
\begin{proof}
	On~\lineref{line:xbview}, $B$ returns \id{Get-view(\ell)}, which
	by~\lemmaref{lem:xscanok} contains the values of all components of $\mathcal{M}$ at $L$.
	
	By~\lemmaref{lem:xblocklin}, $Z = X$.
	By~\lemmaref{lem:xlastscan1}, $L$ occurs no earlier than $H$ and before $Z$. 
	Recall that the window of $B$ starts immediately after $L$ 
	and ends with $Z$.
	Since $q_i$ performs $Z$ as its first step after $H$ in $B$ and $B$ is linearized at $X$,
	no operation by $q_i$ can be linearized between $L$ and $Z$.
	By~\lemmaref{lem:xlastscan2}, no \id{Scan} operation is 
	linearized between $L$ and $Z$.
	
	If $Z'$ is the linearization point of the last \id{Update} prior to $Z$ from 
	a \id{Block-Update} $B'$ that does not return \peace{},
	then, by~\lemmaref{lem:xblocklin}, $Z'$ is the \id{update} on~\lineref{line:xupdate} in $B'$.
	Hence, by definition, $Z'$ is the end of the window of $B'$.
	By~\lemmaref{lem:xlastscan3}, windows of $B$ and $B'$ are disjoint.
	Since $Z$ occurs after $Z'$, it follows that $Z'$ occurs before $L$.
	If $Z'$ is the beginning of the execution, then $Z'$ also occurs before $L$.
	
	By definition of $Z'$, no \id{Update} from a \id{Block-Update} $B'$ that does 
	not return \peace{} is linearized after $Z'$ and before $Z$, and hence, between $L$ and $Z$.
\end{proof}

\begin{theorem}
	\label{thm:fulnb}
	There is a non-blocking implementation
	of a augmented $m$-component multi-writer snapshot object. 
	A \id{Block-Update} by $q_i$ returns \peace{}
	only if its execution interval contains an \id{update} by a process $q_j$ with $j < i$ 
	(performed as a part of a \id{Block-Update} by $q_j$).
\end{theorem}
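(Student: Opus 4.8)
The plan is to assemble the results proved in this section; the statement has two parts, and each follows almost immediately from earlier lemmas, so the proof is mostly a matter of checking that the pieces fit together.

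For the non-blocking property, I would invoke \lemmaref{lem:impldetails}. It shows that every \id{Block-Update} terminates after exactly $6$ steps, so \id{Block-Update} is wait-free, and that a \id{Scan} terminates after at most $2k+3$ steps, where $k$ is the number of \id{update}s by other processes that append update triples concurrently with the \id{Scan}. Since a single \id{Block-Update} appends only finitely many triples, a \id{Scan} can fail to terminate only if infinitely many \id{Block-Update} operations by other processes overlap its execution interval. Hence, in any infinite execution, either some process completes infinitely many \id{Block-Update} operations, or all but finitely many \id{Block-Update} operations complete and thereafter every pending \id{Scan} runs to completion; in either case infinitely many operations complete, so the implementation is non-blocking.

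For correctness, I would recall the linearization rule fixed at the start of \sectionref{sec:augsnap}: a completed \id{Scan} is linearized at its last \id{scan} of $\mathsf{H}$ (\lineref{line:xlastscan}); an \id{Update} to component $j$ with timestamp $t$ is linearized at the first point at which $\mathsf{H}$ contains a triple with component $j$ and timestamp $t' \succeq t$, with \id{Update}s linearized at a common point ordered first by timestamp and then by component; and incomplete operations are handled by the usual linearizability convention. Because \id{scan} and \id{update} steps on $\mathsf{H}$ are distinct atomic events, these points together with the stated tie-breaking define a single total order on all operations that respects real-time precedence. By \lemmaref{lem:xblocklin} and \lemmaref{lem:splitblock}, the \id{Update}s of every \id{Block-Update} are linearized within its execution interval, and for a \id{Block-Update} that does not return \peace{}, i.e.~an atomic one, they are linearized consecutively, which meets the specification's requirement on atomic \id{Block-Update}s. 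The return values are then correct by two already-proven facts: \corollaryref{cor:xscanok} shows that a \id{Scan} returns, for each component, the value of the last \id{Update} to that component linearized before it; and \lemmaref{lem:xblock} shows that an atomic \id{Block-Update} $B$ returns the view of $\mathcal{M}$ at the point $L$, which lies between the linearization point $Z'$ of the last \id{Update} belonging to a preceding atomic \id{Block-Update} (or the start of the execution) and the first linearization point $Z$ of $B$'s own \id{Update}s, with no \id{Scan} linearized between $L$ and $Z$ --- precisely the window allowed by the specification. A \id{Block-Update} that returns \peace{} has no return-value obligation.

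The remaining claim, that a \id{Block-Update} by $q_i$ returns \peace{} only when its execution interval contains an \id{update} on \lineref{line:xupdate} by some $q_j$ with $j < i$, is the contrapositive of \lemmaref{lem:minblock}; since \lineref{line:xupdate} is executed only inside a \id{Block-Update}, any such \id{update} is performed as part of a \id{Block-Update} by $q_j$. I expect the proof to be short. The one point that deserves care, and that I would spell out explicitly, is that the three families of linearization points --- those of \id{Scan}s, of \id{Update}s of atomic \id{Block-Update}s, and of \id{Update}s of \peace{}-returning \id{Block-Update}s --- genuinely interleave into one valid linearization, i.e.~that no \id{Scan} is forced to observe a partially applied atomic \id{Block-Update}. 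This is exactly what the window arguments of \lemmaref{lem:xlastscan1}, \lemmaref{lem:xlastscan2}, \lemmaref{lem:xlastscan3}, and \lemmaref{lem:xblock} rule out.
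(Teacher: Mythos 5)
Your proposal is correct and takes essentially the same approach as the paper: non-blocking follows because \id{Block-Update} is wait-free and a \id{Scan} can only be stalled by infinitely many concurrent \id{Block-Update}s (the paper argues this directly from the code and the test on \lineref{line:xscancond}, you route it through \lemmaref{lem:impldetails}, but the idea is identical), and the second claim is exactly the contrapositive of \lemmaref{lem:minblock}. Your middle paragraph, which re-assembles the correctness of the linearization from \lemmaref{lem:xblocklin}, \lemmaref{lem:splitblock}, \corollaryref{cor:xscanok}, and \lemmaref{lem:xblock}, is a sound and useful elaboration, but the paper regards that as already established by those lemmas and does not repeat it inside the theorem's proof.
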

\begin{proof}
	From the code, \id{Block-Update} operations are wait-free.
	If a process takes steps but does not return from an invocation of \id{Scan},
	then the test on~\lineref{line:xscancond} must repeatedly fail.
	This is only possible if a new triple is appended to $\mathsf{H}$
	by an \id{update} on~\lineref{line:xupdate}. 
	Since each \id{Block-Update} operation performs only one \id{update} to $\mathsf{H}$,
	other processes must be completing infinitely many invocations of \id{Block-Update}. 
	
	Second part of the theorem follows from~\lemmaref{lem:minblock}.
\end{proof}

%
\section{The Simulation}
\label{sec:simulation}
	
In this section, we prove the main result of our paper:

\begin{theorem} [Simulation]
	\label{thm:simulation}
	Let $T$ be a colorless task, let $f \leq n$, and let $\Pi$ be a protocol solving $T$
	among $n$ processes using an $m$-component multi-writer snapshot.
	\begin{itemize}
		\item If $\Pi$ is obstruction-free and $L$ is a lower bound on the step complexity of solving $T$ in a wait-free manner among $f$ processes 
		using a single-writer snapshot, then 
		$m \geq \min \left\{ \lfloor \frac{n}{f} \rfloor + 1, (\log_2 \frac{L}{f})^{\frac{1}{2}} \right \}$. 
		\item If $\Pi$ is $x$-obstruction-free, for some $1 \leq x < f$, and $T$ cannot be solved in a wait-free manner among $f$ processes using a single-writer snapshot, then $m \geq \lfloor \frac{n-x}{f-x}\rfloor + 1$.
	\end{itemize}
\end{theorem}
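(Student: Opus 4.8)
The plan is to give a reduction: assuming $\Pi$ uses $m$ registers with $m \le \lfloor \frac{n-x}{f-x}\rfloor$ (for the $x$-obstruction-free case) or $m \le \min\{\lfloor \frac{n}{f}\rfloor, (\log_2\frac{L}{f})^{1/2}\}$ (for the obstruction-free case), construct a wait-free protocol for $f$ real processes $q_1,\dots,q_f$, each simulating a block $P_i$ of the $n$ simulated processes, that solves $T$ (in the second bullet, with each real process taking fewer than $L$ steps). Since $T$ is colorless, a correct output of any single simulated process in $P_i$ is a valid output for $q_i$, so it suffices to drive \emph{one} designated process in each $P_i$ to termination. The real processes communicate through the augmented $m$-component snapshot object $\mathcal{M}$ of \sectionref{sec:fullaugsnapshot}, using it to stand in for the $m$-component multi-writer snapshot of the simulated system; a \id{Scan} of $\mathcal{M}$ simulates a \id{scan}, a single-component \id{Block-Update} simulates an \id{update}, and a multi-component \id{Block-Update} simulates a \emph{block update} by several processes of $P_i$ taken together.

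The core idea is the \emph{revisionist} step. Each $q_i$ maintains local copies of the states of all processes in $P_i$ and advances them one (simulated) step at a time, interleaving so that each \id{Block-Update} it issues experiences as little step contention as possible. When $q_i$'s \id{Block-Update} returns $\peace$ (which, by \theoremref{thm:fulnb}, can only happen if a lower-indexed $q_j$ concurrently updated $\mathcal{M}$), $q_i$ has learned its update was not atomic; it then \emph{revises the past} of the simulated processes in $P_i$ — because each simulated process is owned by a single real process, $q_i$ can retroactively reorder or re-time the steps of its own $P_i$-processes to make the interleaving consistent with what actually got linearized, in a way indistinguishable to simulated processes outside $P_i$. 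The view returned by an atomic \id{Block-Update} (guaranteed between $Z'$ and $Z$ with no intervening \id{Scan}, by the specification and \lemmaref{lem:xblock}) is exactly what lets $q_i$ reconstruct a legal simulated execution. When $q_i$ does get stuck and cannot make progress, the plan is to show that it must be \emph{covered}: some other real process is poised to \id{Block-Update} a component, so $q_i$ can advance its simulation of $P_i$ in a solo fashion against a frozen memory — and obstruction-freedom (resp.\ $x$-obstruction-freedom) of $\Pi$ then forces that $P_i$-process to terminate, unless it in turn gets blocked on yet another covered component.

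The counting argument is what ties the space bound to the contradiction. There are only $m$ components of $\mathcal{M}$. In the obstruction-free case, running $q_i$ solo against the simulated protocol can be blocked only if all $m$ components are "claimed" by the other $f-1$ real processes in some structured covering configuration; with $\lfloor n/f\rfloor$ or more processes per block one shows $f$ real processes can always keep $m$ such covering moves staged, forcing someone to finish — the $(\log_2 \frac{L}{f})^{1/2}$ term enters because the recursive covering construction blows up step complexity and one caps the recursion depth. In the $x$-obstruction-free case, a single covered component can be "released" only by letting up to $x$ processes run; with $f-x$ blocks of size $\ge \lfloor\frac{n-x}{f-x}\rfloor + 1$... actually $m\le\lfloor\frac{n-x}{f-x}\rfloor$ means $n \ge x + m(f-x) \ge x + (\text{enough to cover all } m \text{ components } (f-x) \text{ deep})$, so the $n$ simulated processes suffice to mount the covering. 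The main obstacle — and the delicate heart of the paper — is proving that the revision step is always possible and produces a \emph{valid} simulated execution of $\Pi$: one must maintain an invariant (analogous to the covering invariants of \cite{BL93,AE14}) asserting that the simulated configuration is reachable in $\Pi$, that poised simulated updates are correctly tracked, and that every real process's view of $\mathcal{M}$ is consistent with some legal linearization, and then show that each of \id{Scan}, atomic \id{Block-Update}, and $\peace$-returning \id{Block-Update} preserves it.
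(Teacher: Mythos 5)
Your high-level architecture is right (partition the $n$ simulated processes among $f$ real processes, communicate through the augmented snapshot, exploit colorlessness to return once any process in your block terminates), but your account of the revisionist mechanism is backwards, and this is the heart of the proof, not a detail. In the paper, a \id{Block-Update} returning $\peace$ is \emph{useless} for revision: its constituent \id{Update}s may not have been linearized consecutively, so there is no frozen moment between which and the block one could insert hidden steps. The paper is explicit: ``if $\real{B}$ returns $\peace$, then $q_i$ knows that the \id{Update} operations comprising $\real{B}$ have occurred, but not necessarily consecutively. So, $q_i$ cannot use $\real{B}$ to hide steps by any of its simulated processes.'' Revision is instead driven by the \emph{atomic} case: when a covering simulator later builds another block update touching the same set of $r{-}1$ components, it uses the view $\real{V}$ returned by the \emph{earlier atomic} \id{Block-Update} to locally run a \emph{fresh, untouched} process $p_{i,r}$ solo from that earlier point until $p_{i,r}$ covers an $r$th, new component. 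It is not ``retroactively reordering or re-timing'' already-taken steps of $P_i$ so as to fix up a broken interleaving; it is inserting a solo execution, of a process with no prior steps in that window, that is entirely hidden by the later block update. This is the covering-argument style you allude to via \cite{BL93,AE14}, but your $\peace$-triggered version would require revising a process that has already taken visible steps, which the simulation's invariant (Lemma~\ref{lem:geninv}) could not sustain.

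A second, structural omission: the paper uses two kinds of simulators. For the $x$-obstruction-free bullet there are $x$ \emph{direct} simulators (each running one process step-by-step, never revising) and $f{-}x$ \emph{covering} simulators (each owning $m$ processes and running the recursive \textsc{Construct} procedure). Covering simulators have smaller identifiers, precisely so that Theorem~\ref{thm:fulnb} guarantees the recursion makes progress despite $\peace$ returns; and wait-freedom is established by showing that if the $x$ direct simulators ran forever, the corresponding simulated execution would have only $x$ processes taking steps from some point on, contradicting $x$-obstruction-freedom of $\Pi$. Your proposal doesn't distinguish these roles, so it has no way to invoke $x$-obstruction-freedom. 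Finally, your ``cap the recursion depth'' gloss for the $(\log_2 \frac{L}{f})^{1/2}$ term is the right intuition but glides past the actual accounting: the recursion has fixed depth $m$, the paper bounds the number of \id{Block-Update}s each covering simulator applies by an explicit recurrence giving $\leq 2^{fm(m-1)}$, and the $L$ bound follows from showing the whole simulated step count is $\leq 2^{fm^2}$ — it is not a depth cap that one chooses. Fixing the revision trigger, separating direct from covering simulators, and making the recursive covering/\textsc{Construct} counting explicit would bring your proposal in line with the paper's argument.
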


The bound is derived by considering a protocol $\Pi$ for solving a colorless task $T$ among $n$ processes  where $m$ is too small. We show how $f$ processes can simulate this protocol in a wait-manner.
Furthermore, if $L$ is a lower bound on the step complexity of solving $T$ in a wait-free manner, then we show that the step complexity of the simulation is less than $L$. 

In  our simulations, there are $0 \leq d < f$ \emph{direct} simulators and $f-d$ \emph{covering} simulators. We ensure covering simulators have smaller identifiers than direct simulators. Each simulator $q_i$ is
responsible simulating a set of processes $P_i$. If $q_i$ is a direct simulator, then $|P_i| = 1$. Otherwise, $|P_i| = m$. Crucially, each simulated process is simulated by at most one simulator, i.e.~for all $i \neq j$, $P_i$ and $P_j$ are disjoint.

To prove the first case, we consider the simulation with $d = 0$ direct simulators. Here, we show that, if $\Pi$ uses $m < \min \{ \lfloor \frac{n}{f} \rfloor +1, (\log_2 \frac{L}{f})^{\frac{1}{2}} \}$ components, then we can bound the step complexity of the simulation from above by $2^{fm^2} < L$.
For the second case, we consider the simulation with $d = x$ direct simulators and $m < \lfloor \frac{n-x}{f-x} \rfloor + 1$. 
In all cases, the bound on $m$ implies that $(f-x)m + x \leq n$, i.e.~there are enough processes for the simulators to simulate.

As discussed in the preliminaries, without loss of generality, we assume that: 

\begin{assumption}
	\label{ass:altscanupdate}
	In the protocol $\Pi$, each process alternately performs $\id{scan}$ and $\id{update}$ operations on the snapshot object, $M$, until it performs a $\id{scan}$ that allows it to output a value. 
\end{assumption}

\subsection{Simulation Algorithm}

In this section, we describe the simulation algorithms of the direct and covering simulators. Both direct and covering simulators use a non-blocking implementation of
a shared $m$-component augmented 
snapshot object, $\real{M}$,
for simulating the steps (i.e.,~\id{update}s and \id{scan}s on $M$) of the processes they are simulating. We use $\real{M}.\id{Block-Update}$ and $\real{M}.\id{Scan}$ to denote operations on $\real{M}$ and refer to $\real{M}.\id{Update}$s that are part of $\real{M}.\id{Block-Update}$ operations. We use $M.\id{update}$ and $M.\id{scan}$ to denote operations on $M$. Finally, we will say that simulators \emph{apply} operations on $\real{M}$ while the simulated processes \emph{perform} operations on $M$. All other variables in the algorithms are local.

A direct simulator
directly simulates its single process in a step-by-step manner.
A covering simulator attempts to simulate its set of processes so that they 
all cover different components of $\simulated{M}$.
The manner in which it does so resembles a covering argument: 
it tries to simulate its processes so that they perform 
block updates and cover successively more components.
Analogously, this
involves inserting \emph{hidden} steps by some simulated processes, which
are \emph{locally} simulated, i.e.~without performing any operations on $\real{M}$.

We will guarantee that, for each \emph{real execution} of the simulators 
(i.e. an execution by the real processes in the real system),
there exists a corresponding \emph{simulated execution} of the protocol $\Pi$ (by the simulated processes in the simulated system).
However, because of the locally simulated steps, 
the exact correspondence between these executions is too complex
to be described here without proper formalism.

\textbf{Direct simulator's algorithm.} A direct simulator $q_i$ \emph{directly} simulates its single process $p_{i,1} \in P_i$ as follows. Initially, $q_i$ sets the input of $p_{i,1}$ to its input, $x_i$. To simulate an $M.\id{scan}$ by $p_{i,1}$, $q_i$ applies an $\real{M}.\id{Scan}$. To simulate an $M.\id{update}(j,v)$ by $p_{i,1}$, $q_i$ applies a one component $\real{M}.\id{Block-Update}([j],[v])$, ignoring the value returned. At any point, if $p_{i,1}$ outputs some value $y$ and terminates, then $q_i$ outputs $y$ and terminates. The pseudocode appears in~\algorithmref{alg:directsimulator}. 

\begin{algorithm} 
	\begin{algorithmic} [1]
		\State initialize $p_{i,1}$'s input to $x_i$
		\Loop
		\State simulate $p_{i,1}$'s next step (which is an $M.\id{scan}$) using $\real{M}.\id{Scan}$ and update its state
		\If {$p_{i,1}$ is poised to perform $M.\id{update}(j,v)$}
		\State \,\, simulate $p_{i,1}$'s next step using $\real{M}.\id{Block-Update}([j],[v])$ and update its state
		\Else \, $\triangleright$ {$p_{i,1}$ has output some value $y$}
		\State \,\, \textbf{output} $y$ and \textbf{terminate}
		\EndIf
		\EndLoop 
	\end{algorithmic}
	\caption{Pseudocode for a direct simulator $q_i$ on input $x_i$.}
	\label{alg:directsimulator}
\end{algorithm}


\textbf{Covering simulator's algorithm.} A covering simulator $q_i$ applies an $\real{M}.\id{Block-Update}$ operation, $\real{B}$, to attempt to simulate a block update by a subset of the processes in $P_i$. If $\real{B}$ returns a view $\real{V} \neq \peace$, then, by the specification of the augmented snapshot, $\real{M}$, $q_i$ knows that $\real{B}$ was \emph{atomic}, i.e.~the individual $\real{M}.\id{Update}$s in $\real{B}$ can be linearized consecutively. Moreover, $q_i$ knows that $\real{V}$ is a view of $\real{M}$ at some earlier point $\bm{t}$ in the real execution such that there are no $\real{M}.\id{Scan}$s or $\real{M}.\id{Update}$s that are part of atomic $\real{M}.\id{Block-Update}$s linearized between $\bm{t}$ and the linearization point of the first $\real{M}.\id{Update}$ that is part of $\real{B}$. 

Given this knowledge, at some later point $\bm{t}'$ in the real execution, $q_i$ may choose to \emph{revise the past} as follows. First, $q_i$ picks a process $p \in P_i$ such that it has not simulated any steps of $p$ between $\bm{t}$ and $\bm{t}'$, i.e.~the state of $p$ that it currently stores at $\bm{t}'$ is also the state of $p$ that it stored at $\bm{t}$. Then it \emph{locally} simulates a solo execution $\xi$ of $p$ using its current state of $p$, assuming that the contents of $M$ are the same as $\real{V}$. We will guarantee that, at the point $t$ corresponding to $\bm{t}$ in the simulated execution, the contents of $M$ are indeed $\real{V}$ and that the state of $p$ at $t$ is the same as at $\bm{t}$. Hence, this has the effect of inserting $\xi$ immediately after $t$ in the simulated execution. Finally, to ensure that the resulting simulated execution is valid, $q_i$ ensures that $\xi$ only contains $M.\id{update}$s to components updated by $\real{B}$ and $M.\id{scan}$s. Hence, the steps in $\xi$ are hidden by the block update corresponding to $\real{B}$ in the simulated execution and $p$ \emph{could have} taken those steps immediately after $t$. In this case, we say that, \emph{at $\bm{t}'$, $q_i$ revised the past of $p$ using $\real{V}$}.
On the other hand, if $\real{B}$ returns $\peace$, then $q_i$ knows that the \id{Update} operations comprising
$\real{B}$ have occurred, but not necessarily consecutively. So, $q_i$ cannot use $\real{B}$ to hide steps by any of its simulated processes. 

To describe the algorithm of a covering simulator $q_i$, we fix a labelling $p_{i,1},\dots,p_{i,m}$ of the 
processes $P_i$ that $q_i$ simulates. Initially, $q_i$ sets the input of each process in $P_i$ to its input $x_i$. 
The goal of $q_i$ is to \emph{construct} a block update by $P_i$ to all $m$ components of $M$, i.e.~simulate the processes in $P_i$ so that, eventually, $P_i$ covers all components of $M$. To do so, $q_i$ recursively constructs and simulates block updates by $p_{i,1},\dots,p_{i,r}$ to $r$ components of $M$, for increasing $1 \leq r < m$. At any point in $q_i$'s construction, if a process in $P_i$ outputs some value $y$ and terminates, then $q_i$ outputs $y$ and terminates, without further simulating the rest of the processes in $P_i$.

As a base case, to construct a block update to a single component, $q_i$ simulates the next step of $p_{i,1}$, which we will ensure is an $M.\id{scan}$, using $\real{M}.\id{Scan}$. If $p_{i,1}$ is poised to perform $M.\id{update}(j_1,v_1)$ after this, then $q_i$ constructs the block update $M.\id{update}(j_1,v_1)$. Otherwise, $p_{i,1}$ has output some value $y$, so $q_i$ outputs $y$ and terminates.

To construct a block update to $r > 1$ components, $q$ constructs a sequence of block updates $\beta_1',\beta_2',\dots$, each to $r-1$ components, and simulates them using $\real{M}.\id{Block-Update}$ operations. It continues until 
(one of its simulated processes terminates or)
it constructs a block update $\beta_t'$ to $r-1$ components
that updates the same set of components as some block update $\beta_{s}'$ for $s < t$, which was simulated by an \emph{atomic} $\real{M}.\id{Block-Update}$, i.e.~it returns a view $\real{V} \neq \peace$.
Let $j_1,\dots,j_{r-1}$ be the components that $\beta_t'$ updates and let $v_1,\dots,v_{r-1}$ be the values to which it updates these components. After constructing $\beta_t'$, $q_i$ revises the past of $p_{i,r}$ using $\real{V}$, i.e.~it continues its simulation of $p_{i,r}$ by \emph{locally} simulating a solo execution of $p_{i,r}$, assuming that the contents of $M$ are $\real{V}$ at the beginning of this execution. It does so until $p_{i,r}$ 
is about to perform an $M.\id{update}$ to a component $j_{r} \notin \{j_1,\dots,j_{r-1}\}$ with some value $v_r$ (or $p_{i,r}$ terminates). 
If $p_{i,1},\dots,p_{i,r}$ do not terminate, then $q_i$ has constructed the block update $\beta_t \cdot M.\id{update}(j_r,v_r)$. The pseudocode appears in \algorithmref{alg:coveringconstruction}.

\begin{algorithm} 
	\begin{algorithmic}[1]
		\Function{Construct}{$r$} 
		\If {$r = 1$} \, $\triangleright$ base case
		\State simulate $p_{i,1}$'s next step (which is an $M.\id{scan}$) using $\real{M}.\id{Scan}$ and update its state
		\If {$p_{i,1}$ is poised to perform $M.\id{update}(j,v)$}
		\State \,\,\, \Return $([j],[v])$
		\Else \, $\triangleright$ $p_{i,1}$ has output some value $y$
		\State \,\,\, \textbf{output} $y$ and \textbf{terminate}
		\EndIf
		\Else \, $\triangleright$ $r > 1$
		\State $A \gets \emptyset$  \Comment{$A$ contains pairs $(J,\real{V})$, where $J$ is a set of $r-1$ components and $\real{V}$ is a view}
		\Loop
		\State $([j_1,\dots,j_{r-1}],[v_1,\dots,v_{r-1}]) \gets \Call{Construct}{r-1}$ 
		\If {there exists $(J',\real{V}') \in A$ such that $J' = \{j_1,\dots,j_{r-1}\}$} 
		\State \textbf{locally} simulate $p_{i,r}$ assuming contents of $M$ are $\real{V}'$ (updating its state) 
		\Statex \hspace{2.2cm} \textbf{until} $p_{i,r}$ is poised to perform $M.\id{update}$ to a component not in $\{j_1,\dots,j_{r-1}\}$ 
		\Statex \hspace{2.7cm}  \textbf{or}  $p_{i,r}$ has output a value
		\If {$p_{i,r}$ poised to perform $M.\id{update}(j_{r},v_{r})$}
		\State \,\, \Return $([j_1,\dots,j_{r-1},j_{r}],[v_1,\dots,v_{r-1},v_{r}])$
		\Else \, $\triangleright$ $p_{i,r}$ has output some value $y$ 
		\State \,\, \textbf{output} $y$ and \textbf{terminate}
		\EndIf
		\Else 
		\State simulate $p_{i,1},\dots,p_{i,r-1}$'s next steps using 
		\Statex \hspace{2.7cm} $\real{M}.\id{Block-Update}([j_1,\dots,j_{r-1}],[v_1,\dots,v_{r-1}])$ and update their states
		\If {this $\id{Block-Update}$ returns a view $\real{V} \neq \peace{}$} \, $\triangleright$ {i.e., it is atomic}
		\State $A \gets A \cup \{(\{j_1,\dots,j_{r-1}\},\real{V})\}$
		\EndIf
		\EndIf
		\EndLoop
		\EndIf
		\EndFunction
	\end{algorithmic}
	\caption{Pseudocode for covering simulator $q_i$ to construct a block updates to $r$ components, where $1 \leq r \leq m$. Assumes that $p_{i,1}$ is poised to perform $M.\id{scan}$. Returns a block update by $p_{i,1},\dots,p_{i,r}$, represented as a pair $([j_1,\dots,j_r],[v_1,\dots,v_r])$, where $p_{i,g}$ is poised to perform $M.\id{update}(j_g,v_g)$, for $1 \leq g \leq r$.}
	\label{alg:coveringconstruction}
\end{algorithm}

If $q_i$ constructs a block update $\beta$ to $m$ components, 
then  $q_i$ locally simulates $\beta$ followed by
the terminating solo execution, $\xi$, of $p_{i,1}$. 
Then process $q_i$  terminates and outputs the value that $p_{i,1}$
outputs in $\xi$.
Notice that $\beta\xi$ is 
applicable at any point after $\beta$ has been constructed, 
since the block update
completely overwrites 
$M$. 
In fact, these steps will occur at the end of the final simulated execution.
The pseudocode appears in~\algorithmref{alg:coveringsimulator}.

\begin{algorithm} 
	\begin{algorithmic}[1]
		\State initialize $p_{i,1},\dots,p_{i,m}$'s inputs to $x_i$
		\State $\beta \gets \Call{Construct}{m}$ 
		\State store the states of $p_{i,1},\dots,p_{i,m}$
		\State \textbf{locally} simulate $p_{i,1}$'s terminating solo execution, $\xi$, after $\beta$ 
		\State restore the states of $p_{i,1},\dots,p_{i,m}$
		\State \textbf{output} $p_{i,1}$'s output in $\xi$ and \textbf{terminate}
	\end{algorithmic}
	\caption{Pseudocode for a covering simulator $q_i$ on input $x_i$.}
	\label{alg:coveringsimulator}
\end{algorithm}

\subsection{Properties of Covering Simulators}

In this section, we prove properties of the covering simulator's algorithm. To do so, we first consider the procedure, $\textsc{Construct}$, which is used by the covering simulators to construct block updates. 

\begin{proposition}
	\label{prop:construct-procedure}
	Let $1 \leq r \leq m$ and let $\textsc{Q}$ be a call to $\textsc{Construct}(r)$ by a covering simulator $q_i$. If $p_{i,1}$ is poised to perform $M.\id{scan}$ when it is in the state stored by $q_i$ immediately before $\textsc{Q}$, then the following properties hold:
	\begin{enumerate}
		\item \label{prop:construct-procedure-altscanblock} During $\textsc{Q}$, $q_i$ alternately applies $\real{M}.\id{Scan}$s and $\real{M}.\id{Block-Update}$s to at most $r-1$ components, starting with at least one $\real{M}.\id{Scan}$. Each $\real{M}.\id{Scan}$ simulates an $M.\id{scan}$ by $p_{i,1}$ and each $\real{M}.\id{Block-Update}$ to $s \leq r-1$ components simulates $M.\id{update}$s by $p_{i,1},\dots,p_{i,s}$. In particular, $q_i$ does not apply any $\real{M}.\id{Block-Update}$s in $\textsc{Construct}(1)$. 
		\item \label{prop:construct-procedure-terminate} If $q_i$ outputs some value $y$ (and, hence, terminates) during $\textsc{Q}$, then the last operation $q_i$ applied was $\real{M}.\id{Scan}$ and one of $q_i$'s simulated process $p_{i,g} \in P_i$, for some $1 \leq g \leq r$, has output $y$. 
		\item \label{prop:construct-procedure-return} If $q_i$ returns $[(j_1,\dots,j_r),(v_1,\dots,v_r)]$ from $\textsc{Q}$, then the last operation $q_i$ applied was $\real{M}.\id{Scan}$. Moreover, in $\textsc{Q}$, $q_i$ revises the past of $p_{i,r}$ immediately after this $\real{M}.\id{Scan}$ so that it is poised to perform $M.\id{update}(j_r,v_r)$. For $r < g \leq m$, the state of $p_{i,g}$ does not change as a result of the call.
		\item \label{prop:construct-procedure-blocks} Suppose $r > 1$ and $q_i$ does not terminate in any call to $\textsc{Construct}(r-1)$ during $\textsc{Q}$. Let $\real{\delta}$ be the last $\real{M}.\id{Scan}$ in $\textsc{Q}$. Then, during $\textsc{Q}$, $q_i$ applied a sequence of atomic $\real{M}.\id{Block-Update}$s $\real{B}'_{r-1},\dots,\real{B}_{1}'$ such that, for each $1 \leq g \leq r-1$, $\real{B}_{g}'$ updated $g$ components, $q_i$ revised the past of $p_{i,g+1}$ using the view returned by $\real{B}_g'$ immediately after $\real{\delta}$, and, from $\real{B}_{g}'$ until $q_i$ outputs some value or returns from $\textsc{Q}$, $q_i$ does not apply any $\real{M}.\id{Block-Update}$s to $g+1$ or more components.
	\end{enumerate}
\end{proposition}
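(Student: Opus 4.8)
The plan is to prove all four claims simultaneously by induction on $r$. In the base case $r=1$: by \assumptionref{ass:altscanupdate} and the hypothesis, $p_{i,1}$ is poised to perform an $M.\id{scan}$ at the start of $\textsc{Q}$, so $\textsc{Construct}(1)$ applies exactly one $\real{M}.\id{Scan}$ (simulating that $M.\id{scan}$) and then, according to whether $p_{i,1}$ is poised for an $M.\id{update}$ or has output a value, either returns $([j],[v])$ or outputs $p_{i,1}$'s value and terminates. This gives claims \ref{prop:construct-procedure-altscanblock}--\ref{prop:construct-procedure-return} directly (no $\real{M}.\id{Block-Update}$ is applied, the last applied operation is the $\real{M}.\id{Scan}$, and no revision occurs since the returned pair has length one), and claim \ref{prop:construct-procedure-blocks} is vacuous.

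For the inductive step, I would assume the proposition for $\textsc{Construct}(r-1)$ and analyse the loop of $\textsc{Construct}(r)$. The first thing to establish is that $p_{i,1}$ is poised to perform an $M.\id{scan}$ at the start of every iteration: this holds initially by hypothesis, and a new iteration begins only after $q_i$ simulates a block update $\real{M}.\id{Block-Update}([j_1,\dots,j_{r-1}],[v_1,\dots,v_{r-1}])$ by $p_{i,1},\dots,p_{i,r-1}$ --- legitimate because, by claim \ref{prop:construct-procedure-return} of the induction hypothesis for the preceding sub-call, each $p_{i,g}$ is then poised for $M.\id{update}(j_g,v_g)$ --- after which \assumptionref{ass:altscanupdate} leaves $p_{i,1},\dots,p_{i,r-1}$ poised for $M.\id{scan}$s. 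Hence the induction hypothesis applies to every sub-call $\textsc{Construct}(r-1)$ of $\textsc{Q}$. Claim \ref{prop:construct-procedure-altscanblock} then follows by concatenating the alternating $\real{M}.\id{Scan}$/$\real{M}.\id{Block-Update}$ sequences of the sub-calls --- each of which opens and closes with an $\real{M}.\id{Scan}$ --- with the single $\real{M}.\id{Block-Update}$ to $r-1$ components applied at the end of each non-returning iteration, so alternation and the ``starts with $\real{M}.\id{Scan}$'' property carry across iteration boundaries. Claim \ref{prop:construct-procedure-terminate} holds because $q_i$ terminates either inside a sub-call (induction hypothesis) or, after a sub-call, when its local simulation of $p_{i,r}$ outputs a value; in either case the last operation applied to $\real{M}$ was a sub-call's final $\real{M}.\id{Scan}$ and the terminating process has index at most $r$.

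For claim \ref{prop:construct-procedure-return}, $q_i$ returns a length-$r$ pair only through the branch where it finds $(J',\real{V}') \in A$ with $J' = \{j_1,\dots,j_{r-1}\}$ and then revises the past of $p_{i,r}$ by locally simulating it with contents $\real{V}'$ until it is poised for $M.\id{update}(j_r,v_r)$ at a new component; again the last applied $\real{M}$ operation is a sub-call's final $\real{M}.\id{Scan}$, and the states of $p_{i,r+1},\dots,p_{i,m}$ are untouched since neither the sub-calls (induction hypothesis, claim \ref{prop:construct-procedure-return}), nor the block-update branch (simulating only $p_{i,1},\dots,p_{i,r-1}$), nor the local simulation of $p_{i,r}$ changes them. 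The delicate point is that this revision must be legitimate: $q_i$ must not have simulated any step of $p_{i,r}$ between the application of the matching atomic $\real{M}.\id{Block-Update}$ (the one that put $(J',\real{V}')$ into $A$) and the revision --- which again follows from the induction hypothesis, since in between $q_i$ performs only sub-calls to $\textsc{Construct}(r-1)$ and block updates simulating $M.\id{update}$s of $p_{i,1},\dots,p_{i,r-1}$. Finally, for claim \ref{prop:construct-procedure-blocks}, assuming $q_i$ terminates in no sub-call, $\textsc{Q}$ eventually reaches the matching branch; the matched pair $(J',\real{V}')$ was added to $A$ in an earlier iteration in which $q_i$ applied an atomic $\real{M}.\id{Block-Update}$ to the $r-1$ components of $J'$, which I take as $\real{B}'_{r-1}$, while $\real{B}'_{r-2},\dots,\real{B}'_1$ and the revisions of $p_{i,g+1}$ for $g < r-1$ come from claim \ref{prop:construct-procedure-blocks} of the induction hypothesis applied to the last sub-call $\textsc{Construct}(r-1)$ of $\textsc{Q}$ (when $r=2$ no such sub-call exists and only $\real{B}'_1 = \real{B}'_{r-1}$ is needed); the last $\real{M}.\id{Scan}$ of that sub-call is $\real{\delta}$, the revision of $p_{i,r}$ with $\real{V}'$ occurs immediately after it, and the ``no $\real{M}.\id{Block-Update}$ to $g{+}1$ or more components after $\real{B}'_g$'' part holds because $\textsc{Construct}(r)$ applies block updates only to $r-1$ components, sub-calls only to at most $r-2$ components (claim \ref{prop:construct-procedure-altscanblock}), and nothing at all after the last sub-call returns.

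The hard part will be the bookkeeping: choosing the induction hypothesis strong enough --- especially claims \ref{prop:construct-procedure-altscanblock} and \ref{prop:construct-procedure-return} --- that the alternation structure and the invariant ``states of processes indexed above the current recursion depth are untouched'' chain seamlessly across loop iterations and from the sub-calls up into $\textsc{Q}$, and checking the legitimacy of each ``revision of the past'' (that the relevant simulated process's stored state has not changed since the matching atomic block update).
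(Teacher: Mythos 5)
Your proposal is correct and follows essentially the same route as the paper's proof: induction on $r$, with the base case $r=1$ read off the code and claim~\ref{prop:construct-procedure-blocks} vacuous, and the inductive step establishing that $p_{i,1}$ is poised for $M.\id{scan}$ at the start of every loop iteration (via \assumptionref{ass:altscanupdate} and claim~\ref{prop:construct-procedure-return} of the induction hypothesis) so the hypothesis chains through the sub-calls, then deriving each claim from the concatenation structure and the matching-pair bookkeeping in $A$. The one thing you flag as "delicate" --- that $q_i$ has not touched $p_{i,r}$'s stored state between the atomic $\real{M}.\id{Block-Update}$ that produced $(J',\real{V}')$ and the revision --- is indeed correct and is left implicit in the paper's own proof (it follows from claim~\ref{prop:construct-procedure-altscanblock}/\ref{prop:construct-procedure-return} showing sub-calls and block updates touch only $p_{i,1},\dots,p_{i,r-1}$), so spelling it out is a minor improvement rather than a deviation.
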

\begin{proof}
	
	By induction on $r$. The base case is when $r = 1$. Observe that, in $\textsc{Construct}(1)$, $q_i$ applies a single $\real{M}.\id{Scan}$. After simulating $p_{i,1}$'s next step using  this $\real{M}.\id{Scan}$, if $p_{i,1}$ is poised to perform $M.\id{update}(j_1,v_1)$, then $q_i$ returns $([j_1],[v_1])$ from the call. Otherwise, by~\assumptionref{ass:altscanupdate}, the $M.\id{scan}$ allows $p_{i,1}$ to output some value $y$, so $q_i$ outputs $y$ and terminates. It follows that the first three parts of the claim holds. The fourth part of the claim is vacuously true. Now let $r > 1$ and suppose the claim holds for $r-1$. 
	
	Since $p_{i,1}$ is poised to perform $M.\id{scan}$ when $q_i$ calls $\textsc{Construct}(r)$. Hence, by the code, when $q_i$ recursively calls $\textsc{Construct}(r-1)$ for the first time in $\textsc{Construct}(r)$, $p_{i,1}$ is still poised to perform $M.\id{scan}$. It follows that we may apply the induction hypothesis to conclude that, during the first call to $\textsc{Construct}(r-1)$, $q_i$ alternately applies $\real{M}.\id{Scan}$ and $\real{M}.\id{Block-Update}$, starting with at least one $\real{M}.\id{Scan}$ and ending with an $\real{M}.\id{Scan}$. Moreover, if $q_i$ returns from the first call to $\textsc{Construct}(r-1)$, then $p_{i,1},\dots,p_{i,r-1}$ are poised to perform $M.\id{update}$s. By the code, each subsequent call to $\textsc{Construct}(r-1)$ is immediately preceded by an $\real{M}.\id{Block-Update}$, which $q_i$ applied to simulate the $M.\id{update}$s that $p_{i,1},\dots,p_{i,r-1}$ were poised to perform as a result of the previous call to $\textsc{Construct}(r-1)$. By~\assumptionref{ass:altscanupdate}, this implies that, immediately before each subsequent call to $\textsc{Construct}(r-1)$, $p_{i,1}$ is poised to perform $M.\id{scan}$ and the induction hypothesis is applicable to the call. It follows that, during $\textsc{Construct}(r)$, $q_i$ alternately applies $\real{M}.\id{Scan}$ and $\real{M}.\id{Block-Update}$, starting with at least one $\real{M}.\id{Scan}$. Hence, the first part of the claim holds.

	
	If $q_i$ outputs some value $y$ in $\textsc{Construct}(r)$, then either it output $y$ in its last call to $\textsc{Construct}(r-1)$ or $p_{i,r}$ output $y$ in $q_i$'s local simulation of $p_{i,r}$ following this call. In either case, the last operation $q_i$ applied was in its last call to $\textsc{Construct}(r-1)$. Thus, by the induction hypothesis, the last operation $q_i$ applied was an $\real{M}.\id{Scan}$. Furthermore, if $q_i$ outputs $y$ in $\textsc{Construct}(r-1)$, then some process $p_{i,g}$, for $1 \leq g \leq r-1$, has output $y$. Hence, the second part of the claim holds. 
	
	Now suppose $q_i$ returns $([j_1,\dots,j_r],[v_1,\dots,v_r])$ from $\textsc{Construct}(r)$. This implies that $q_i$ did not terminate in any call to $\textsc{Construct}(r-1)$ during the call to $\textsc{Construct}(r)$. Since $\real{A}$ is initialized to empty immediately prior to the loop, by the code, it follows that $q_i$ calls $\textsc{Construct}(r-1)$ more than once.  From the code, it follows that the last call to $\textsc{Construct}(r-1)$ during $\textsc{Construct}(r)$ returned $([j_1,\dots,j_{r-1}],[v_1,\dots,v_{r-1}])$ and $\mathcal{A}$ contained some pair $(\{j_1,\dots,j_{r-1}\},\real{V}')$.
	By the code, this pair was added to $\real{A}$ when $q_i$ applied an atomic $\real{M}.\id{Block-Update}$ $\real{B}'_{r-1}$ to $\{j_1,\dots,j_{r-1}\}$. It applied this $\real{M}.\id{Block-Update}$ to simulate a block update returned by an earlier call to $\textsc{Construct}(r-1)$ made during $\textsc{Construct}(r)$. Following this call to $\textsc{Construct}(r-1)$, $q_i$ revises the past by locally simulating steps of $p_{i,r}$ assuming that the contents of $M$ are the same as $\real{V}'$. It does so until $p_{i,r}$ is poised to perform $M.\id{update}(j_r,v_r)$, for some $j_r \notin \{j_1,\dots,j_{r-1}\}$. Then $q_i$ returns $[(j_1,\dots,j_r),(v_1,\dots,v_r)]$. Thus, the last operation $q_i$ applied was in its last call to $\textsc{Construct}(r-1)$, which, by the induction hypothesis, was an $\real{M}.\id{Scan}$. Moreover, for $1 \leq g \leq r-1$, $p_{i,g}$ is poised to perform $M.\id{update}(j_g,v_g)$. Hence, the third part of the claim holds.
	
	Finally, suppose that $q_i$ does not terminate in any to call to $\textsc{Construct}(r-1)$ during the call to $\textsc{Construct}(r)$. Then, by the previous paragraph, we have shown the existence of $\real{B}'_{r-1}$. If $r = 2$, then $\real{B}'_{r-1} = \real{B}'_1$ and the fourth part of the claim holds. So suppose $r > 2$. By the induction hypothesis, in its last call to $\textsc{Construct}(r-1)$ during $\textsc{Construct}(r)$, $q_i$ applied atomic $\real{M}.\id{Block-Update}$s $\real{B}'_{r-2},\dots,\real{B}'_1$, in that order, such that, for each $1 \leq g \leq r-2$, $\real{B}'_g$ updates $g$ components, $q_i$ locally simulated steps $p_{i,g+1}$ assuming the contents of $M$ are the same as the view returned by $\real{B}'_g$, and, from $\real{B}'_g$ until the end of the procedure, $q_i$ does not apply any $\real{M}.\id{Block-Update}$s to $g+1$ components. Since $\real{B}'_{r-1}$ was applied before the last call to $\textsc{Construct}(r-1)$ began, it follows that $q_i$ applied $\real{B}'_{r-1}$ before $\real{B}'_{r-2}$. 
	
	In either case, by construction, $q_i$ only applies $\real{M}.\id{Block-Update}$s to at most $r-1$ components in $\textsc{Construct}(r)$. Hence, it does not apply any $\real{M}.\id{Block-Update}$s to $r$ components after $\real{B}'_{r-1}$ until $q_i$ terminates or returns from $\textsc{Construct}(r)$. Hence, the fourth part of the claim holds.
\end{proof}

We now prove the main properties of the covering simulator's algorithm.

\begin{lemma}
	\label{lem:covering-simulator}
	If $q_i$ is a covering simulator, then the following holds.
	\begin{enumerate}
		\item \label{lem:covering-simulator-alternating} $q_i$ alternately applies $\real{M}.\id{Scan}$ and $\real{M}.\id{Block-Update}$, until it applies an $\real{M}.\id{Scan}$ that causes it to terminate. 
		\item \label{lem:covering-simulator-operations} Each $\real{M}.\id{Scan}$ applied by $q_i$ simulates an $M.\id{scan}$ by $p_{i,1}$ and each $\real{M}.\id{Block-Update}$  applied by $q_i$ to $r$ components simulates $M.\id{update}$s by $p_{i,1},\dots,p_{i,r}$. Moreover, if an $\real{M}.\id{Block-Update}$ simulates an $M.\id{update}$ by $p_{i,r}$ then it updates at least $r$ components.
		\item \label{lem:covering-simulator-revision} If $q_i$ revises the past of process $p_{i,r}$, for some $r \geq 2$, immediately after applying an operation $\real{\delta}$, then the following holds.
		\begin{enumerate}
			\item \label{lem:covering-simulator-revision-return} $\real{\delta}$ is the last $\real{M}.\id{Scan}$ in a call, $\textsc{Q}$, to $\textsc{Construct}(r)$ such that $q_i$ returns from every call to $\textsc{Construct}(r-1)$ in $\textsc{Q}$.
			\item \label{lem:covering-simulator-revision-others} $q_i$ also revises the past of $p_{i,2},\dots,p_{i,r-1}$ immediately after $\real{\delta}$. 
			\item \label{lem:covering-simulator-revision-next} If $q_i$ does not terminate immediately after $\real{\delta}$, the next operation that $q_i$ applies is an $\real{M}.\id{Block-Update}$ to at least $r$ components. 
		\end{enumerate}
	\end{enumerate}
\end{lemma}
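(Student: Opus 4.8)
The plan is to obtain \lemmaref{lem:covering-simulator} essentially as a corollary of \propositionref{prop:construct-procedure}, applied to the single top-level call $\textsc{Construct}(m)$ that a covering simulator $q_i$ makes in \algorithmref{alg:coveringsimulator}, together with an inspection of the few additional lines of \algorithmref{alg:coveringsimulator} surrounding that call. First I would check the hypothesis of \propositionref{prop:construct-procedure} at the top level: by \assumptionref{ass:altscanupdate}, every simulated process begins by performing an $M.\id{scan}$, so $p_{i,1}$ is poised to perform $M.\id{scan}$ immediately before $q_i$ invokes $\textsc{Construct}(m)$. The proof of \propositionref{prop:construct-procedure}(\ref{prop:construct-procedure-altscanblock}) already shows that this property is re-established before every recursive call to $\textsc{Construct}$, so \propositionref{prop:construct-procedure} is available for every call to $\textsc{Construct}(r)$, $1 \le r \le m$, made by $q_i$; I will use this throughout.

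For parts~\ref{lem:covering-simulator-alternating} and~\ref{lem:covering-simulator-operations}, the key observation is that the only operations $q_i$ applies on $\real{M}$ during \algorithmref{alg:coveringsimulator} are those it applies inside $\textsc{Construct}(m)$; storing and restoring states, locally simulating $\xi$, and outputting touch only local variables. Hence part~\ref{lem:covering-simulator-alternating} follows from \propositionref{prop:construct-procedure}(\ref{prop:construct-procedure-altscanblock}) (the alternation, starting with an $\real{M}.\id{Scan}$), together with \propositionref{prop:construct-procedure}(\ref{prop:construct-procedure-terminate}) and~(\ref{prop:construct-procedure-return}), which guarantee that whenever $q_i$ leaves $\textsc{Construct}(m)$---either by terminating inside it, or by returning the block update $\beta$ and then only simulating $\beta\xi$ locally before outputting---its last operation on $\real{M}$ was an $\real{M}.\id{Scan}$. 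Part~\ref{lem:covering-simulator-operations} restates \propositionref{prop:construct-procedure}(\ref{prop:construct-procedure-altscanblock}); the ``moreover'' clause holds because an $\real{M}.\id{Block-Update}$ to $s$ components simulates precisely the $M.\id{update}$s of $p_{i,1},\dots,p_{i,s}$, so one that simulates an $M.\id{update}$ by $p_{i,r}$ must have $r \le s$.

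The substance of the lemma is part~\ref{lem:covering-simulator-revision}, which is a code-tracing argument on \algorithmref{alg:coveringconstruction}. By inspection, $q_i$ revises the past of $p_{i,r}$ only inside a call $\textsc{Q}$ to $\textsc{Construct}(r)$, in the branch entered when the component set returned by the latest call to $\textsc{Construct}(r-1)$ matches an entry of $A$. For this branch to be reached, $q_i$ cannot have terminated earlier in $\textsc{Q}$, so it returned from every call to $\textsc{Construct}(r-1)$ made so far in $\textsc{Q}$; and immediately after the revision, $q_i$ either returns from $\textsc{Q}$ or terminates (if $p_{i,r}$ outputs), hence makes no further call to $\textsc{Construct}(r-1)$ and no further operation on $\real{M}$ within $\textsc{Q}$. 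This gives part~\ref{lem:covering-simulator-revision-return}: the revision immediately follows the last call to $\textsc{Construct}(r-1)$ in $\textsc{Q}$, whose last operation is an $\real{M}.\id{Scan}$ by \propositionref{prop:construct-procedure}(\ref{prop:construct-procedure-altscanblock}), and since no operation on $\real{M}$ follows it in $\textsc{Q}$, this $\real{M}.\id{Scan}$ is $\real{\delta}$ and is the last $\real{M}.\id{Scan}$ in $\textsc{Q}$. Because $q_i$ returns from every call to $\textsc{Construct}(r-1)$ in $\textsc{Q}$, the hypothesis of \propositionref{prop:construct-procedure}(\ref{prop:construct-procedure-blocks}) holds for $\textsc{Q}$, and that part states directly that $q_i$ also revises the past of $p_{i,g+1}$ for each $1 \le g \le r-1$ immediately after $\real{\delta}$, which is part~\ref{lem:covering-simulator-revision-others}.

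For part~\ref{lem:covering-simulator-revision-next}, I would trace what $q_i$ does after revising $p_{i,r}$: it returns the block update to components $j_1,\dots,j_r$ from $\textsc{Q} = \textsc{Construct}(r)$ to its caller $\textsc{Construct}(r+1)$, which then either finds $\{j_1,\dots,j_r\}$ already in its own set $A$ and immediately revises $p_{i,r+1}$, applying no operation on $\real{M}$, or applies an $\real{M}.\id{Block-Update}$ to those $r$ components. Iterating up the recursion, either $q_i$ applies an $\real{M}.\id{Block-Update}$ to some $r' \ge r$ components as its next operation on $\real{M}$, or the chain of revisions $p_{i,r}, p_{i,r+1}, \dots$ continues with no operations on $\real{M}$ until a simulated process outputs (so $q_i$ terminates) or a block update to all $m$ components is constructed (after which $q_i$ only simulates $\beta\xi$ and outputs). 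The only point needing care is that no $\real{M}.\id{Scan}$ can intervene before this $\real{M}.\id{Block-Update}$, which is clear from the code, since a revision is always immediately followed either by another revision or by the $\real{M}.\id{Block-Update}$ line of $\textsc{Construct}$, never by the start of a fresh loop iteration (which would begin with a recursive call and hence an $\real{M}.\id{Scan}$). The main obstacle, such as it is, is keeping the bookkeeping of the recursion straight in part~\ref{lem:covering-simulator-revision}; the rest is a direct appeal to \propositionref{prop:construct-procedure}.
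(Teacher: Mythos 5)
Your proposal is correct and follows essentially the same route as the paper's proof: all four parts are extracted from \propositionref{prop:construct-procedure} applied to the top-level call $\textsc{Construct}(m)$ and its recursive sub-calls, after checking the $M.\id{scan}$-poised hypothesis via \assumptionref{ass:altscanupdate}. The only stylistic difference is in part~\ref{lem:covering-simulator-revision-next}: you trace the return path up the recursion to locate the next $\real{M}.\id{Block-Update}$, whereas the paper gets there in one line by combining the alternation of part~\ref{lem:covering-simulator-alternating} with the ``moreover'' clause of part~\ref{lem:covering-simulator-operations} (the next operation is a $\real{M}.\id{Block-Update}$ because $\real{\delta}$ was an $\real{M}.\id{Scan}$, and it must simulate $p_{i,r}$'s pending $M.\id{update}$, hence updates at least $r$ components); both are valid.
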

\begin{proof}
	By~\algorithmref{alg:coveringsimulator}, $q_i$ begins with a call $\textsc{Q}$ to $\textsc{Construct}(m)$. By~\propositionref{prop:construct-procedure}.\ref{prop:construct-procedure-altscanblock}, in $\textsc{Q}$, $q_i$ alternately performs $\real{M}.\id{Scan}$ and $\real{M}.\id{Block-Update}$. If $q_i$ terminates in $\textsc{Q}$, then, by~\propositionref{prop:construct-procedure}.\ref{prop:construct-procedure-terminate}, the last operation that it applies is $\real{M}.\id{Scan}$. Otherwise, after $q_i$ returns from $\textsc{Q}$, it only performs local computation. Hence, the last operation it applied was in $\textsc{Q}$, which,  by~\propositionref{prop:construct-procedure}.\ref{prop:construct-procedure-return}, is an $\real{M}.\id{Scan}$.
	
	By~\propositionref{prop:construct-procedure}.\ref{prop:construct-procedure-altscanblock}, each $\real{M}.\id{Scan}$ simulates an $\real{M}.\id{Scan}$ by $p_{i,1}$ and each $\real{M}.\id{Block-Update}$ to $r \leq m-1$ components simulates $M.\id{update}$s by $p_{i,1},\dots,p_{i,r}$. Since each $\real{M}.\id{Block-Update}$ $\real{B}$ simulates a block update $\beta$ returned by a call to $\textsc{Construct}(s)$ and $\textsc{Construct}(s)$ returns a block update by $p_{i,1},\dots,p_{i,s}$ (by~\propositionref{prop:construct-procedure}.\ref{prop:construct-procedure-return}), if $\real{B}$ simulates an $M.\id{update}$ by $p_{i,r}$, then $\beta$ contains an $M.\id{update}$ by $p_{i,r}$ and, hence, $s \geq r$.
	
	If $q_i$ revises the past of process $p_{i,r}$, then it must have done so in a (recursive) call $\textsc{Q}'$ to $\textsc{Construct}(r)$. Moreover, it did not terminate in any call to $\textsc{Construct}(r-1)$ in $\textsc{Q}'$. Thus, by~\ref{prop:construct-procedure}.\ref{prop:construct-procedure-blocks}, $\real{\delta}$ is an $\real{M}.\id{Scan}$ and $q_i$ also revised the pasts of $p_{i,2},\dots,p_{i,r-1}$ immediately after $\real{\delta}$. If $q_i$ does not terminate immediately after $\real{\delta}$, then it applies an $\real{M}.\id{Block-Update}$, which simulates the $M.\id{update}$ by $p_{i,r}$ and, hence, is to at least $r$ components. 
\end{proof}

The next proposition is useful in the step complexity analysis. In particular, an immediate consequence of the proposition is that the number of operations applied by a simulator $q_i$ is $2b+1$, where $b$ is the number of $\real{M}.\id{Block-Update}$s applied by $q_i$.

\begin{proposition}
	\label{prop:alternate-scan-blockupdate}
	Each simulator alternately applies $\real{M}.\id{Scan}$ and $\real{M}.\id{Block-Update}$, until it applies an $\real{M}.\id{Scan}$ that causes it to terminate. 
\end{proposition}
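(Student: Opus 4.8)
The plan is to prove the statement by splitting on the type of simulator: every simulator is either a covering simulator or a direct simulator, and I would verify the claim separately in each case.

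For a covering simulator $q_i$, there is nothing to do beyond citing \lemmaref{lem:covering-simulator}.\ref{lem:covering-simulator-alternating}, which asserts exactly this alternation.

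For a direct simulator $q_i$, I would argue from \algorithmref{alg:directsimulator} by induction on the iterations of its main loop, maintaining the invariant that at the start of each iteration $p_{i,1}$ is poised to perform an $M.\id{scan}$. This holds for the first iteration because, under \assumptionref{ass:altscanupdate}, every process in $\Pi$ alternates $\id{scan}$ and $\id{update}$ starting with a $\id{scan}$; and it is preserved because each iteration first simulates one $M.\id{scan}$ of $p_{i,1}$ with a single $\real{M}.\id{Scan}$ and then, if $p_{i,1}$ is poised for $M.\id{update}(j,v)$, simulates precisely that update with a single one-component $\real{M}.\id{Block-Update}([j],[v])$, after which (again by \assumptionref{ass:altscanupdate}) $p_{i,1}$ is poised for its next $M.\id{scan}$. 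Thus every non-final iteration contributes exactly one $\real{M}.\id{Scan}$ immediately followed by exactly one $\real{M}.\id{Block-Update}$, while the final iteration contributes a single $\real{M}.\id{Scan}$ after which $p_{i,1}$ outputs a value, causing $q_i$ to output that value and terminate without applying any further operation on $\real{M}$. Concatenating over all iterations yields the claimed strict alternation, ending with a terminating $\real{M}.\id{Scan}$.

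I do not expect a real obstacle here, since both cases reduce to reading off the pseudocode, with \assumptionref{ass:altscanupdate} supplying the scan/update alternation of the underlying protocol; the only point worth stating explicitly is that \assumptionref{ass:altscanupdate} applies to the directly simulated process $p_{i,1}$ because it runs $\Pi$ unmodified. I would close by noting the immediate corollary mentioned just before the proposition: in an execution in which a simulator applies $b$ operations of the form $\real{M}.\id{Block-Update}$, strict alternation starting and ending with $\real{M}.\id{Scan}$ forces it to apply $b+1$ operations of the form $\real{M}.\id{Scan}$, hence $2b+1$ operations on $\real{M}$ in total.
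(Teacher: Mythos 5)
Your proposal is correct and takes essentially the same approach as the paper: the covering case is dispatched to \lemmaref{lem:covering-simulator}.\ref{lem:covering-simulator-alternating}, and the direct case is read off \algorithmref{alg:directsimulator} using \assumptionref{ass:altscanupdate}. The paper states the direct-simulator argument more tersely (without the explicit loop-invariant induction), but the underlying reasoning is identical, and your concluding remark about $2b+1$ operations matches the observation the paper records just before the proposition.
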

\begin{proof}
	Since a direct simulator directly simulates its process, using $\real{M}.\id{Block-Update}$ to simulate $M.\id{update}$ and $\real{M}.\id{Scan}$ to simulate $M.\id{scan}$, by~\assumptionref{ass:altscanupdate}, each direct simulator alternately applies $\real{M}.\id{Scan}$ and $\real{M}.\id{Block-Update}$, until it applies an $\real{M}.\id{Scan}$ that causes it to terminate. The claim for covering simulators follows by~\lemmaref{lem:covering-simulator}.\ref{lem:covering-simulator-alternating}.
\end{proof}

\subsection{The Intermediate Execution of a Real Execution}

Recall that $\real{M}$ is implemented from a single-writer atomic snapshot object. Hence, each step in a real execution is an operation on the underlying single-writer snapshot object. However, we proved that $\real{M}.\id{Scan}$s and the $\real{M}.\id{Update}$s that are part of $\real{M}.\id{Block-Update}$ operations are linearizable. Hence, for each real execution, we may consider its sequence
of linearized $\real{M}.\id{Scan}$ and $\real{M}.\id{Update}$ operations. By the simulation algorithm, each $\real{M}.\id{Scan}$ or $\real{M}.\id{Update}$ in this sequence simulates an $M.\id{scan}$ or $M.\id{update}$ by some process. In this section, we define the intermediate execution of a real execution, which facilitates the proof of correctness of the simulation.

To describe this execution, for each real process $q_i$, we specify how the states of the simulated processes in $P_i$ stored by $q_i$ and the contents of $\real{M}$ changes after each operation applied by $q_i$. In the linearized execution, $q_i$ applies $\real{M}.\id{Update}$s (instead of $\real{M}.\id{Block-Update}$s) and $\real{M}.\id{Scan}$s
and, crucially, $q_i$ updates the states of its simulated processes immediately after applying an operation (which may involve revising the pasts of some of its processes). Notice that, in a real execution, a simulator does not know when the $\real{M}.\id{Update}$s, which are part of some $\real{M}.\id{Block-Update}$ it has applied, are linearized. Hence, it only updates the states of its simulated processes after the $\real{M}.\id{Block-Update}$ completes.

%


More formally, for any real execution, we define an \emph{intermediate} execution, $\real{\sigma}$, as follows. Let $\real{\delta}_1,\real{\delta}_2,\dots$ be the sequence of linearized $\real{M}.\id{Scan}$ and $\real{M}.\id{Update}$ operations in the real execution, as described in~\sectionref{sec:fullaugsnapshot}. Then $\real{\sigma}$ is the sequence $\real{C}_0, \real{\delta}_1, \real{C}_1, \real{\delta}_2, \real{C}_2, \dots$, where each configuration
$\real{C}_k$ describes the contents of $\real{M}$ and the state of each simulated process. In particular, in the initial configuration $\real{C}_0$, the state of each simulated process $p_{i,g} \in P_i$ is the initial state of the process, with the same input as $q_i$. For each $k \geq 1$, the configuration $\real{C}_k$ is defined as follows.
\begin{itemize}
	\item Suppose $\real{\delta}_k$ is an $\real{M}.\id{Scan}$ by $q_i$. Then the contents of $\real{M}$ are the same at $\real{C}_{k-1}$ and $\real{C}_k$. The state of \emph{each} process $p_{i,g} \in P_i$ at $\real{C}_k$ is the state of $p_{i,g}$ stored by $q_i$ in the real execution after $\real{\delta}_k$ completes. The states of the other simulated processes are the same at $\real{C}_k$ and $\real{C}_{k-1}$.
	\item Suppose $\real{\delta}_k$ is an $\real{M}.\id{Update}(j,v)$ by $q_i$ that simulates an $M.\id{update}(j,v)$ by process $p_{i,g} \in P_i$. Then the contents of $\real{M}$ at $\real{C}_k$ are the same as at $\real{C}_{k-1}$, except component $j$ has value $v$. The state of $p_{i,g}$ at $\real{C}_k$ is the state of $p_{i,g}$ stored by $q_i$ in the real execution after the $\real{M}.\id{Block-Update}$ containing $\real{\delta}_k$ completes. The states of the other simulated processes are the same at $\real{C}_k$ and $\real{C}_{k-1}$.
\end{itemize}

Observe that the intermediate execution is neither an execution of the real system nor an execution of the simulated system. This is because the operations in the execution are applied by the simulators, while the configurations in the execution contain the states of the simulated processes. The next proposition shows that the intermediate execution behaves like an execution of the simulated system. In the next section, we show how to construct an actual execution of the system system from the intermediate execution by inserting the locally simulated steps. 

\begin{proposition}
	\label{obs:nextstep}
	For each $1 \leq i \leq f$, if $\real{\delta}$ is an operation in $\real{\sigma}$ by $q_i$ that simulates a step $\delta$ by a process $p_{i,g} \in P_i$, then $\delta$ is the next step of $p_{i,g}$ at the configuration in $\real{\sigma}$ immediately before $\real{\delta}$.
\end{proposition}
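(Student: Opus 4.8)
The plan is to proceed by induction on the position of $\real{\delta}$ in the intermediate execution $\real{\sigma}$, tracking for each simulator $q_i$ the invariant that the state of $p_{i,g}$ stored by $q_i$ in the real execution always corresponds to a reachable state of $p_{i,g}$ in the simulated system, and that this stored state agrees with the state of $p_{i,g}$ at the appropriate configuration of $\real{\sigma}$. The key point is that, by the simulation algorithm, each $\real{M}.\id{Scan}$ or $\real{M}.\id{Update}$ in $\real{\sigma}$ that simulates a step $\delta$ of $p_{i,g}$ is emitted by $q_i$ precisely when $q_i$'s stored state of $p_{i,g}$ is poised to perform $\delta$: for a direct simulator this is immediate from \algorithmref{alg:directsimulator} (it simulates $p_{i,1}$'s next step and only then updates its state), and for a covering simulator it follows from \lemmaref{lem:covering-simulator} together with \propositionref{prop:construct-procedure}. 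Thus the step $\delta$ simulated by $\real{\delta}$ is, by definition of the algorithm, the step that $q_i$'s stored copy of $p_{i,g}$ is about to take at the moment $\real{\delta}$ is issued; the proposition then reduces to showing that this stored state equals the state of $p_{i,g}$ at the configuration of $\real{\sigma}$ immediately before $\real{\delta}$.

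The main obstacle is the discrepancy, flagged in the paragraph preceding the proposition, between when a simulator updates its stored states in the real execution and when the corresponding $\real{M}.\id{Update}$s are linearized. When $q_i$ applies an $\real{M}.\id{Block-Update}$ simulating $M.\id{update}$s by $p_{i,1},\dots,p_{i,r}$, it updates the stored states of all of $p_{i,1},\dots,p_{i,r}$ only after the whole $\real{M}.\id{Block-Update}$ returns, whereas in $\real{\sigma}$ the constituent $\real{M}.\id{Update}$s appear one at a time, possibly interleaved (in the non-atomic case) with operations of other simulators. To handle this, I would invoke \lemmaref{lem:xblocklin} (and \lemmaref{lem:splitblock}): all $\real{M}.\id{Update}$s of a single $\real{M}.\id{Block-Update}$ are linearized within the execution interval of that $\real{M}.\id{Block-Update}$, and moreover the only operations that can be interleaved among them are $\real{M}.\id{Update}$s from $\real{M}.\id{Block-Update}$s of \emph{other} simulators (no $\real{M}.\id{Scan}$s can be linearized strictly between the first and last $\real{M}.\id{Update}$ of an atomic one, and for a non-atomic one, by construction, the covering simulator has not yet touched the states of the affected processes). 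Since distinct simulators simulate disjoint sets of processes ($P_i \cap P_j = \emptyset$ for $i \neq j$), none of these interleaved operations changes the state of any $p_{i,g}$; hence the state of $p_{i,g}$ in $\real{\sigma}$ is unchanged from the configuration immediately before the first $\real{M}.\id{Update}$ of the block until that particular $\real{\delta}_k = \real{M}.\id{Update}$ simulating $p_{i,g}$, and at that point the state of $p_{i,g}$ recorded in $\real{C}_k$ is (by the definition of $\real{\sigma}$) exactly what $q_i$ stored after the block completed.

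With this in hand the inductive step is routine. For the base case, at $\real{C}_0$ every $p_{i,g}$ is in its initial state with input $x_i$, matching what each simulator initializes. For the inductive step, consider $\real{\delta} = \real{\delta}_k$ simulating step $\delta$ of $p_{i,g}$: by the induction hypothesis the states of all simulated processes at $\real{C}_{k-1}$ agree with the stored states maintained by the simulators just before the relevant real-execution point, and in particular the state of $p_{i,g}$ at $\real{C}_{k-1}$ equals $q_i$'s stored state of $p_{i,g}$ at the moment it issued the $\real{M}.\id{Scan}$/$\real{M}.\id{Block-Update}$ whose linearized image is $\real{\delta}_k$. Here one uses the covering-simulator structure: if $q_i$ revised the past of $p_{i,g}$ using a view $\real{V}$, then \lemmaref{lem:covering-simulator}.\ref{lem:covering-simulator-revision} ensures the revision is inserted immediately after the relevant $\real{M}.\id{Scan}$, and one must observe that, since $q_i$ had not simulated any step of $p_{i,g}$ between issuing the originating atomic $\real{M}.\id{Block-Update}$ and performing the revision, the stored state of $p_{i,g}$ is unchanged over that interval and hence still matches $\real{C}_{k-1}$. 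By the algorithm, $q_i$ issues $\real{\delta}_k$ to simulate precisely the next step of that stored copy of $p_{i,g}$, so $\delta$ is the next step of $p_{i,g}$ at $\real{C}_{k-1}$, which is the configuration in $\real{\sigma}$ immediately before $\real{\delta}$, as required; and updating $p_{i,g}$'s state according to $\delta$ at $\real{C}_k$ re-establishes the invariant for the next step.
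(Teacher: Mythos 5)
Your proposal is correct and follows essentially the same approach as the paper's proof: identify that a simulator emits an operation exactly when its stored copy of the relevant process is poised for the corresponding step, observe that the definition of $\real{\sigma}$ records the state of $p_{i,g}$ at the right operations, and use the disjointness of the $P_i$ to conclude that nothing interleaved can change that state before $\real{\delta}$. The paper organizes this as a direct two-case argument ($p_{i,1}$ vs.~$p_{i,g}$ for $g>1$) rather than an explicit induction, and it pins down the key operation precisely: for $g>1$, $\real{\delta}$ must be an $\real{M}.\id{Update}$ in a $\real{M}.\id{Block-Update}$ $\real{B}$ returned by a call to $\textsc{Construct}(r)$, and the state of $p_{i,g}$ in $\real{\sigma}$ is fixed (including the revision) at the last $\real{M}.\id{Scan}$ $\real{\delta}'$ of that call, which is also $q_i$'s last linearized operation before $\real{B}$; this is cleaner than your discussion of the ``originating atomic $\real{M}.\id{Block-Update}$.''

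Two minor remarks. First, your invocation of Lemmas~\ref{lem:xblocklin} and~\ref{lem:splitblock} is a detour: those are implementation-level facts about the augmented snapshot, whereas here one only needs the specification of $\real{M}$ (already baked into the definition of $\real{\sigma}$ and its block decomposition). Second, the inductive invariant as stated (``agrees with the state of $p_{i,g}$ at the appropriate configuration of $\real{\sigma}$'') is left imprecise, since $\real{\sigma}$ is indexed by linearized operations rather than by real time; the paper sidesteps this by arguing directly about the unique operation in $\real{\sigma}$ at which $p_{i,g}$'s state is last changed before $\real{\delta}$. Neither issue is a gap, but tightening them would bring your proof to the same level of rigor as the paper's.
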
 
\begin{proof}
	First consider $p_{i,1}$. If $q_i$ is a direct simulator, then, by~\algorithmref{alg:directsimulator}, each operation applied by $q_i$ simulates the next step of $p_{i,1}$ and, after the operation completes, $q_i$ updates the state of $p_{i,1}$. Observe that the same holds if $q_i$ is a covering simulator: by~\propositionref{prop:alternate-scan-blockupdate}, $q_i$ alternately applies $\real{M}.\id{Scan}$ and $\real{M}.\id{Block-Update}$ and, by~\propositionref{prop:construct-procedure}, each of these operations simulates a step of $p_{i,1}$. Since no steps of $p_{i,1}$ are locally simulated by $q_i$, each such operation by $q_i$, in fact, simulates the next step of $p_{i,1}$. Moreover, after each such operation completes, $q_i$ updates the state of $p_{i,1}$. By definition of $\real{\sigma}$, if no operation that simulates a step of $p_{i,1}$ has been applied, then $p_{i,1}$ is in its initial state. Otherwise, immediately after each operation applied by $q_i$ that simulates a step of $p_{i,1}$, the state of $p_{i,1}$ in $\real{\sigma}$ is the same as the state stored by $q_i$ in the real execution after the operation completes. This state remains unchanged until the next operation by $q_i$ that simulates a step of $p_{i,1}$. It follows that $\delta$ is the next step of $p_{i,1}$ at the configuration in $\real{\sigma}$ immediately before $\real{\delta}$.
	
	Now consider $p_{i,g}$, for $g > 1$. By~\propositionref{prop:construct-procedure}, each $\real{M}.\id{Scan}$ applied by $q_i$ simulates a $M.\id{scan}$ by $p_{i,1}$. Since $\real{\delta}$ simulates a step by $p_{i,g}$, for $g > 1$, $\real{\delta}$ is not an $\real{M}.\id{Scan}$. It follows that $\real{\delta}$ is an $\real{M}.\id{Update}$ that is part of some $\real{M}.\id{Block-Update}$ $\real{B}$ to $r \geq g$ components, which simulates a block update returned from a call to $\textsc{Construct}(r)$. By the third part of~\propositionref{prop:construct-procedure}, the last operation $q_i$ applied in this call to $\textsc{Construct}(r)$ is an $\real{M}.\id{Scan}$, $\real{\delta}'$. Moreover, $q_i$ updates the states of $p_{i,1},\dots,p_{i,r}$ immediately after $\real{\delta}'$ completes in the real execution so that they are poised to perform the $M.\id{update}$s simulated by $\real{B}$. Then, immediately after $\real{\delta}'$ in $\real{\sigma}$, $p_{i,g}$ is poised to perform $\delta$. By~\algorithmref{alg:coveringconstruction}, $\real{\delta}'$ is the last operation $q_i$ applied before $\real{B}$. Since no other process simulates steps by $p_{i,g}$, it follows that the state of $p_{i,g}$ at all configurations between $\real{\delta}'$ and $\real{\delta}$ is the same.
	Therefore, $\delta$ is the next step of $p_{i,g}$ immediately before $\real{\delta}$.
\end{proof}

By the properties of the augmented snapshot object $\real{M}$ as described in~\sectionref{sec:fullaugsnapshot}, the intermediate execution $\real{\sigma}$ has a special structure. In particular, the $\real{M}.\id{Update}$s that are part of an \emph{atomic} $\real{M}.\id{Block-Update}$ $\real{B}$ appear consecutively in $\real{\sigma}$. Furthermore, $\real{B}$ returns the contents of $\real{M}$ at a prefix $\real{\alpha}$ of $\real{\sigma}$ such that, between $\real{\alpha}$ and the first $\real{M}.\id{Update}$ that is part of $\real{B}$ in $\real{\sigma}$, there are no $\real{M}.\id{Scan}$s and no $\real{M}.\id{Update}$s that are part of other atomic $\real{M}.\id{Block-Update}$s. Let $\real{B}_1,\dots,\real{B}_\ell$ be the sequence of the \emph{completed} atomic $\real{M}.\id{Block-Update}$ operations in $\real{\sigma}$, i.e.~the $\real{M}.\id{Update}$s that are part of these $\real{M}.\id{Block-Update}$s all appear in $\real{\sigma}$. Then we may write the sequence of operations in $\real{\sigma}$ as $
\real{\alpha}_1\real{\gamma}_1\real{\beta}_1 \cdots \real{\alpha}_\ell\real{\gamma}_\ell\real{\beta}_\ell \real{\alpha}_{\ell+1}$,
where, for $1 \leq t \leq \ell$, $\real{B}_t$ returns the contents of $\real{M}$ at the configuration in $\real{\sigma}$ immediately after the last step in $\real{\alpha}_1\real{\gamma}_1\real{\beta}_1 \cdots \real{\alpha}_{t-1}\real{\gamma}_{t-1}\real{\beta}_{t-1}\real{\alpha}_t$, $\real{\gamma}_t$ contains only $\real{M}.\id{Update}$s that are part of non-atomic $\real{M}.\id{Block-Update}$s, $\real{\beta}_t$ is the sequence of $\real{M}.\id{Update}$s that comprise $\real{B}_t$ in $\real{\sigma}$ and $\real{\alpha}_{\ell+1}$ consists of all the operations in $\real{\sigma}$ following $\real{B}_{\ell}$. We call this the \emph{block decomposition} of $\real{\sigma}$.

\subsection{Correctness of the Simulation}
\label{sec:simproof}

In this section, we state and prove the main invariants of our simulation and use them to prove that our simulation solves the colorless task, $T$. Roughly, our invariants say that, for each intermediate execution (of a real execution), there is a corresponding (simulated) execution of the protocol $\Pi$ such that the state of each process $p_{i,g} \in P_i$ at the end of the simulated execution is the same as the state of $p_{i,g}$ at the end of the intermediate execution. By definition of the intermediate execution, this is the state of $p_{i,g}$ stored by $q_i$ at the end of the real execution, provided $q_i$ has no pending operation. The actual invariants are more complicated because we need to know the exact structure of the simulated execution in order to describe where the hidden steps of simulated processes are inserted.

\begin{lemma}
	\label{lem:geninv}
	Let $\real{\sigma}$ be the intermediate execution of a real execution from an initial configuration $\real{C}_0$ of the real system, let $\real{B}_1,\dots,\real{B}_\ell$ be the sequence of completed atomic $\real{M}.\id{Block-Update}$s in $\real{\sigma}$, and let  $\real{\alpha}_1\real{\gamma}_1\real{\beta}_1 \cdots \real{\alpha}_\ell\real{\gamma}_\ell\real{\beta}_\ell \real{\alpha}_{\ell+1}$ be the block decomposition of $\real{\sigma}$. Define $C_0$ to be the configuration of $\Pi$ in which, for $1 \leq i \leq f$ and $1 \leq g \leq |P_i|$, the input of each process $p_{i,g} \in P_i$ is the input of $q_i$. Then there is a possible execution $\sigma$ of the protocol, $\Pi$,  from an initial configuration $C_0$ of the simulated system, whose steps may be written as $\alpha_1\zeta_1\gamma_1 \beta_1 \cdots \alpha_\ell \zeta_\ell \gamma_\ell \beta_{\ell} \alpha_{\ell+1}$, such that:
	\begin{enumerate}
		\item \label{inv:correspondence} 
		\begin{enumerate} \item \label{inv:corr-prefix} For $1 \leq t \leq \ell$, $\alpha_t$, $\gamma_t$, and $\beta_t$ are obtained by replacing each operation in $\real{\alpha}_t$, $\real{\gamma}_t$, and $\real{\beta}_t$, respectively, with the step that it simulates. 
			\item \label{inv:corr-suffix} $\alpha_{\ell+1}$ is obtained by replacing each operation in $\real{\alpha}_{\ell+1}$ with the step that it simulates.
		\end{enumerate}
		
		\item \label{inv:states} For $1 \leq i \leq f$, the state of each process $p \in P_i$ at the end of $\simulated{\sigma}$ is the same as the state of $p$ at the end of $\real{\sigma}$.
		
		\item \label{inv:contents} For $1 \leq t \leq \ell+1$, if $\alpha_t'$ and $\real{\alpha}'_t$ are prefixes of $\alpha_t$ and $\real{\alpha}_t$, respectively, of the same length, then the contents of $\simulated{M}$ at configuration $\simulated{C}_0\simulated{\alpha}_1\zeta_1\gamma_1\beta_1  \cdots \alpha_{t-1}\zeta_{t-1}\gamma_{t-1}\beta_{t-1}\alpha_{t}'$ are the same as the contents of $\real{M}$ at configuration $\real{C}_0 \real{\alpha}_1 \real{\gamma}_1\real{\beta}_1 \cdots \real{\alpha}_{t-1} \real{\gamma}_{t-1}\real{\beta}_{t-1}\real{\alpha}_t'$.
		
		\item \label{inv:hiddensteps} For $1 \leq t \leq \ell$, if $\zeta_t$ is not empty, then the following properties hold:
		\begin{enumerate} 
			\item $\real{B}_t$ was applied by a covering simulator $q_i$. 
			\item After $\real{B}_t$, there is an $\real{M}.\id{Scan}$ $\real{\delta}'$ in $\real{\sigma}$ applied by $q_i$. 
			\item If $s$ is the number of components $\real{B}_t$ updates, then $\zeta_t$ is a solo execution by $p_{i,s+1}$. Moreover, immediately after $\real{\delta}'$, $q_i$ locally simulated $\zeta_t$ to revise the past of $p_{i,s+1}$ (using the view returned by $\real{B}_t$).
		\end{enumerate}  
	\end{enumerate}
\end{lemma}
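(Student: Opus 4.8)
The plan is to prove \lemmaref{lem:geninv} by induction on the number of operations in the intermediate execution $\real{\sigma}$, processing the consecutive $\real{M}.\id{Update}$s that make up an atomic $\real{M}.\id{Block-Update}$ $\real{B}_t$ (i.e.\ one block $\real{\beta}_t$ of the block decomposition) as a single unit. For the induction to carry, I would strengthen the four invariants with bookkeeping relating the prefix of $\sigma$ built so far to the current state of each simulator in the real execution --- which call to $\textsc{Construct}$ each $q_i$ is currently executing, the (component-set, view) pairs it has so far recorded, and a correspondence between the state-updating events of $q_i$ and the configurations of $\sigma$ --- so that \propositionref{prop:construct-procedure} and \lemmaref{lem:covering-simulator} can be invoked at each step. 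The base case ($\real{\sigma} = \real{C}_0$) has $\ell = 0$ and $\sigma$ empty; all invariants hold because inputs were copied faithfully and both $M$ and $\real{M}$ start with every component $\bot$.

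For the inductive step, let $\real{\delta}$ be the next operation (or the next block $\real{\beta}_t$). There are four cases. \emph{(1)} $\real{\delta}$ is an $\real{M}.\id{Scan}$ by $q_i$ that does not cause $q_i$ to revise the past: by \propositionref{obs:nextstep}, $\real{\delta}$ simulates the $M.\id{scan}$ $\delta$ that is the next step of $p_{i,1}$ at the end of the current prefix of $\sigma$, and by invariant~\ref{inv:contents} the contents of $M$ there equal the contents of $\real{M}$ that $q_i$ read; appending $\delta$ to the current $\alpha$-segment therefore keeps $\sigma$ a valid execution of $\Pi$, and, since this scan changes only the stored state of $p_{i,1}$, it restores invariant~\ref{inv:states}. \emph{(2)} $\real{\delta}$ is an $\real{M}.\id{Update}$ of a \emph{non-atomic} $\real{M}.\id{Block-Update}$: it simulates an $M.\id{update}$ by some $p_{i,g}$ which, by \propositionref{obs:nextstep}, is the next step of $p_{i,g}$; append it to the current $\gamma$-segment. \emph{(3)} $\real{\delta} = \real{\beta}_t$, the updates of a newly completed atomic $\real{B}_t$ by $q_i$ to components $j_1,\dots,j_s$: by \lemmaref{lem:covering-simulator}.\ref{lem:covering-simulator-operations} these simulate $M.\id{update}$s by $p_{i,1},\dots,p_{i,s}$, each the next step of its process by \propositionref{obs:nextstep}; append them in the linearized order as the block $\beta_t$ and open a fresh $\alpha_{t+1}$. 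Crucially, invariant~\ref{inv:contents} at the end of $\alpha_t$ already shows that the view $\real{B}_t$ returns equals the contents of $M$ there, which is recorded for later use. \emph{(4)} $\real{\delta}$ is an $\real{M}.\id{Scan}$ by $q_i$ that \emph{does} trigger revisions: by \lemmaref{lem:covering-simulator}.\ref{lem:covering-simulator-revision}, $q_i$ simultaneously revises the pasts of $p_{i,2},\dots,p_{i,r}$ using the views of atomic $\real{M}.\id{Block-Update}$s $\real{B}'_1,\dots,\real{B}'_{r-1}$ that it applied and completed earlier; each $\real{B}'_g$ is therefore some $\real{B}_{t_g}$ among $\real{B}_1,\dots,\real{B}_\ell$ whose block $\real{\beta}_{t_g}$ precedes $\real{\delta}$ in $\real{\sigma}$. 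I set $\zeta_{t_g}$ to be the solo execution of $p_{i,g+1}$ that $q_i$ locally simulates while revising its past, and also append to the current $\alpha$-segment the $M.\id{scan}$ of $p_{i,1}$ that $\real{\delta}$ simulates.

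The crux is verifying the invariants in case \emph{(4)}, where each $\zeta_{t_g}$ must be spliced into the \emph{middle} of the already-constructed $\sigma$, right after $\alpha_1\zeta_1\gamma_1\beta_1\cdots\alpha_{t_g}$. Applicability of $\zeta_{t_g}$ there follows from invariant~\ref{inv:contents} --- the contents of $M$ at that configuration equal those of $\real{M}$ at the end of $\real{\alpha}_1\real{\gamma}_1\real{\beta}_1\cdots\real{\alpha}_{t_g}$, which is exactly the view $\real{B}_{t_g}$ returns and hence the view against which $q_i$ locally simulated $p_{i,g+1}$ --- together with the fact that the stored state of $p_{i,g+1}$ was untouched between $\real{B}_{t_g}$ and $\real{\delta}$. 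The splice preserves invariant~\ref{inv:contents} for all later segments because $\zeta_{t_g}$ performs $M.\id{update}$s only to components that $\real{B}_{t_g}$ updates, and the immediately following block $\beta_{t_g}$ overwrites all of those components with $\real{B}_{t_g}$'s values, so the contents of $M$ from the end of $\beta_{t_g}$ onward are unchanged. Invariant~\ref{inv:states} is reestablished because $q_i$ computed its new states for $p_{i,2},\dots,p_{i,r}$ by running precisely these solo executions from the same (by induction, matching) starting states against these same views; invariants~\ref{inv:correspondence} and~\ref{inv:hiddensteps} are then immediate from the construction. At every step the block decomposition of the extended $\real{\sigma}$ differs from the old one only by appending $\real{\delta}$ to the last $\real{\alpha}$-segment (cases 1, 2, 4) or by closing $\real{\beta}_\ell$ and opening $\real{\alpha}_{\ell+1}$ (case 3), so the promised form of $\sigma$ persists. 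I expect the main obstacle to be exactly this middle-of-execution splicing, together with keeping the strengthened bookkeeping consistent across all four cases (direct-simulator steps fall under cases 1--3); once that is in place, the remaining checks are routine applications of \propositionref{obs:nextstep}, \propositionref{prop:construct-procedure}, \lemmaref{lem:covering-simulator}, \assumptionref{ass:altscanupdate}, and the augmented-snapshot specification (notably \lemmaref{lem:xblock}).
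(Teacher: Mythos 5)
Your overall strategy matches the paper's: induction on the length of the intermediate execution, with Cases~1--3 corresponding to the paper's ``no revision'' case and your Case~4 (the middle-of-execution splice of locally simulated steps) corresponding to the paper's ``revision'' case, and both proofs derive applicability of each spliced $\zeta$ from invariant~\ref{inv:contents} and the augmented-snapshot guarantee that the returned view is the contents of $\real{M}$ at the end of the relevant $\real{\alpha}$-segment. Your extra bookkeeping in the invariant is a minor reorganization of what the paper packages into \propositionref{prop:construct-procedure} and \lemmaref{lem:covering-simulator}.

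There is, however, a genuine gap in the crux step. To splice $\zeta_{t_g}$ (a solo execution of $p_{i,g+1}$) after $\alpha_1\zeta_1\gamma_1\beta_1\cdots\alpha_{t_g}$, you must know the state of $p_{i,g+1}$ \emph{at that point of the simulated execution} equals the state from which $q_i$ locally simulated the revision. You justify this via ``the stored state of $p_{i,g+1}$ was untouched between $\real{B}_{t_g}$ and $\real{\delta}$'' plus invariant~\ref{inv:states} ``by induction.'' But ``stored state untouched'' is a fact about the \emph{real} execution, while invariant~\ref{inv:states} only relates the \emph{end} of $\sigma$ to the \emph{end} of $\real{\sigma}$; you need the state of $p_{i,g+1}$ at the \emph{interior} splice point of $\sigma$, and using invariant~\ref{inv:states} there is circular unless you first show that $p_{i,g+1}$ takes no steps in the entire suffix of $\sigma$ after the splice point --- not just in the $\gamma$/$\beta$/$\alpha$ segments (which follows from the correspondence and the bound on components updated by $q_i$ after $\real{B}_{t_g}$), but also in every later $\zeta_t$ (which requires invariant~\ref{inv:hiddensteps} combined with the fact that, had such a $\zeta_t$ revised $p_{i,g+1}$, the next operation applied by $q_i$ would be a block update to at least $g+1$ components, contradicting \propositionref{prop:construct-procedure}.\ref{prop:construct-procedure-blocks}). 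This ``$p_{i,g+1}$ takes no simulated steps in the suffix'' argument is where the paper spends most of the work in its Case~2 (via a nested induction over $g$ and the intermediate property~$2'$), and your proposal does not supply it. A related omission: your splice must be performed for \emph{all} $g = 0,\dots,r-1$ in a controlled order, and the ``state untouched'' reasoning must be redone after each earlier splice; the paper's inner induction with property~$2'$ is the mechanism that keeps this sound, and you need some analogue of it.
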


\begin{proof}
	By induction on the length of $\real{\sigma}$. The base case is when $\real{\sigma} = \real{C}_0$. In this case, we define $\sigma = C_0$. Then $\sigma$ is a possible execution of $\Pi$ and property~\ref{inv:states} holds by definition of $C_0$. Property~\ref{inv:contents} holds since the contents $\real{M}$ and $\simulated{M}$ are initially the same. Properties~\ref{inv:correspondence} and \ref{inv:hiddensteps} are vacuously true. Therefore, $\sigma$ satisfies the invariant for $\real{\sigma}$. 
	
	Now suppose $\sigma$ satisfies the invariant for $\real{\sigma}$ and consider $\hat{\real{\sigma}}$, which contains an additional operation, $\real{\delta}$, by some simulator, $q_i$. Let $\delta$ be the step of $p_{i,g} \in P_i$ that is simulated by $\real{\delta}$ and let $\real{\alpha}_1\real{\gamma}_1\real{\beta}_1 \cdots \real{\alpha}_\ell \real{\gamma}_{\ell}\real{\beta}_\ell \real{\alpha}_{\ell+1}$ be the block decomposition of $\real{\sigma}$. By the induction hypothesis, the steps of $\sigma$ may be written as $\alpha_1\zeta_1\gamma_1\beta_1 \cdots \alpha_1\zeta_\ell\gamma_\ell\beta_\ell\alpha_{\ell+1}$ so that the invariant holds. We will define an execution $\hat{\sigma}$ that satisfies the invariant for $\hat{\real{\sigma}}$. 
	We consider two cases.
	
	\textbf{Case 1:} \emph{Immediately after $\real{\delta}$, $q_i$ does not revise the past of any process}. By property~\ref{inv:states} of the induction hypothesis, the state of $p_{i,g}$ at the end of $\sigma$ is the same as the state of $p_{i,g}$ at the end of $\real{\sigma}$. By~\propositionref{obs:nextstep}, $p_{i,g}$ is poised to perform $\delta$ immediately before $\real{\delta}$ in $\hat{\real{\sigma}}$, hence, at the end of $\real{\sigma}$. It follows that $\delta$ is the next step of $p_{i,g}$ at the end of $\sigma$. By property~\ref{inv:contents} of the induction hypothesis, the contents of $\real{M}$ at the end of $\real{\sigma}$ are the same as the contents of $\simulated{M}$ at the end of $\simulated{\sigma}$. If $\real{\delta}$ is a $\real{M}.\id{Scan}$, then this implies that $\real{\delta}$ and $\delta$ return the same output. Otherwise, $\real{\delta}$ and $\delta$ update the same component with the same value. Define $\hat{\sigma}$ to be the execution that is the same as $\sigma$, except it contains the additional step, $\delta$. Then the state of $p_{i,g}$ at the end of $\hat{\real{\sigma}}$ is the same as the state of $p_{i,g}$ at the end of $\hat{\simulated{\sigma}}$. The states of all other processes are unchanged. Hence, $\hat{\simulated{\sigma}}$ is a possible execution of $\Pi$ and property~\ref{inv:states} holds for $\hat{\sigma}$.
	
	To show that properties~\ref{inv:correspondence},~\ref{inv:contents}, and~\ref{inv:hiddensteps} hold, we consider the block decomposition of $\hat{\real{\sigma}}$. If $\real{\delta}$ is a $\real{M}.\id{Scan}$ or an $\real{M}.\id{Update}$ that is part of either a non-atomic $\real{M}.\id{Block-Update}$ or an $\real{M}.\id{Block-Update}$ that is incomplete in  $\hat{\real{\sigma}}$, then the block decomposition of $\hat{\real{\sigma}}$ is the same as that of $\real{\sigma}$, except with $\real{\delta}$ appended to the end of $\real{\alpha}_{\ell+1}$. Observe that we may write the steps of $\hat{\sigma}$ as $\alpha_1\zeta_1\gamma_1\beta_1 \cdots \alpha_\ell\zeta_\ell\gamma_\ell\beta_\ell\hat{\alpha}_{\ell+1}$, where $\hat{\alpha}_{\ell+1} = \alpha_{\ell+1}\delta$. Hence, property~\ref{inv:corr-suffix} holds. Properties~\ref{inv:corr-prefix}, \ref{inv:contents}, and \ref{inv:hiddensteps} are unaffected because they only refer to the parts of $\real{\sigma}$ and $\simulated{\sigma}$ up to and including $\real{\beta}_{\ell}$ and $\simulated{\beta}_{\ell}$, which are unchanged. 
	
	Now suppose that $\real{\delta}$ is the last $\real{M}.\id{Update}$ that is part a complete atomic $\real{M}.\id{Block-Update}$ $\real{B}$ in $\hat{\real{\sigma}}$. Then $\real{B} = \real{B}_{\ell+1}$ is the $(\ell+1)$'th such $\real{M}.\id{Block-Update}$ in $\hat{\real{\sigma}}$ and the block decomposition of $\hat{\real{\sigma}}$ is
	\[\real{\alpha}_1\real{\gamma}_1\real{\beta}_{\ell} \cdots \real{\alpha}_{\ell}\real{\gamma}_\ell\real{\beta}_\ell \real{\alpha}'_{\ell+1} \real{\gamma}_{\ell+1} \real{\beta}_{\ell+1} \real{\alpha}_{\ell+2} \, ,
	\]
	where $\real{\alpha}'_{\ell+1}$ is some prefix of $\real{\alpha}_{\ell+1}$, $\real{B}_{\ell+1}$ returns the contents of $\real{M}$ at $\real{C}_0 \real{\alpha}_1 \cdots \real{\alpha}_\ell\real{\gamma}_\ell\real{\beta}_\ell \real{\alpha}_{\ell+1}'$, $\real{\gamma}_{\ell+1}$ only contains $\real{M}.\id{Update}$s that are part of non-atomic $\real{M}.\id{Block-Update}$s, $\real{\beta}_{\ell+1}$ is the sequence of $\real{M}.\id{Update}$s that are part of $\real{B}_{\ell+1}$ in $\hat{\real{\sigma}}$, and $\real{\alpha}_{\ell+2}$ is empty.
	
	The steps of $\hat{\sigma}$ may be written as $\alpha_1\zeta_1\gamma_1 \beta_1 \cdots \alpha_\ell\zeta_\ell\gamma_\ell \beta_\ell \alpha'_{\ell+1}\zeta_{\ell+1}\gamma_{\ell+1}\beta_{\ell+1}\alpha_{\ell+2}$, where $\alpha'_{\ell+1}$, $\gamma_{\ell+1}$, and $\beta_{\ell+1}$ are obtained by replacing each operation in $\hat{\real{\alpha}}_{\ell+1}$, $\real{\gamma}_{\ell+1}$, and $\real{\beta}_{\ell+1}$ with the step that it simulates and both $\zeta_{\ell+1}$ and $\alpha_{\ell+2}$ are empty. Then property~\ref{inv:correspondence} holds by definition. Property~\ref{inv:hiddensteps} remains unchanged for $1 \leq t \leq \ell$ and holds for $t = \ell+1$ since $\zeta_{\ell+1}$ is empty. Finally, since property~\ref{inv:contents} holds for $\sigma$ and every prefix of $\real{\alpha}'_{\ell+1}$ is a prefix of $\real{\alpha}_{\ell+1}$, property~\ref{inv:contents} holds for $\hat{\sigma}$. 
	
	Therefore, $\hat{\sigma}$ satisfies the invariant for $\hat{\real{\sigma}}$ in this case.
	
	\textbf{Case 2:} \emph{Immediately after $\real{\delta}$, $q_i$ revises the past of some processes}. Since only a covering simulator may revise the past of its simulated processes, $q_i$ is a covering simulator. Consider the largest $r \geq 2$ such that $q_i$ revises the past of $p_{i,r}$ immediately after $\real{\delta}$. (Recall that $q_i$ never revises the past of $p_{i,1}$.) Then the states of processes $p_{i,r+1},\dots,p_{i,m}$ are unchanged after $\real{\delta}$. By~\lemmaref{lem:covering-simulator}.\ref{lem:covering-simulator-revision-return}, $\real{\delta}$ is the last $\real{M}.\id{Scan}$ applied by $q_i$ in a call, $\textsc{Q}$, to $\textsc{Construct}(r)$ such that $q_i$ returns from every call to $\textsc{Construct}(r-1)$ in $\textsc{Q}$. Hence,  by~\lemmaref{lem:covering-simulator}.\ref{lem:covering-simulator-operations}, $\real{\delta}$ simulates an $M.\id{scan}$, $\delta$, by $p_{i,1}$. Moreover, by~\lemmaref{lem:covering-simulator}.\ref{lem:covering-simulator-revision-others}, $q_i$ also revises the pasts of $p_{i,2},\dots,p_{i,r-1}$ immediately after $\real{\delta}$.
	
	Since $q_i$ returns from every call to $\textsc{Construct}(r-1)$ in $\textsc{Q}$, by~\propositionref{prop:construct-procedure}.\ref{prop:construct-procedure-blocks}, in $\textsc{Q}$, $q_i$ applied a sequence of atomic $\real{M}.\id{Block-Update}$s $\real{B}'_{{r-1}},\dots,\real{B}'_{1}$ such that, for each $1 \leq s \leq r-1$:
	\begin{enumerate} [(i)]
		\item \label{inv:case2-numcomps} $\real{B}'_{s}$ updates $s$ components,
		\item \label{inv:case2-revision} $q_i$ revised the past of $p_{i,s+1}$ using the view returned by $\real{B}'_{s}$ (i.e.~$q_i$ locally simulated a solo execution, $\xi_s$, of $p_{i,s+1}$, assuming the contents of $M$ are the same as the view returned by $\real{B}'_{s}$, and $\xi_s$ contains only $M.\id{update}$s to components updated by $\real{B}'_{s}$ and $M.\id{scan}$s), and
		\item \label{inv:case2-nomoresteps} from $\real{B}'_{s}$ until $q_i$ terminates or returns from $\textsc{Q}$ (i.e.~until immediately after $\real{\delta}$, since it is the last operation in $\textsc{Q}$), $q_i$ does not apply any $\real{M}.\id{Block-Update}$s to $s+1$ or more components. 
	\end{enumerate}  
	$\real{B}'_{r-1},\dots,\real{B}_1'$ are complete atomic $\real{M}.\id{Block-Update}$s in $\real{\sigma}$. Consider their indices $b_{r-1},\dots,b_1$ in the sequence of all complete atomic $\real{M}.\id{Block-Update}$s, $\real{B}_1,\dots,\real{B}_\ell$, in $\real{\sigma}$, i.e.~$\real{B}'_s = \real{B}_{b_s}$. Observe that, since $q_i$ applied $\real{B}'_{s}$ before it applied $\real{B}'_{s-1}$, $b_s < b_{s-1}$.

	Let $\tilde{\sigma}_0 = \sigma$. For $1 \leq g \leq r-1$, define $\tilde{\sigma}_g$ to be the revision of $\sigma$ where, for $r-g \leq h \leq r-1$, $\xi_h$ replaces $\zeta_{b_h}$ in $\sigma$. Intuitively, $\tilde{\sigma}_g$ contains the hidden steps of $p_{i,r},\dots,p_{i,r-g+1}$ that were locally simulated by $q_i$ immediately after $\real{\delta}$. We prove, by induction on $0 \leq g \leq r-1$, that $\hat{\sigma}_g = \tilde{\sigma}_g\delta$ is a valid execution of $\Pi$ that satisfies all properties of the invariant for $\hat{\real{\sigma}}$, except for property~\ref{inv:states}. Instead, it satisfies the following modified version of property~\ref{inv:states}:
	\begin{enumerate}
		\item[($2'$)] \label{inv:modifiedstates} The state of every process $p \in \{p_{i,2},\dots,p_{i,r-g}\}$ at the end of $\hat{\sigma}_g$ is the same as the state of $p$ at the end of $\real{\sigma}$ (i.e., its state has not yet been revised). The state of every other process $p$ at the end of $\hat{\sigma}_g$ is the same as the state of $p$ at the end of $\hat{\real{\sigma}}$.
	\end{enumerate}
	Recall that, if $q_i$ terminates after $\real{\delta}$ because it constructs a block update $\beta$ to $m$ components, then $q_i$ saves the states of $p_{i,1},\dots,p_{i,m}$ and, after it locally simulates $\beta$ and the solo execution of $p_{i,1}$, it restores their saved states. Thus, we do not need to treat this case any differently.
	
	The base case is $g = 0$. Note that $\real{\delta}$ is an $\real{M}.\id{Scan}$, so the block decomposition of $\hat{\real{\sigma}}$ is the same as that of $\real{\sigma}$, except with $\real{\delta}$ appended to the end of $\real{\alpha}_{\ell+1}$. 
	Observe that we may write the steps of $\hat{\sigma}_0$ as $\alpha_1\zeta_1\gamma_1\beta_1 \cdots \alpha_\ell\zeta_\ell\gamma_\ell\beta_\ell\hat{\alpha}_{\ell+1}$, where $\hat{\alpha}_{\ell+1} = \alpha_{\ell+1}\delta$. Hence, property~\ref{inv:corr-suffix} holds. Properties~\ref{inv:corr-prefix}, \ref{inv:contents}, and \ref{inv:hiddensteps} are unaffected because they only refer to the parts of $\real{\sigma}$ and $\simulated{\sigma}$ up to and including $\real{\beta}_{\ell}$ and $\simulated{\beta}_{\ell}$, which are unchanged. 
	
	Since $\sigma$ satisfies the invariant for $\real{\sigma}$, by property~\ref{inv:states}, the state of every process $p$ at the end of $\sigma$ is the same as the state of $p$ at the end of $\real{\sigma}$.  $\delta$ only changes the state of $p_{i,1}$.  Thus, the state of every other process is the same at the end of $\sigma$ and $\hat{\sigma}_0 = \sigma\delta$. 
	The state of $p_{i,1}$ is updated by $\real{\delta}$. By property~\ref{inv:contents} of the induction hypothesis for $\sigma$, the contents of $M$ at the end of $\sigma$ are the same as the contents of $\real{M}$ at the end of $\real{\sigma}$. Hence, $\real{\delta}$ and $\delta$ return the same view. It follows that the state of $p_{i,1}$ at the end of $\hat{\sigma}_0$ is the same as the state of $p_{i,1}$ at the end of $\hat{\real{\sigma}}$.
	Since $\hat{\sigma}_0$ does not contain the revisions of $p_{i,2},\dots,p_{i,r}$ caused by $\real{\delta}$, their states are the same at the end of $\hat{\sigma}_0$ and $\real{\sigma}$.
	The states of every other process does not change as a result of $\real{\delta}$, so its state at the end of $\real{\sigma}$ and $\hat{\real{\sigma}}$ are the same. Hence, its state at the end of
	$\hat{\sigma}_0$ is the same as its state at the end of 
	$\hat{\real{\sigma}}$.
	Thus, property $2'$ holds for $\hat{\sigma}_0$.
	
	Let $0 < g \leq r-1$ and suppose the claim holds for $g-1$. 
	Let $\hat{\sigma}_{g-1}'$ be the prefix of $\hat{\sigma}_{g-1}$ up to and including $\alpha_{b_{r-g}}$. Since $\hat{\sigma}_{g-1}$ is a valid execution of $\Pi$, $\hat{\sigma}'_{g-1}$ is a valid execution of $\Pi$. The remainder of $\hat{\sigma}_{g-1}$ is 
	\[
	\zeta_{b_{r-g}}\gamma_{b_{r-g}}\beta_{b_{r-g}} \alpha_{b_{r-g}+1}\zeta_{b_{r-g}+1}\gamma_{b_{r-g}+1}\beta_{b_{r-g}+1}\cdots \alpha_{\ell}\zeta_{\ell}\gamma_{\ell}\beta_{\ell}\alpha_{\ell+1}\delta \, .
	\]
	We first show that it does not contain any steps by $p_{i,r-g+1}$. We separately consider different parts of this suffix.
	
	Suppose, for a contradiction, that, for some $b_{r-g} \leq t \leq \ell$, $\zeta_t$ contains a step by $p_{i,r-g+1}$. Then, by property~\ref{inv:hiddensteps} of the induction hypothesis for $\sigma$, $\zeta_t$ is a solo execution by $p_{i,s+1}$, where $s$ is the number of components that $\real{B}_t$ updates. Moreover, after $\real{B}_t$, there is an $\real{M}.\id{Scan}$ $\real{\delta}'$ in $\real{\sigma}$ applied by $q_i$ such that, immediately after $\real{\delta}'$, $q_i$ locally simulates $\zeta_t$ to revise the past of $p_{i,s+1}$ using the view returned by $\real{B}_t$. Since $\zeta_t$ contains a step by $p_{i,r-g+1}$, $s = r-g$.  Observe that, since $\real{\delta}$ does not occur in $\real{\sigma}$, $\real{\delta}' \neq \real{\delta}$. Since $q_i$ applies $\real{\delta}$ after $\real{\delta}'$, it does not terminate immediately after $\real{\delta}'$. Thus, by~\lemmaref{lem:covering-simulator}.\ref{lem:covering-simulator-revision-next}, after $\real{\delta}'$, $q_i$ next applies an $\real{M}.\id{Block-Update}$ $\real{B}'$ to at least $s+1$ components. Since $\real{B}'$ occurs after $\real{B}_{b_s} = \real{B}'_s$ and before $\real{\delta}$, this contradicts (\ref{inv:case2-nomoresteps}).
	Thus, for $b_{r-g} \leq t \leq \ell$, $\zeta_t$ does not contain any steps by $p_{i,r-g+1}$.
	
	Suppose, for a contradiction, that, for some $b_{r-g}+1 \leq t \leq \ell$, $\alpha_t\gamma_t\beta_t$ contains a step by $p_{i,r-g+1}$. Then, by property~\ref{inv:correspondence} of the induction hypothesis for $\sigma$, $\real{\alpha}_t\real{\gamma}_t\real{\beta}_t$ contains an operation $\real{\delta}'$ that simulates a step by $p_{i,r-g+1}$. By~\lemmaref{lem:covering-simulator}.\ref{lem:covering-simulator-operations}, each $\real{M}.\id{Scan}$ applied by $q_i$ simulates an $M.\id{scan}$ by $p_{i,1}$. Hence, $\real{\delta}'$ is an $\real{M}.\id{Update}$ that is part of an $\real{M}.\id{Block-Update}$ to at least $r-g+1$ components. Since $\real{\alpha}_t\real{\gamma}_t\real{\beta}_t$ occurs after $\real{B}_{b_{r-g}} = \real{B}_{r-g}'$, this contradicts (\ref{inv:case2-nomoresteps}). 
	Thus, for $b_{r-g} + 1 \leq t \leq \ell$,  $\alpha_t\gamma_t\beta_t$ does not contain any steps by $p_{i,r-g+1}$. The same argument shows that $\alpha_{\ell+1}$ does not contain any steps by $p_{i,r-g+1}$. 
	
	By definition, $\real{\gamma}_{b_{r-g}}$ only contains $\real{M}.\id{Update}$s that are part of $\real{M}.\id{Block-Update}$s applied by other simulators. It follows, by property~\ref{inv:correspondence} of the induction hypothesis for $\sigma$, that $\gamma_{b_{r-g}}$ does not contain any steps by $p_{i,r-g+1}$. $\real{\beta}_{b_{r-g}}$ is an $\real{M}.\id{Block-Update}$ to $r-g$ components. Thus, by~\lemmaref{lem:covering-simulator}.\ref{lem:covering-simulator-operations} and property~\ref{inv:correspondence}, $\beta_{b_{r-g}}$ does not contain any steps by $p_{i,r-g+1}$. Finally, recall that $\delta$ is a step by $p_{i,1}$. Therefore, the remainder of $\hat{\sigma}_{g-1}$ after $\hat{\sigma}_{g-1}'$ does not contain any steps by $p_{i,r-g+1}$.
	
	\medskip
	
	
	We now show that $\xi_{r-g}$ is applicable at $C_0\hat{\sigma}_{g-1}'$. Since $p_{i,r-g+1}$ does not take steps in the suffix of $\hat{\sigma}_{g-1}$ following $\hat{\sigma}_{g-1}'$, the state of $p_{i,r-g+1}$ at $C_0\hat{\sigma}'_{g-1}$ is the same as the state of $p_{i,r-g+1}$ at $C_0\hat{\sigma}_{g-1}$. By property $2'$ of the induction hypothesis for $\hat{\sigma}_{g-1}$, the state of $p_{i,r-g+1}$ at the end of $\hat{\sigma}_{g-1}$ is the same as the state of $p_{i,r-g+1}$ at the end of $\real{\sigma}$. 
	By definition of how $q_i$ revises the past of $p_{i,r-g+1}$, the state of $p_{i,r-g+1}$ at the beginning of $\xi_{r-g}$ is the state of $p_{i,r-g+1}$ at the end of $\real{\sigma}$. By property~\ref{inv:contents} of the induction hypothesis for $\hat{\sigma}_{g-1}$, the contents of $M$ at $C_0\hat{\sigma}_{g-1}'$ is the same as the contents of $\real{M}$ at $\real{C}_0 \real{\alpha}_1\real{\gamma}_1\real{\beta}_1 \cdots \real{\alpha}_{b_{r-g}-1}\real{\gamma}_{b_{r-g}-1}\real{\beta}_{b_{r-g}-1}\real{\alpha}_{b_{r-g}}$, which is precisely the view of $\real{M}$ returned by $\real{B}_{b_{r-g}}$. By (\ref{inv:case2-revision}), it follows that $\xi_g$ is a valid solo execution of $p_{i,r-g+1}$ from $C_0\hat{\sigma}_{g-1}'$. 
	
	Finally, note that, since $\gamma_{b_{r-g}}\beta_{b_{r-g}}$ only contains $M.\id{update}$s, it is applicable after $\xi_{r-g}$. Moreover, since $\xi_{r-g}$ only contains $M.\id{updates}$ to components updated by $\beta_{b_{r-g}}$ (by (\ref{inv:case2-revision})), the contents of $M$ are the same at $C_0 \hat{\sigma}_{g-1}'\zeta_{b_{r-g}} \gamma_{b_{r-g}}\beta_{b_{r-g}}$ and $C_0 \hat{\sigma}_{g-1}'\xi_{r-g} \gamma_{b_{r-g}}\beta_{b_{r-g}}$.
	It follows that the remainder of $\hat{\sigma}_{g-1}$ after $\gamma_{b_{r-g}}\beta_{b_{r-g}}$ is applicable and the states of the other processes and the contents of $M$ do not change. It follows that properties~\ref{inv:correspondence} and~\ref{inv:contents} hold for $\hat{\sigma}_g$, since they hold for $\hat{\sigma}_{g-1}$. The state of $p_{i,r-g+1}$ is the same at the end of $\hat{\sigma}_g$ and $\hat{\sigma}$ since we have inserted the steps of $p_{i,r-g+1}$ locally simulated by $q_i$ immediately after $\real{\delta}$. The states of the other processes are unchanged. Thus, property $2'$ holds for $\hat{\sigma}_g$. Finally, property~\ref{inv:hiddensteps} holds for $t = b_{r-g}$ since $\real{\delta}$ appears at the end of $\hat{\real{\sigma}}$ and $\real{B}_{b_{r-g}}$ updates $r-g$ components (\ref{inv:case2-numcomps}). The other $\zeta_t$'s are unchanged. Therefore, property~\ref{inv:hiddensteps} holds for $\hat{\sigma}_g$.
	
	Observe that property $2'$ for $\hat{\sigma}_{r-1}$ is the same as property~\ref{inv:states}, so $\hat{\sigma} = \hat{\sigma}_{r-1}$ satisfies all properties of the invariant for $\hat{\real{\sigma}}$. Therefore, by induction, the claim holds for the entire execution.
\end{proof}

\begin{lemma}
	\label{lem:g1simvalid}
	The simulation solves the colorless task, $T$.
\end{lemma}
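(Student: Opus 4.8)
The plan is to read off, for every execution $\real{E}$ of the real system, a corresponding execution of $\Pi$ using~\lemmaref{lem:geninv}, and to argue both that the values output by the real processes are valid outputs for $T$ and that every real process that takes enough steps outputs. Let $\real{\sigma}$ be the intermediate execution of $\real{E}$ and let $\sigma$ be the execution of $\Pi$ supplied by~\lemmaref{lem:geninv}; thus the state of every simulated process $p \in P_i$ at the end of $\sigma$ equals its state stored by $q_i$ at the end of $\real{E}$ (when $q_i$ has no pending operation), and the contents of $M$ at the end of $\sigma$ equal those of $\real{M}$ at the end of $\real{E}$.

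For \emph{validity}, I would first extend $\sigma$ to an execution $\sigma^{*}$ of $\Pi$ by appending, for each covering simulator $q_i$ that terminated in $\real{E}$ by constructing a block update $\beta_i$ to all $m$ components of $M$, the solo execution $\beta_i \xi_i$ that it locally simulated in~\algorithmref{alg:coveringsimulator}, where $\xi_i$ is the terminating solo execution of $p_{i,1}$ after $\beta_i$ (finite, since $\Pi$ is obstruction-free). These suffixes can be concatenated in any order: by~\lemmaref{lem:geninv}.\ref{inv:states} together with the fact that $q_i$ restores the saved states of $P_i$, the processes of $P_i$ are poised to perform $\beta_i$ at the end of $\sigma$; $\beta_i$ consists only of $M.\id{update}$s, so it is always applicable; and since $\beta_i$ overwrites every component, the contents of $M$ immediately after $\beta_i$ agree with what $q_i$ assumed when simulating $\xi_i$, regardless of any earlier appended suffix. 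I would then observe that, whenever a real process $q_i$ outputs a value $y$ in $\real{E}$, some process in $P_i$ outputs $y$ in $\sigma^{*}$: for a direct simulator this is immediate from~\algorithmref{alg:directsimulator} and~\lemmaref{lem:geninv}.\ref{inv:states}; for a covering simulator it follows either from~\propositionref{prop:construct-procedure}.\ref{prop:construct-procedure-terminate} (one of its processes output $y$) or from the appended $\xi_i$ (in which $p_{i,1}$ outputs $y$). Since each simulator assigns its own input to all the processes it simulates, the set of inputs in $\sigma^{*}$ equals the set of inputs of the real processes. As $T$ is colorless, $\mathcal{I}$, $\mathcal{O}$, and each $\Delta(I)$ are determined by sets and closed under taking subsets, so the set of values output by the real processes, being a subset of the values output in the valid execution $\sigma^{*}$ of $\Pi$, is a valid output set for $T$ on the real processes' inputs.

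For \emph{termination}, I would work with $\sigma$ directly. First, by induction on the identifier, I would show that every covering simulator $q_i$ terminates: by~\theoremref{thm:fulnb}, an $\real{M}.\id{Block-Update}$ by $q_i$ returns \peace{} only if some process with a smaller identifier performs an $\mathsf{H}.\id{update}$ during its constant-length execution interval, and, since covering simulators have smaller identifiers than direct simulators, such a process is a covering simulator which, by the induction hypothesis, performs only finitely many $\mathsf{H}.\id{update}$s; a counting argument then bounds the number of \peace{}-returning $\real{M}.\id{Block-Update}$s of $q_i$ by the number of such interfering updates, while, once all of $q_i$'s $\real{M}.\id{Block-Update}$s are atomic, each loop of $\textsc{Construct}(r)$ finds a repeated atomic block update within $\binom{m}{r-1}+1$ iterations, so $q_i$'s call to $\textsc{Construct}(m)$, and hence $q_i$, terminates. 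Once all covering simulators have stopped, only the at most $x$ direct simulators take steps; $\real{M}.\id{Block-Update}$ is wait-free and $\real{M}.\id{Scan}$ can be blocked only by infinitely many concurrent $\real{M}.\id{Block-Update}$s, so were a real process never to complete an operation, some direct simulator would perform infinitely many operations, whence its simulated process would take infinitely many steps in $\sigma$ beyond the point at which the covering simulators stopped. Since every covering simulator's $\real{M}.\id{Update}$s and $\real{M}.\id{Scan}$s are linearized within its execution intervals, only the at most $x$ direct simulators' processes take steps in $\sigma$ beyond that point, so this contradicts the $x$-obstruction-freedom of $\Pi$; applying $x$-obstruction-freedom to $\sigma$ once more shows each direct simulator's process eventually outputs, so the corresponding real process outputs. (When there are no direct simulators, this last part is vacuous and obstruction-freedom of $\Pi$ is used only to guarantee finiteness of the $\xi_i$.)

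I expect the termination argument for the covering simulators to be the main obstacle, since it is where the nonstandard structure of the augmented snapshot — in particular the identifier ordering in~\theoremref{thm:fulnb} — must be combined with a combinatorial bound on the recursion in $\textsc{Construct}$ to rule out a covering simulator running forever. A secondary subtlety is making the extension of $\sigma$ by the locally simulated executions $\beta_i \xi_i$ precise and verifying, using~\lemmaref{lem:geninv}.\ref{inv:states}, that the values reported by the covering simulators really are the outputs of their simulated processes in $\sigma^{*}$.
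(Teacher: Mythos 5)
Your validity argument matches the paper's proof of this lemma essentially step for step: you extend $\sigma$ by appending $\beta_i\xi_i$ for each covering simulator $q_i$ that returned from $\textsc{Construct}(m)$ (the paper calls the result $\bar{\bar{\sigma}}$), you justify applicability of the appended suffixes the same way (restored states via Lemma~\ref{lem:geninv}.\ref{inv:states}, $\beta_i$ is always applicable, $\beta_i$ overwrites all of $M$ so $\xi_i$ is applicable regardless of what was appended earlier), and you deduce that each simulator's output is the output of one of its simulated processes. You then invoke closure of $\Delta(I)$ under subsets where the paper argues that the set of simulator outputs is exactly the set $O$ of outputs in $\bar{\bar{\sigma}}$; both routes are valid and yours is if anything slightly more robust.

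The one genuine difference is scope: the paper's Lemma~\ref{lem:g1simvalid} establishes only output correctness, and defers wait-freedom of the simulation entirely to the separate Lemma~\ref{lem:g1simterm}, which in turn rests on Lemmas~\ref{lem:numatomic}, \ref{lem:numblocks}, and \ref{lem:finitecovering}. Your termination paragraph --- covering simulators finite by induction on identifier via Theorem~\ref{thm:fulnb} plus the combinatorial bound on $\textsc{Construct}$, then $x$-obstruction-freedom applied to the suffix of $\sigma$ in which only direct simulators' processes take steps --- is the right argument for that neighbouring lemma, not for this one, so you proved more than the statement asks for. It is also a bit compressed in one place: bounding the number of $\peace$-returning \id{Block-Update}s of $q_i$ by the number of interfering $\mathsf{H}.\id{update}$s does not by itself bound the total number of \id{Block-Update}s $q_i$ applies, since each non-atomic \id{Block-Update} also wastes all the recursive $\textsc{Construct}(r-1)$ work that led up to it; the recurrence defining $b(i)$ in Lemma~\ref{lem:numblocks} is exactly what accounts for this. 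These are issues of precision, not of a wrong idea --- once expanded, your argument would go through.
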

\begin{proof}
	Consider any real execution of the simulation from an initial configuration $\real{C}_0$. Let $\real{\sigma}$ be its intermediate execution. By~\lemmaref{lem:geninv}, there is a possible execution $\sigma$ of the protocol, $\Pi$, from an initial configuration $C_0$ of the simulated system that satisfies the invariants for $\real{\sigma}$. 
	
	Consider a covering simulator $q_i$ that returns from its call to $\textsc{Construct}(m)$. Recall that, after this call, $q_i$ locally simulates the block update $\beta$ returned by the call, followed by the terminating solo execution $\xi$ of $p_{i,1}$, and outputs the value that $p_{i,1}$ outputs in $\xi$. 
	
	Let $\bar{\sigma} = \sigma\beta\xi$. We claim that $\bar{\sigma}$ is also a possible execution of $\Pi$ from $C_0$. By~\lemmaref{lem:geninv}.\ref{inv:states}, the state of each simulated process $p_{i,g}$ at $C_0\sigma$ is the same as the state of $p_{i,g}$ stored by $q_i$ at $\real{C}_0\real{\sigma}$. Since $q_i$ restores the states of $p_{i,1},\dots,p_{i,m}$ after locally simulating $\beta\xi$, $p_{i,1},\dots,p_{i,m}$ are poised to perform $\beta$ at $C_0\sigma$. Thus, $\beta$ is applicable at $C_0\sigma$. Then, since $\beta$ overwrites the contents of all components of $M$, $\xi$ is applicable at $C_0\sigma\beta$. It follows that $\bar{\sigma}$ is a possible execution of $\Pi$ from $C_0$. 
	
	Since the covering simulators simulate disjoint sets of processes,  it is possible to append such executions from all the covering simulators onto the end of $\sigma$ and the resulting execution, $\bar{\bar{\sigma}}$ is still a possible execution of $\Pi$.
	
	Let $I = \{x_1,\dots,x_f\}$ be the set of inputs of the simulators in $\real{C}_0$ and let $O$ be the set of outputs in $\bar{\bar{\sigma}}$. Then $O$ is a valid output set for $I$, as specified by the task $T$. This is because, by construction, in $C_0$, each simulated process is assigned the input of its simulator, i.e.~the set of inputs of the simulated processes in $C_0$ is $I$ as well. Since $\Pi$ is assumed to be a correct protocol solving task $T$ and $\bar{\bar{\sigma}}$ is a possible execution of $\Pi$, $O$ is valid for $I$. Observe that, for each $1 \leq i \leq f$, exactly one process in $P_i$ has output a value in $\bar{\bar{\sigma}}$ and this is the value output by $q_i$. Thus, the set of outputs of the simulators is exactly $O$. It follows that the simulation is correct.
\end{proof}

\subsection{Wait-freedom of the simulation}

We now prove that the simulation is wait-free. We first prove lemmas that allow us to bound the number of operations that a covering simulator needs to apply.


\begin{proposition}
	\label{prop:numatomic}
	For $1 < r \leq m$, in any call to $\textsc{Construct}(r)$ by a covering simulator $q_i$, the size of the set $A$ on line 9 of~\algorithmref{alg:coveringconstruction} is at most ${m \choose r-1}$.
\end{proposition}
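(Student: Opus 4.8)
The plan is to give a short combinatorial argument by inspecting the pseudocode of $\textsc{Construct}$ in~\algorithmref{alg:coveringconstruction}. The key point is that every pair inserted into $A$ during a call to $\textsc{Construct}(r)$ has the form $(J,\real{V})$, where $J$ is a size-$(r-1)$ subset of $\{1,\dots,m\}$, and that no two pairs in $A$ share the same first coordinate $J$. Since there are exactly ${m \choose r-1}$ subsets of $\{1,\dots,m\}$ of size $r-1$, this immediately gives $|A| \leq {m \choose r-1}$ at every point during the call.

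First I would establish the shape of the pairs added to $A$. By~\propositionref{prop:construct-procedure}.\ref{prop:construct-procedure-return} --- or, more elementarily, by a trivial induction on the return statements of~\algorithmref{alg:coveringconstruction} --- whenever a (nested) call to $\textsc{Construct}(r-1)$ returns, it returns a block update to exactly $r-1$ distinct components, i.e.~a pair $([j_1,\dots,j_{r-1}],[v_1,\dots,v_{r-1}])$ with $\{j_1,\dots,j_{r-1}\} \subseteq \{1,\dots,m\}$ and $|\{j_1,\dots,j_{r-1}\}| = r-1$. From the code of $\textsc{Construct}(r)$ with $r > 1$, the set $A$ is initialized to $\emptyset$ just before the loop and is modified only by the assignment $A \gets A \cup \{(\{j_1,\dots,j_{r-1}\},\real{V})\}$, which is reached only after the preceding call to $\textsc{Construct}(r-1)$ returned $([j_1,\dots,j_{r-1}],[v_1,\dots,v_{r-1}])$. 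Hence every pair ever placed in $A$ has a size-$(r-1)$ subset of $\{1,\dots,m\}$ as its first coordinate.

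The one point requiring care is that these first coordinates are pairwise distinct; this follows from the control flow of the loop. The assignment adding a pair with first coordinate $J$ is executed only in the \emph{else}-branch of the conditional, which is taken precisely when the test ``there exists $(J',\real{V}') \in A$ such that $J' = \{j_1,\dots,j_{r-1}\}$'' \emph{fails}; that is, only when no pair with first coordinate $J = \{j_1,\dots,j_{r-1}\}$ is already in $A$. Conversely, if some later call to $\textsc{Construct}(r-1)$ returns a block update to a component set $J$ that already occurs as a first coordinate in $A$, then the test succeeds and $q_i$ leaves the loop (returning from $\textsc{Construct}(r)$ or terminating), so the assignment is never executed again. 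Thus each subset $J$ serves as the first coordinate of at most one pair in $A$. Combining this with the previous paragraph, at every point during the call the pairs in $A$ correspond to distinct size-$(r-1)$ subsets of $\{1,\dots,m\}$, of which there are ${m \choose r-1}$; hence $|A| \leq {m \choose r-1}$, as claimed. I do not anticipate any real obstacle here beyond being careful about this control-flow argument; in particular, no appeal to obstruction-freedom of $\Pi$ is needed, since the bound on $|A|$ holds even for executions in which some nested call fails to return.
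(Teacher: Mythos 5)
Your argument is correct and matches the paper's proof: both observe that the assignment adding a pair to $A$ is guarded by the failure of the membership test, so first coordinates in $A$ are always pairwise distinct, and each is a size-$(r-1)$ subset of $\{1,\dots,m\}$, giving $|A| \leq \binom{m}{r-1}$. Your additional remark that a successful test causes the loop to exit is a harmless extra observation, but not needed: the guard alone already guarantees distinctness.
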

\begin{proof}
	Recall that $A$ contains pairs $(J,\real{V})$, where each $J$ is a set of $r-1$ components and $\real{V}$ is a view of $\real{M}$ returned by an atomic $\real{M}.\id{Block-Update}$ that $q_i$ applied during the call to $\textsc{Construct}(r)$. 
	Since $A$ is initially empty and a pair $(J,\real{V})$ is added to $A$ only if the test on line 12 is false, i.e.~only if $J$ is not equal to the first element of any pair in $A$, the first elements in $A$ are all distinct. Since each such element is a set of $r-1$ components and there ${m \choose r-1}$ different sets of components of size $r-1$, $|A| \leq {m \choose r-1}$.
\end{proof}

Let 
\begin{align*}
	a(r) &= 
	\begin{cases} 
		0 & \mbox{if }r= 1 \\
		\left({m \choose r-1}+1\right)a(r-1)+ {m \choose r-1} & \mbox{if }1 < r \leq m.  \\
	\end{cases} 
\end{align*}
It can be verified that $a(r) \leq ({m \choose m/2}+1)^{r-1} - 1 \leq 2^{m(r-1)}$. 


\begin{lemma}
	\label{lem:numatomic}
	For $1 \leq r \leq m$, if every $\real{M}.\id{Block-Update}$ applied by $q_i$ during a call to $\textsc{Construct}(r)$ is atomic, then the maximum number of  $\real{M}.\id{Block-Update}$s applied by $q_i$ in a call to $\textsc{Construct}(r)$ is at most $a(r)$.
\end{lemma}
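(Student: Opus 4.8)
The plan is to prove the lemma by induction on $r$. For the base case $r=1$, \propositionref{prop:construct-procedure}.\ref{prop:construct-procedure-altscanblock} already states that $q_i$ applies no $\real{M}.\id{Block-Update}$ operations in a call to $\textsc{Construct}(1)$, so the count is $0 = a(1)$ and the claim holds trivially (the hypothesis about atomicity is vacuous here).

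For the inductive step, fix $1 < r \le m$, assume the claim for $r-1$, and consider an arbitrary call $\textsc{Q}$ to $\textsc{Construct}(r)$ by a covering simulator $q_i$ in which every $\real{M}.\id{Block-Update}$ applied by $q_i$ is atomic. From \algorithmref{alg:coveringconstruction}, $\textsc{Q}$ consists of a sequence of iterations of the main loop; each iteration begins with one recursive call to $\textsc{Construct}(r-1)$ and, unless $q_i$ returns or terminates in that iteration, continues into the else-branch, where $q_i$ applies exactly one $\real{M}.\id{Block-Update}$ (to $r-1$ components). The proof comes down to two accounting steps: bounding the number of loop iterations, and bounding the number of $\real{M}.\id{Block-Update}$s contributed by each iteration.

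For the first step, note that because every $\real{M}.\id{Block-Update}$ in $\textsc{Q}$ is atomic, each time $q_i$ enters the else-branch the applied $\real{M}.\id{Block-Update}$ returns a view $\real{V} \neq \peace$, so the pair $(\{j_1,\dots,j_{r-1}\},\real{V})$ is inserted into $A$; moreover its first component differs from every first component already in $A$ (that membership test failing is precisely what sent $q_i$ into the else-branch). Hence each else-branch iteration strictly increases $|A|$. By \propositionref{prop:numatomic}, $|A| \le {m \choose r-1}$ always, so there are at most ${m \choose r-1}$ else-branch iterations; and once $|A| = {m \choose r-1}$ the first components in $A$ are all of the size-$(r-1)$ subsets of the $m$ components, so the very next recursive call forces the membership test to succeed and $q_i$ leaves the loop (by returning or terminating). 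The only other way the loop can end is a simulated process outputting a value inside a recursive call or inside the local simulation of $p_{i,r}$, which only ends it sooner. Thus $\textsc{Q}$ has at most ${m \choose r-1}+1$ loop iterations, of which at most ${m \choose r-1}$ reach the else-branch. For the second step, since every $\real{M}.\id{Block-Update}$ in $\textsc{Q}$ is atomic, in particular every $\real{M}.\id{Block-Update}$ applied during each nested call to $\textsc{Construct}(r-1)$ is atomic, so the induction hypothesis bounds the $\real{M}.\id{Block-Update}$ count of each such nested call by $a(r-1)$. Summing one $a(r-1)$ term per iteration and one extra $\real{M}.\id{Block-Update}$ per else-branch iteration gives a total of at most $\left({m \choose r-1}+1\right)a(r-1) + {m \choose r-1} = a(r)$, completing the induction.

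I expect the main obstacle to be the first accounting step: one must argue carefully that atomicity is exactly what forces the pair to be inserted into $A$ (so that $|A|$ grows by one with every else-branch iteration), that the inserted pair is genuinely new, and that saturation of $A$ guarantees the loop exits — while also checking that early termination of a simulated process never inflates the count. Once these are in place, matching the total against the recurrence defining $a(r)$ is purely mechanical.
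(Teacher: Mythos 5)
Your proof is correct and follows essentially the same route as the paper's: induction on $r$, with \propositionref{prop:numatomic} bounding the number of directly-applied $\real{M}.\id{Block-Update}$s by $|A| \le \binom{m}{r-1}$ and the inductive hypothesis bounding the contribution of each nested $\textsc{Construct}(r-1)$ call by $a(r-1)$, giving $\left(\binom{m}{r-1}+1\right)a(r-1)+\binom{m}{r-1}=a(r)$. Your extra observations (strict growth of $|A|$ on each else-branch, saturation forcing exit) spell out details the paper leaves implicit but do not change the argument.
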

\begin{proof}
	By induction on $r$. The base case is $r = 1$. It holds since $q_i$ applies $a(1) = 0$ $\real{M}.\id{Block-Update}$s in $\textsc{Construct}(1)$. Now let $r>1$ and suppose the claim holds for $r-1$. By~\propositionref{prop:numatomic}, in a call to $\textsc{Construct}(r)$ by $q_i$, the size of the set $\mathcal{A}$ on line 9 of~\algorithmref{alg:coveringconstruction} is at most ${m \choose r-1}$. By the test on line 20, a pair is added to $\mathcal{A}$ exactly when $q_i$ applies an atomic $\real{M}.\id{Block-Update}$ on line 19. By assumption, each $\real{M}.\id{Block-Update}$ applied by $q_i$ during this call is atomic. Hence, every $\real{M}.\id{Block-Update}$ applied by $q_i$ during 
	its recursive calls to $\textsc{Construct}(r-1)$ are atomic. In each such call, $q_i$ applies at most $a(r-1)$ $\real{M}.\id{Block-Update}$s, by the induction hypothesis.
	Excluding these $\real{M}.\id{Block-Update}$s 
	$q_i$ applies at most ${m \choose r-1}$  $\real{M}.\id{Block-Update}$s in its call to $\textsc{Construct}(r)$. 
	Each of these $\real{M}.\id{Block-Update}$s is immediately preceded by a recursive call by $q_i$ to $\textsc{Construct}(r-1)$.
	Furthermore, after applying the last $\real{M}.\id{Block-Update}$, $q_i$ calls $\textsc{Construct}(r-1)$ once more. It follows that $q_i$ applies at most 
	$({m \choose r-1}+1)a(r-1) + {m \choose r-1} = a(r)$ atomic $\real{M}.\id{Block-Update}$s in $\textsc{Construct}(r)$.
\end{proof}

Let 
\[
b(i) = 
\begin{cases}
a(m) & \mbox{if }i = 1 \\
(a(m-1)+1)\sum_{j=1}^i b(j) + a(m) & \mbox{if } 1 < i \leq f .
\end{cases}
\]
It can be verified that $b(i) = a(m)(a(m-1)+1)^{i-1} \leq a(m)^i \leq 2^{i m(m-1)}$ for $1 \leq i \leq f$.

\begin{lemma}
	\label{lem:numblocks}
	For $1 \leq i \leq f-x$, the maximum number of  $\real{M}.\id{Block-Update}$s that covering simulator $q_i$ applies in any real execution is at most $b(i)$.
\end{lemma}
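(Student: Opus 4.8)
The plan is to prove the bound by induction on $i$, using \theoremref{thm:fulnb} --- which pins down exactly when an $\real{M}.\id{Block-Update}$ can fail to be atomic --- together with \lemmaref{lem:numatomic} (the all-atomic case) and \propositionref{prop:numatomic}. For the base case $i=1$, note that $q_1$ has the smallest identifier among all simulators, so by \theoremref{thm:fulnb} every $\real{M}.\id{Block-Update}$ applied by $q_1$ is atomic; since all of $q_1$'s $\real{M}.\id{Block-Update}$s are applied inside its single call to $\textsc{Construct}(m)$ (\algorithmref{alg:coveringsimulator}), \lemmaref{lem:numatomic} with $r=m$ gives that $q_1$ applies at most $a(m)=b(1)$ of them.

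For the inductive step, fix $i$ with $1<i\leq f-x$ and assume the claim for all smaller indices. Since $q_i$ is a covering simulator and covering simulators have the smallest identifiers, every $q_j$ with $j<i$ is also a covering simulator, so the induction hypothesis bounds the number of $\real{M}.\id{Block-Update}$s applied by $q_j$ by $b(j)$. I would first bound the number of \emph{non-atomic} $\real{M}.\id{Block-Update}$s applied by $q_i$: by \theoremref{thm:fulnb}, whenever an $\real{M}.\id{Block-Update}$ $\real{B}$ by $q_i$ returns $\peace{}$, the execution interval of $\real{B}$ contains an $\id{update}$ on \lineref{line:xupdate} performed by some $q_j$ with $j<i$ as part of a $\real{M}.\id{Block-Update}$ by $q_j$. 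Each $\real{M}.\id{Block-Update}$ by $q_j$ executes \lineref{line:xupdate} exactly once, and the execution intervals of $q_i$'s $\real{M}.\id{Block-Update}$s are pairwise disjoint (they occur consecutively in $q_i$'s sequential execution). Hence choosing one such $\id{update}$ for each non-atomic $\real{M}.\id{Block-Update}$ of $q_i$ defines an injection into the set of \lineref{line:xupdate} $\id{update}$s performed by $q_1,\dots,q_{i-1}$, and by the induction hypothesis this set has size at most $\sum_{j=1}^{i-1}b(j)$. So $q_i$ applies at most $\kappa:=\sum_{j=1}^{i-1}b(j)$ non-atomic $\real{M}.\id{Block-Update}$s.

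It remains to bound the \emph{total} number of $\real{M}.\id{Block-Update}$s applied by $q_i$ in terms of $\kappa$, for which I would generalize \lemmaref{lem:numatomic}: for $1\leq r\leq m$, if a covering simulator applies at most $\kappa$ non-atomic $\real{M}.\id{Block-Update}$s during a call to $\textsc{Construct}(r)$, then it applies at most $a(r)+(a(r-1)+1)\kappa$ $\real{M}.\id{Block-Update}$s during that call; this is shown by induction on $r$, mimicking the proof of \lemmaref{lem:numatomic}. The key point, already behind \propositionref{prop:numatomic}, is that within one call to $\textsc{Construct}(r)$ with $r>1$ each loop iteration makes one recursive call to $\textsc{Construct}(r-1)$ and, except for the final iteration, applies exactly one $\real{M}.\id{Block-Update}$ to $r-1$ components --- and at most $\binom{m}{r-1}$ of these are atomic, since an atomic $\real{M}.\id{Block-Update}$ adds a new element to the set $A$ and the loop terminates as soon as a previously seen set of components is returned. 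Thus the number of iterations exceeds its all-atomic value only by the number of non-atomic $\real{M}.\id{Block-Update}$s applied at that level, and the recursion on $r$ propagates both $a(\cdot)$ and the non-atomic contributions, the factor $a(r-1)+1$ covering how many $\real{M}.\id{Block-Update}$s each non-atomic $\real{M}.\id{Block-Update}$ inside a subcall can be responsible for. Applying this with $r=m$, using $\kappa\leq\sum_{j=1}^{i-1}b(j)$, and recalling that all of $q_i$'s $\real{M}.\id{Block-Update}$s lie in its single call to $\textsc{Construct}(m)$, shows that $q_i$ applies at most $a(m)+(a(m-1)+1)\sum_{j=1}^{i-1}b(j)=b(i)$ of them, closing the induction.

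I expect the main obstacle to be the bookkeeping in the inner induction on $r$: one has to handle the final loop iteration (which makes a recursive call but applies no $\real{M}.\id{Block-Update}$ at its own level), premature termination when a simulated process outputs a value, and --- most delicately --- the fact that the $\kappa$ non-atomic $\real{M}.\id{Block-Update}$s may be spread across many subcalls rather than concentrated at the top level, which is what forces the coefficient of $\kappa$ to grow (to roughly $a(m-1)$) instead of staying constant. The remaining ingredients --- the base case, the appeal to \theoremref{thm:fulnb}, and the disjointness of $q_i$'s execution intervals --- are routine.
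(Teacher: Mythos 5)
Your proof is correct and follows the same overall architecture as the paper's: the same base case for $q_1$ via \lemmaref{lem:numatomic}, the same appeal to \theoremref{thm:fulnb} to attribute each non-atomic $\real{M}.\id{Block-Update}$ of $q_i$ to an $\id{update}$ by some $q_j$ with $j<i$, and the same final arithmetic (you can verify $a(m)=(m+1)a(m-1)+m$, so your closed form $a(m)+(a(m-1)+1)\sum_{j<i}b(j)$ matches the paper's $(a(m-1)+1)\sum_{j<i}b(j)+(m+1)a(m-1)+m$; incidentally, the paper's displayed recurrence for $b(i)$ has a typo, $\sum_{j=1}^{i}$ should read $\sum_{j=1}^{i-1}$, and the paper's own proof uses the latter). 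Where you diverge is in how you bound the total once $\kappa=\sum_{j<i}b(j)$ non-atomic $\real{M}.\id{Block-Update}$s are admitted. The paper argues informally that the worst case is when all $\kappa$ non-atomic $\real{M}.\id{Block-Update}$s occur at the top level (to $m-1$ components), justified by the monotonicity $a(r)>a(r-1)$, and then counts the extra work by hand. You instead state and prove by induction a generalized form of \lemmaref{lem:numatomic} --- at most $a(r)+(a(r-1)+1)\kappa$ $\real{M}.\id{Block-Update}$s in a call to $\textsc{Construct}(r)$ with at most $\kappa$ non-atomic ones anywhere inside it --- which subsumes the paper's worst-case claim and removes the need to argue where the non-atomic $\real{M}.\id{Block-Update}$s can profitably occur. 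Your version is cleaner to verify: the coefficient $a(r-1)+1$ dominates $a(r'-1)+1$ for every sublevel $r'<r$, so the over-counting at deeper levels is harmless. You are also more careful than the paper in bounding the number of non-atomic $\real{M}.\id{Block-Update}$s: you make explicit the injection (using disjointness of $q_i$'s execution intervals and the fact that each $\real{M}.\id{Block-Update}$ issues exactly one update on \lineref{line:xupdate}) where the paper simply asserts a worst case. The gain is rigor; the cost is proving one extra auxiliary lemma.
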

\begin{proof}
	By induction on $i$. The base case is $i=1$. Since $q_1$ has the smallest identifier, all of its $\real{M}.\id{Block-Update}$s are atomic. Hence, by~\lemmaref{lem:numatomic}, $q_1$ applies at most $a(m) = b(1)$ $\real{M}.\id{Block-Update}$s. 
	Now let $i > 1$ and suppose that the claim holds for $i-1$.
	In this case, we also need to count the $\real{M}.\id{Block-Update}$s applied by $q_i$ that are not atomic. By property X of an augmented snapshot object [make a reference], if an $\real{M}.\id{Block-Update}$  $\real{B}$ applied by $q_i$ is not atomic, then some covering simulator $q_j$ with $j < i$ applied an $\real{M}.\id{Block-Update}$ in the execution interval of $\real{B}$. 
	
	By the induction hypothesis, $q_1,\dots,q_{i-1}$ collectively apply at most $\sum_{j=1}^{i-1} b(j)$ $\real{M}.\id{Block-Update}$s in total during the execution. In the worst case, each $\real{M}.\id{Block-Update}$ applied by these simulators causes a different $\real{M}.\id{Block-Update}$ applied by $q_i$ to be non-atomic. The non-atomic $\real{M}.\id{Block-Update}$s applied by $q_i$ and the $\real{M}.\id{Block-Update}$s that $q_i$ applied to construct them are all useless
	for constructing an atomic $\real{M}.\id{Block-Update}$ to $m$ components. Since $a(r) > a(r-1)$ for all $1 < r \leq m$, $q_i$ applies the maximum number of $\real{M}.\id{Block-Update}$s when only $\real{M}.\id{Block-Update}$s applied by $q_i$ to $m-1$ components are non-atomic and, hence, all $\real{M}.\id{Block-Update}$s applied by $q_i$ in its calls to $\textsc{Construct}(m-1)$ are atomic. 
	
	By~\propositionref{prop:numatomic}, $q_i$ applies at most $m$ atomic $\real{M}.\id{Block-Update}$s to $m-1$ components. Since all $\real{M}.\id{Block-Update}$s applied by $q_i$ in its calls to $\textsc{Construct}(m-1)$ are atomic, by~\lemmaref{lem:numatomic}, it applies at most $a(m-1)$ $\real{M}.\id{Block-Update}$s in each such call. Each atomic $\real{M}.\id{Block-Update}$ is immediately preceded by a call to $\textsc{Construct}(m-1)$. Furthermore, after applying the last $\real{M}.\id{Block-Update}$, $q_i$ calls $\textsc{Construct}(m-1)$ once more. 
	
	Therefore, in total, $q_i$ applies at most $(a(m-1)+1)\sum_{j=1}^{i-1} b(j) + (m+1) a(m-1) + m = b(i)$ $\real{M}.\id{Block-Update}$s.
\end{proof}

\begin{lemma}
	\label{lem:finitecovering}
	Each covering simulator $q_i$ applies at most $2b(i)+1$ operations.
	Moreover, if there are only covering simulators, i.e.~$x = 0$, then every covering simulator outputs a value after taking at most $(2f+7)b(f)+3 \leq 2^{fm^2}$ steps.
\end{lemma}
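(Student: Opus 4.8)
The plan is to assemble this from three facts already in hand: \propositionref{prop:alternate-scan-blockupdate}, which forces each simulator to alternate $\real{M}.\id{Scan}$ and $\real{M}.\id{Block-Update}$ (beginning with a $\real{M}.\id{Scan}$ and, if it terminates, ending with a $\real{M}.\id{Scan}$); \lemmaref{lem:numblocks}, which bounds the number of $\real{M}.\id{Block-Update}$s a covering simulator applies by $b(i)$; and \lemmaref{lem:impldetails}, which gives the per-operation step cost of the augmented snapshot implementation. The first claim is then immediate: if covering simulator $q_i$ applies $b$ $\real{M}.\id{Block-Update}$s, then by \propositionref{prop:alternate-scan-blockupdate} it applies at most $b+1$ $\real{M}.\id{Scan}$s, hence at most $2b+1$ operations in total; since $q_i$ is a covering simulator (so $1 \leq i \leq f-x$), \lemmaref{lem:numblocks} gives $b \leq b(i)$, and $q_i$ applies at most $2b(i)+1$ operations.

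For the second claim, assume $x=0$, so that all of $q_1,\dots,q_f$ are covering simulators and \lemmaref{lem:numblocks} applies to each. I would first show the simulation terminates. By \lemmaref{lem:numblocks} the total number of $\real{M}.\id{Block-Update}$ operations applied by all simulators is finite; hence by \theoremref{thm:fulnb} no $\real{M}.\id{Scan}$ is blocked forever, so every operation $q_i$ applies completes. Moreover $q_i$ performs only finitely many local steps between operations: the recursion in $\textsc{Construct}$ has depth $m$, each of its loops has finitely many iterations (all but the last apply an $\real{M}.\id{Block-Update}$, and there are finitely many of those by \lemmaref{lem:numblocks}), and each locally simulated solo execution is finite because $\Pi$ is obstruction-free. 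Therefore $q_i$ eventually completes its call to $\textsc{Construct}(m)$, and at that point it outputs a value --- either a value output by one of its simulated processes during the call, or the value of $p_{i,1}$ in the locally simulated execution $\xi$.

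For the step bound, I would charge the steps of $q_i$ to its operations on $\real{M}$. Since $b$ is non-decreasing, the first claim and \lemmaref{lem:numblocks} give that $q_i$ applies at most $b(i) \leq b(f)$ $\real{M}.\id{Block-Update}$s and at most $b(i)+1 \leq b(f)+1$ $\real{M}.\id{Scan}$s. By \lemmaref{lem:impldetails}, each $\real{M}.\id{Block-Update}$ costs $6$ steps, contributing at most $6b(f)$ steps. Each $\real{M}.\id{Scan}$ $S$ costs at most $2k_S+3$ steps, where $k_S$ counts the $\mathsf{H}.\id{update}$s performed on \lineref{line:xupdate} of some $\real{M}.\id{Block-Update}$ by another process during the execution interval of $S$. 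The key observation is that the execution intervals of $q_i$'s distinct $\real{M}.\id{Scan}$s are pairwise disjoint while each such $\mathsf{H}.\id{update}$ is a single atomic step, so it is counted in $k_S$ for at most one $\real{M}.\id{Scan}$ $S$ of $q_i$; hence $\sum_S k_S$ is at most the total number of $\real{M}.\id{Block-Update}$s applied by processes other than $q_i$ (each of which performs exactly one $\mathsf{H}.\id{update}$ on \lineref{line:xupdate}), which by \lemmaref{lem:numblocks} is at most $\sum_{j \neq i} b(j) \leq (f-1)b(f)$. Summing over $q_i$'s $\real{M}.\id{Scan}$s gives at most $2(f-1)b(f) + 3(b(f)+1) = (2f+1)b(f)+3$ steps, so $q_i$ takes at most $6b(f) + (2f+1)b(f) + 3 = (2f+7)b(f)+3$ steps in total. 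The final inequality $(2f+7)b(f)+3 \leq 2^{fm^2}$ then follows by a routine calculation from $b(f) \leq 2^{fm(m-1)}$ (established above).

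I expect the main obstacle to be getting the constant in $(2f+7)b(f)+3$ exactly right in the charging argument, which hinges on the bookkeeping distinction that $k_S$ counts only the triple-appending $\mathsf{H}.\id{update}$ on \lineref{line:xupdate} of each $\real{M}.\id{Block-Update}$ --- not the auxiliary writes into $\mathsf{L}$ --- so that ``number of such updates by other processes'' coincides exactly with ``number of $\real{M}.\id{Block-Update}$s by other processes'', the quantity \lemmaref{lem:numblocks} controls. A secondary point requiring care is the termination argument: one must rule out a covering simulator stalling forever inside an $\real{M}.\id{Scan}$, which again reduces to the finiteness of the total number of $\real{M}.\id{Block-Update}$s from \lemmaref{lem:numblocks} together with obstruction-freedom of $\Pi$ bounding the locally simulated solo executions.
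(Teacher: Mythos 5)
Your proof is correct and follows the paper's argument essentially step by step: \propositionref{prop:alternate-scan-blockupdate} and \lemmaref{lem:numblocks} give the $2b(i)+1$ operation count, and \lemmaref{lem:impldetails} plus the charging of triple-appending $\mathsf{H}.\id{update}$s to the (pairwise disjoint) $\real{M}.\id{Scan}$ intervals of $q_i$ gives the $(2f+7)b(f)+3$ step bound. The explicit termination argument (finitely many $\real{M}.\id{Block-Update}$s keeps $\real{M}.\id{Scan}$ from blocking, obstruction-freedom of $\Pi$ bounds the local simulations) is a correct refinement of a point the paper leaves implicit but does not change the route.
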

\begin{proof}
	By~\propositionref{prop:alternate-scan-blockupdate}, each simulator alternately applies $\real{M}.\id{Scan}$ and $\real{M}.\id{Block-Update}$ until it applies an $\real{M}.\id{Scan}$ that causes it to terminate. By~\lemmaref{lem:numblocks}, each covering simulator $q_i$ applies at most $b(i)$ $\real{M}.\id{Block-Update}$s and, hence, at most $b(i)+1$ $\real{M}.\id{Scan}$s. It follows that $q_i$ applies at most $2b(i)+1$ operations in total. 
	
	By~\lemmaref{lem:impldetails}, each $\real{M}.\id{Block-Update}$ operation consists of 6 steps and each $\real{M}.\id{Scan}$ operation $S$ consists of at most $2k_S+3$ steps, where $k_S$ is the number of different updates by other simulators that are concurrent with it. Notice that $\sum [ k_S : S\textrm{ is a }\real{M}.\id{Scan}\textrm{ by }q_i]$ is bounded above by the number of $\real{M}.\id{Block-Update}$ operations applied by the other simulators.
	
	If there are only covering simulators, then $\sum [ k_S : S\textrm{ is an }\real{M}.\id{Scan}\textrm{ by }q_i] \leq \sum_{j \neq i} b(j) \leq (f-1)b(f)$. In each of its at most $b(f)$ $\real{M}.\id{Block-Update}$ operations, $q_i$ takes $6$ steps. Moreover, $q_i$ takes at most $\sum [ 2k_S + 3 : S\textrm{ is an }\real{M}.\id{Scan}\textrm{ by }q_i] \leq  2(f-1)b(f) + 3(b(i)+1)$ steps in its $\real{M}.\id{Scan}$ operations. Therefore, $q_i$ takes at most $6b(f) + 2(f-1)b(f) + 3(b(i)+1) \leq (2f+7)b(f) + 3$ steps in total. Since $(2f+7)b(f) + 3 \leq (2f+7)2^{fm(m-1)} + 3 \leq 2^{fm^2}$ whenever $f \geq 2$ and $m \geq 2$, the desired bound follows.
\end{proof}

\begin{lemma}
	\label{lem:g1simterm}
	The simulation is wait-free.
\end{lemma}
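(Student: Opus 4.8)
The plan is to argue by contradiction: assume the simulation is not wait-free, so there is an infinite real execution in which some simulator $q_i$ takes infinitely many steps but never outputs a value. If $d=0$ (no direct simulators), every simulator is covering, and the argument below specializes to the observation that by \lemmaref{lem:finitecovering} every simulator applies only finitely many operations on $\real{M}$ and hence terminates; so I will present the general argument, which subsumes this case. The first reduction is to show that \emph{some} simulator must apply infinitely many $\real{M}.\id{Block-Update}$ operations. Since $\real{M}.\id{Block-Update}$ is wait-free (\theoremref{thm:fulnb}), a simulator taking infinitely many steps but invoking only finitely many $\real{M}.\id{Block-Update}$s must be blocked forever inside a single $\real{M}.\id{Scan}$; by the proof of \theoremref{thm:fulnb} this is possible only if there are infinitely many $\mathsf{H}.\id{update}$s performed on \lineref{line:xupdate}, hence infinitely many $\real{M}.\id{Block-Update}$s, performed by one of the finitely many simulators. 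Conversely, if no simulator applies infinitely many $\real{M}.\id{Block-Update}$s, then there are only finitely many such $\mathsf{H}.\id{update}$s in the whole execution, so by \lemmaref{lem:impldetails} every $\real{M}.\id{Scan}$ completes; then every simulator applies only finitely many operations (a covering simulator by \lemmaref{lem:finitecovering}, a direct simulator since, by \assumptionref{ass:altscanupdate} and \algorithmref{alg:directsimulator}, it alternates $\real{M}.\id{Scan}$ with $\real{M}.\id{Block-Update}$ and either terminates or performs another $\real{M}.\id{Block-Update}$ after each $\real{M}.\id{Scan}$), and therefore every simulator finishes its code and outputs a value --- contradicting the choice of $q_i$. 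Hence some simulator applies infinitely many $\real{M}.\id{Block-Update}$s, and by \lemmaref{lem:numblocks} this simulator cannot be covering, so it is a direct simulator $q_k$ (in particular $x\ge 1$).

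The next step is to move to the simulated side. Each of $q_k$'s $\real{M}.\id{Block-Update}$s completes and the single $\real{M}.\id{Update}$ comprising it is linearized (\lemmaref{lem:xblocklin} if it is atomic, \lemmaref{lem:splitblock} if it returns \peace{}), and by \algorithmref{alg:directsimulator} that $\real{M}.\id{Update}$ simulates an $M.\id{update}$ by $p_{k,1}$. Thus the intermediate execution $\real{\sigma}$ of the real execution contains infinitely many $\real{M}.\id{Update}$s simulating steps of $p_{k,1}$, and by \lemmaref{lem:geninv} the corresponding execution $\sigma$ of $\Pi$ is infinite with $p_{k,1}$ taking infinitely many steps in it. I then bound the set of processes that take infinitely many steps in $\sigma$: a covering simulator applies only finitely many operations (\lemmaref{lem:finitecovering}, \propositionref{prop:alternate-scan-blockupdate}) and performs only finitely many revisions of the past, each locally inserting a finite solo execution (solo executions of $\Pi$ are finite since $\Pi$ is $x$-obstruction-free with $x\ge 1$), so each of its simulated processes takes only finitely many steps in $\sigma$; likewise, a direct simulator applying only finitely many $\real{M}.\id{Block-Update}$s simulates only finitely many steps of its process. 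Since there are finitely many simulated processes altogether, there is a configuration $C$ in $\sigma$ after which only the processes in $P := \{\,p_{j,1} : q_j$ is a direct simulator that applies infinitely many $\real{M}.\id{Block-Update}$s$\,\}$ take steps, and $|P|\le d = x$.

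Finally I would invoke $x$-obstruction-freedom of $\Pi$. After $C$ only the processes in $P$, of which there are at most $x$, take steps in $\sigma$, so every process in $P$ that takes sufficiently many steps after $C$ eventually outputs a value; in particular $p_{k,1}\in P$, which takes infinitely many steps after $C$, outputs a value at some finite point in $\sigma$. But a direct simulator simulates the steps of its single process in order and inserts no hidden steps, so the sequence of steps $q_k$ simulates is exactly the sequence of steps of $p_{k,1}$ in $\sigma$; when $q_k$ reaches the step at which $p_{k,1}$ outputs, $q_k$ outputs that value and terminates, so $q_k$ applies only finitely many $\real{M}.\id{Block-Update}$s --- a contradiction. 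This contradiction establishes the lemma.

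I expect the main obstacle to be the bookkeeping in the second and third paragraphs: making the passage from ``$q_k$ performs infinitely many $\real{M}.\id{Block-Update}$s'' to ``$p_{k,1}$ takes infinitely many steps in $\sigma$'' fully precise through the intermediate execution and \lemmaref{lem:geninv}, and verifying that all the processes taking only finitely many steps in $\sigma$ can be placed before a single configuration $C$ so that the hypothesis of $x$-obstruction-freedom applies verbatim. A secondary delicate point is the first reduction, where the non-blocking (rather than wait-free) nature of $\real{M}.\id{Scan}$ must be handled by charging a stalled $\id{Scan}$ against concurrent $\mathsf{H}.\id{update}$s on \lineref{line:xupdate}.
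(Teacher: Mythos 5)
Your proposal is correct and follows essentially the same route as the paper: handle the covering simulators via \lemmaref{lem:finitecovering}, deduce that an infinitely-stepping simulator must be direct, pass through the intermediate execution and \lemmaref{lem:geninv} to obtain a simulated execution with an infinite suffix involving at most $x$ processes, and derive a contradiction with $x$-obstruction-freedom, with a final appeal to the non-blocking property of $\real{M}$. Your packaging differs in minor ways --- you front-load the non-blocking argument into the ``first reduction'' and phrase the final contradiction in terms of a single process $p_{k,1}$ outputting, whereas the paper cuts the real execution at a fixed prefix $\real{\sigma}'$ and contradicts $x$-obstruction-freedom directly on the infinite suffix $\sigma''$, and then invokes non-blocking at the very end --- but the underlying argument is the same.
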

\begin{proof}
	\lemmaref{lem:finitecovering} takes care of the case when there are $x = 0$ direct simulators, so consider $x > 0$. Recall that, in this case, we assume that $\Pi$ is $x$-obstruction-free.
	Suppose, for a contradiction, that there is a real execution where 
	some process $q_i$ applies infinitely many operations on $\real{M}$. 
	Let $\real{\sigma}$ be its intermediate execution. 
	By~\lemmaref{lem:finitecovering}, covering simulators  
	apply only finitely many operations in $\real{\sigma}$, so $q_i$ is a direct simulator. Since $\real{\sigma}$ is infinite, there is an infinite suffix $\real{\sigma}''$ of $\real{\sigma}$ in which only direct simulators apply operations. Let $\real{\sigma}'$ be the prefix of $\real{\sigma}$ prior to $\real{\sigma}''$ let $\sigma'$ be a simulated execution of $\Pi$ that satisfies~\lemmaref{lem:geninv} for $\real{\sigma}'$. Let $\sigma''$ be the execution obtained by replacing each operation in $\real{\sigma}''$ with the operation that it simulates.
	
	By~\lemmaref{lem:geninv}.\ref{inv:states} for $\sigma'$, the state of each process $p_{i,g}$ at the end of $\sigma'$ is the same as the state of $p_{i,g}$ at the end of $\real{\sigma}'$. Moreover, by~\lemmaref{lem:geninv}.\ref{inv:contents} for $\sigma'$, the contents of $M$ and $\real{M}$ are the same at the end of $\sigma'$ and $\real{\sigma}'$, respectively. Thus, $\sigma''$ is applicable at the end of $\sigma'$. Only the $x$ direct simulators take steps in $\sigma''$. Since $\sigma''$ is an infinite execution, this contradicts the fact that $\Pi$ is $x$-obstruction-free. Thus, every simulator $q_i$ applies only finitely many operations on $\real{M}$.
	
	Our implementation of $\real{M}$ is non-blocking. So, if processes perform infinitely many accesses 
	to the underlying single-writer snapshot object in the implementation, 
	then infinitely many operations on $\real{M}$ will complete. 
	Since every process applies only finitely many operations on $\real{M}$, 
	there is no infinite execution, which means that the simulation is wait-free.
\end{proof}

\subsection{Applications of the Simulation Theorem}

The following bounds are immediate corollaries of the simulation theorem.

\begin{corollary}
	For $1 \leq x \leq k$, any $x$-obstruction-free protocol for solving $k$-set agreement among $n \geq k+1$ processes uses at least $\lfloor \tfrac{n-x}{k+1-x} \rfloor +1$ registers. In particular, any obstruction-free protocol for solving consensus among $n \geq 2$ processes uses at least $n$ registers.
\end{corollary}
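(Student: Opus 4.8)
The plan is to derive this corollary directly from the second bullet of \theoremref{thm:simulation} by instantiating the colorless task $T$ as $k$-set agreement and taking the number of simulators to be $f = k+1$. First I would pass to the model of \theoremref{thm:simulation}: if some $x$-obstruction-free protocol solves $k$-set agreement among $n$ processes using $r$ registers, then, since an $r$-component multi-writer snapshot object can be implemented from $r$ registers (and conversely, so that the space complexity is unchanged, as recalled in \sectionref{sec:prelim}), there is an $x$-obstruction-free protocol $\Pi$ solving $k$-set agreement among $n$ processes using an $r$-component multi-writer snapshot; that is, we may take $m = r$. Since $k$-set agreement is one of the colorless tasks listed in \sectionref{sec:prelim} and $n \geq k+1$, we have $f = k+1 \leq n$, so the hypothesis $f \leq n$ holds, and the hypothesis $1 \leq x < f$ of the second bullet is exactly the assumption $1 \leq x \leq k$.

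Next I would invoke the impossibility of deterministic wait-free $k$-set agreement: $k$-set agreement cannot be solved in a wait-free manner among $k+1 = f$ processes using only registers~\cite{BG93, HS99, SZ00}, hence not using a single-writer snapshot object either, since such an object is implementable from single-writer registers. With this, the second bullet of \theoremref{thm:simulation} applies and yields $m \geq \lfloor \tfrac{n-x}{f-x} \rfloor + 1 = \lfloor \tfrac{n-x}{k+1-x} \rfloor + 1$. As $m = r$, the original protocol uses at least $\lfloor \tfrac{n-x}{k+1-x} \rfloor + 1$ registers, which is the claimed bound. For the consensus statement I would specialize to $k = 1$ and $x = 1$ (recall that obstruction-freedom is $1$-obstruction-freedom), so $f = 2$, the relevant impossibility is that of deterministic wait-free consensus among two processes~\cite{FLP85, LA87}, and the bound becomes $\lfloor \tfrac{n-1}{2-1} \rfloor + 1 = n$.

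Since the statement is advertised as an immediate corollary, there is no real obstacle to overcome; the only points that need a little care are (i) moving between a protocol that uses registers and one that uses a single multi-writer snapshot object without changing its space complexity, which is justified by the mutual implementations recalled in \sectionref{sec:prelim} together with the accounting convention that an $m$-component snapshot counts as $m$ registers, and (ii) checking that the classical $k$-set agreement impossibility holds in the communication model assumed by \theoremref{thm:simulation}, namely processes sharing a single-writer snapshot object, which follows because that object is implementable from registers and $k$-set agreement is impossible with any number of registers.
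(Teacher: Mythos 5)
Your proof is correct and follows the same route as the paper: apply the second bullet of \theoremref{thm:simulation} with $f = k+1$, invoking the impossibility of deterministic wait-free $k$-set agreement among $k+1$ processes. The extra care you take about converting between registers and an $m$-component snapshot (and about implementing a single-writer snapshot from registers) is sound and makes explicit a step the paper leaves implicit.
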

\begin{proof}
	It is known that it is impossible to solve $k$-set agreement among $k+1$ processes in a wait-free manner~\cite{BG93,HS99,SZ00,AC13,AP16}. Thus, the desired bound follows by applying the second part of~\theoremref{thm:simulation} with $f = k+1$.
\end{proof}

\begin{corollary}
	For $0 < \epsilon < 1$, any obstruction-free protocol for solving $\epsilon$-approximate agreement among $n \geq 2$ processes uses at least $\min \{\lfloor \tfrac{n}{2} \rfloor +1, \sqrt{\log_2 \log_3(\tfrac{1}{\epsilon}) - 2} \}$ registers.
\end{corollary}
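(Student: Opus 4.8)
The plan is to invoke the first bullet of \theoremref{thm:simulation} with $f = 2$ and with $T$ the $\epsilon$-approximate agreement task. This is permissible: $\epsilon$-approximate agreement is a colorless task (\sectionref{sec:prelim}), it can be solved in a wait-free manner by two processes communicating through a single-writer snapshot (the classical midpoint protocol), $n \geq f = 2$, and the protocol $\Pi$ in the statement is assumed obstruction-free. Hence it suffices to exhibit a value $L$ lower bounding the step complexity of wait-free two-process $\epsilon$-approximate agreement with a single-writer snapshot, and then to simplify $\min\{\lfloor n/f\rfloor + 1,\ (\log_2\tfrac{L}{f})^{1/2}\}$.

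So the real content is the step-complexity lower bound
\[
L \ \geq\ \log_3(\tfrac1\epsilon)
\]
(any $\Omega(\log\tfrac1\epsilon)$ bound with an explicit constant will do). I would prove this by a valence/adversary argument. Fix processes $p,q$ with inputs $0$ and $1$. Since the protocol is deterministic and wait-free, every reachable configuration $C$ has well-defined solo-completion values $\mathrm{sv}_p(C)$ and $\mathrm{sv}_q(C)$ --- the outputs of $p$, resp.\ $q$, when run alone from $C$. Validity gives $\mathrm{sv}_p(C_0)=0$ and $\mathrm{sv}_q(C_0)=1$ at the initial configuration $C_0$, and $\epsilon$-agreement forces $|\mathrm{sv}_p(C) - \mathrm{sv}_q(C)| \leq \epsilon$ as soon as either process has halted in $C$. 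I would build a chain of reachable configurations $C_0, C_1, C_2, \dots$ in which $C_{j+1}$ extends $C_j$ by at most one step of $p$ and at most one step of $q$, maintaining the invariant $|\mathrm{sv}_p(C_j) - \mathrm{sv}_q(C_j)| \geq 3^{-j}$. The inductive step inspects the operations $p$ and $q$ are poised to perform at $C_j$; doing a small case analysis on which of them are writes (with a single-writer snapshot a write touches only the acting process's component) and using indistinguishability to commute the non-conflicting case, one selects $C_{j+1}$ from the $O(1)$ one-round extensions of $C_j$ so that the spread of solo values drops by at most a factor $3$. The invariant then prevents either process from having halted in $C_j$ while $3^{-j} > \epsilon$; hence some $C_j$ with $j \geq \log_3(\tfrac1\epsilon)$ is needed, and since $p$ (say) has taken at least $j$ steps at $C_j$, the claimed $L$ follows.

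Plugging $L \geq \log_3(\tfrac1\epsilon)$ and $f = 2$ into \theoremref{thm:simulation},
\[
m \ \geq\ \min\Bigl\{\ \bigl\lfloor \tfrac{n}{2}\bigr\rfloor + 1,\ \bigl(\log_2 \tfrac{L}{2}\bigr)^{1/2}\ \Bigr\} \ \geq\ \min\Bigl\{\ \bigl\lfloor \tfrac{n}{2}\bigr\rfloor + 1,\ \bigl(\log_2\log_3(\tfrac1\epsilon) - 2\bigr)^{1/2}\ \Bigr\},
\]
using $\log_2\tfrac{L}{2} = \log_2 L - 1 \geq \log_2\log_3(\tfrac1\epsilon) - 1 \geq \log_2\log_3(\tfrac1\epsilon) - 2$ (the extra unit of slack is harmless, and this is where the $-2$ in the statement comes from). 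For $\epsilon$ small enough that the radicand is nonnegative this is exactly the asserted bound; for larger $\epsilon$ the bound is vacuous. The step I expect to be the main obstacle is the step-complexity lower bound, and within it the ``factor $3$ per round'' claim: one must carry out the case analysis on reads versus writes of the two poised operations, apply indistinguishability correctly when they commute, verify the constructed configurations are genuinely reachable, and make sure the solo-value spread really is an obstruction to $\epsilon$-agreement. Everything after that --- checking colorlessness, wait-free solvability, and the arithmetic that turns $\log_2(L/f)$ into $\log_2\log_3(\tfrac1\epsilon) - 2$ --- is routine.
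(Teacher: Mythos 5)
Your reduction is exactly the paper's: apply the first bullet of \theoremref{thm:simulation} with $f=2$ and $T$ equal to $\epsilon$-approximate agreement, after checking colorlessness and two-process wait-free solvability, then feed a step-complexity lower bound $L$ into the formula. Where you diverge is only in how $L$ is obtained. The paper simply cites a known result (Hoest and Shavit, cited as \texttt{HS2006}) that gives $L = \tfrac{1}{2}\log_3(\tfrac{1}{\epsilon})$, and this constant is what produces the $-2$ in the statement exactly, since $\log_2\bigl(\tfrac{1}{2}\log_3(\tfrac{1}{\epsilon})/2\bigr) = \log_2\log_3(\tfrac{1}{\epsilon}) - 2$. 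You instead sketch a from-scratch valency argument claiming $L \geq \log_3(\tfrac{1}{\epsilon})$, a factor of two stronger, and then deliberately throw away one unit of slack to land on $-2$; that is logically sound, though your explanation that ``this is where the $-2$ in the statement comes from'' misattributes it --- in the paper it is not slack but the cited constant.

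The only genuine soft spot is the sketch itself. You flag it yourself: the ``factor $3$ per round'' invariant, the case analysis on poised reads and writes in a single-writer snapshot, the indistinguishability/commuting step, and reachability of the constructed chain are precisely the parts that require care, and none of them are carried out. Your accounting of steps is also slightly loose: you allow $C_{j+1}$ to extend $C_j$ by ``at most one step of $p$ and at most one step of $q$,'' yet conclude that $p$ has taken ``at least $j$ steps at $C_j$,'' which does not follow unless every round actually charges $p$ a step. These are exactly the details the cited proof handles. Since the result is standard and citable, the omission is not a fatal flaw in the overall argument, but as written the step-complexity lower bound is an unproven claim; either cite it (as the paper does) or fill in the induction, at which point your proof coincides with the paper's up to the choice of constant.
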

\begin{proof}
	It is known that any protocol for solving $\epsilon$-approximate agreement among 2 processes takes at least $L = \frac{1}{2}\log_3(\frac{1}{\epsilon})$ steps~\cite{HS2006}. Thus, desired bound follows by applying the first part of~\theoremref{thm:simulation} with $f = 2$.
\end{proof}
\section{Nondeterministic Solo Terminating to Obstruction-free}
\label{sec:ndst2of}

A protocol is \emph{nondeterministic solo terminating} if, for every process $p$ and every configuration $C$, there exists a solo execution by $p$ from $C$ in which $p$ outputs a value (and terminates). 
This property was introduced by Ellen, Herlihy, and Shavit to prove space lower bounds for randomized wait-free and obstruction-free consensus protocols.
In this section, we prove that, for a large class of objects, the space complexity of
obstruction-free protocols and 
nondeterministic solo terminating protocols
using only these objects
is the same.

\subsection{Protocols}

A \emph{nondeterministic protocol} specifies a nondeterministic state machine $M_p$ for each process $p$. Each \emph{state machine} $M_p$ is a 5-tuple 
$(S_p,\nu_p,\delta_p,I_{p},F_p)$,
where
\begin{itemize}
	\item
	$S_p$ is a totally ordered set of states,
	\item
	$I_p \subseteq S_p$ is a set of initial states, one for each possible input to $p$,
	\item
	$F_p \subseteq S_p$ is a set of final states, one for each possible output of $p$,
	\item
	$\nu_p$ specifies the next step that $p$ will perform in each non-final state $s \in S_p - F_p$,
	and
	\item
	$\delta_p$ is a transition function mapping each non-final state
	$s \in S_p - F_p$
	and possible response from step $\nu_p(s)$ to a nonempty subset of $S_p$. 
\end{itemize}

A state machine is \emph{deterministic}
if $\delta_p$ maps each non-final state and possible response to a single state, rather than a subset of states. A protocol is \emph{deterministic} if the state machine of every process is deterministic.

In every configuration, each process $p$ is in some state $s \in S_p$.
Initially, each process is in one of its initial states and each component $j$ of the snapshot contains an initial value, $v_j$.
When allocated a step by the scheduler, 
$p$ does nothing if it is in a final state.
If $p$ is in  state $s \in S_p - F_p$, it performs step $\nu_p(s)$.
If $a$ is the response it receives from this step,
$p$ chooses its next state $s'$ from $\delta_p(s,a)$.
If $s'$ is in $F_p$, we consider $p$ to have output the value corresponding to $s'$.

\subsection{$m$-component objects}

An \emph{$m$-component object} supports a \id{scan} operation, which returns the values
of all $m$ components, and a set of operations on individual components. For example, in addition to \id{scan},
an $m$-component snapshot object
supports \id{write} to each component.
An $m$-component max register
supports \id{scan} and \id{writemax} to each component.

\begin{theorem}
	\label{thm:ndst2of}
	If there is a nondeterministic solo terminating protocol for a task that only uses one $m$-component object, then there is an obstruction-free protocol for the task that only uses the same object.
\end{theorem}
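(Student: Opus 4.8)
The plan is to turn the nondeterministic solo terminating protocol $\Pi$ into a \emph{deterministic} obstruction-free protocol $\widehat\Pi$ using the same single $m$-component object, by a ``guess-and-execute'' construction: a process repeatedly observes the object, locally computes a solo continuation of $\Pi$ that terminates, and runs it, restarting this process whenever interference is noticed. The one feature of $m$-component objects I will use is that they support a \id{scan}, which reads the whole current value of the object in one atomic step; this serves both to pick a good continuation and to detect that other processes have moved.

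In $\widehat\Pi$, each process $p$ keeps in its (unbounded) local state its current $\Pi$-state $s$ together with a \emph{plan}: a finite solo execution $\xi$ of $\Pi$ by $p$, recorded as the sequence of operations it prescribes, the responses those operations are expected to return, and the $\Pi$-states passed through, the last of which is a final state of $p$. Process $p$ works in two alternating modes. In \emph{planning} mode it performs a \id{scan}, obtaining the current object value $v$; since $\Pi$ is nondeterministic solo terminating, from the configuration in which $p$ is in state $s$ and the object equals $v$ there is a finite solo execution by $p$ in which $p$ outputs, and $p$ fixes one such $\xi$ (say the least in some fixed well-order of solo executions), points at its first step, and switches to \emph{executing} mode. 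In executing mode $p$ performs the operation at its pointer, which is $\nu_p(s)$. If the response equals the one recorded in $\xi$, $p$ advances $s$ and its pointer one step along $\xi$, and if the new $s$ is final it outputs the corresponding value and halts. If the response differs from the one recorded in $\xi$ --- which can happen only because another process changed the object --- $p$ instead sets $s$ to some element of $\delta_p(s,a)$, where $a$ is the \emph{actual} response (a legal $\Pi$-transition), and returns to planning mode. (One may also have $p$ perform an extra interference-checking \id{scan} after each operation and replan if it disagrees with $\xi$; this only shortens the window before a replan and is not needed for correctness.) Since $\widehat\Pi$ touches only the one $m$-component object, it has the same space complexity as $\Pi$.

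For correctness, take any execution $\widehat\sigma$ of $\widehat\Pi$ and delete every \id{scan} performed by a process in planning mode (and every interference-checking \id{scan}). Because \id{scan} does not change the object, the object has the same value at every point of the resulting subsequence $\sigma$ as at the corresponding point of $\widehat\sigma$, so every operation kept in $\sigma$ receives exactly the response it received in $\widehat\sigma$; and, by construction, each process only ever performs the operation $\nu_p$ prescribes for its current $\Pi$-state and only ever moves to a state in $\delta_p(s,a)$ for the actual response $a$. Hence $\sigma$ is a genuine execution of $\Pi$ from the initial configuration with the same input for every process, and the output of each process in $\widehat\sigma$ is the value of a final $\Pi$-state it genuinely reached in $\sigma$; since $\Pi$ solves the task, so does $\widehat\Pi$. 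For obstruction-freedom, fix a reachable configuration $C$ and let only $p$ take steps after $C$. If $p$ is executing at $C$ it follows its current plan for at most $|\xi|$ more steps, after which it has either reached a final state (and halted) or entered planning mode; and once $p$ is in planning mode while running solo, its \id{scan} returns the true object value, its chosen $\xi$ is a solo execution of $\Pi$ from exactly the configuration that holds, and --- since no one else takes steps --- every operation of $\xi$ returns precisely the recorded response, so $p$ runs $\xi$ to its end and outputs. As $\xi$ is finite, $p$ terminates after finitely many steps.

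The delicate point is the correctness argument for arbitrary $m$-component objects: an operation such as \id{write} returns an acknowledgement no matter what other processes have done, so $p$ cannot in general tell from operation responses alone that its plan has gone stale, and it may blindly follow a stale plan all the way to a final state. What rescues the argument is precisely the availability of \id{scan}: deleting the planning-mode scans leaves the object values untouched, so the retained operations still see the responses they saw, and a process that follows a stale plan to a final state has nonetheless walked a legal $\Pi$-trajectory, so its output is still valid for that execution of $\Pi$. A secondary, purely formal point is that ``picking a terminating solo execution'' requires no effective procedure in this model --- $\delta_p$ is an arbitrary function --- so a canonical choice suffices; if one insists on an algorithmic $\widehat\Pi$, a dovetailed search over solo executions works whenever the state sets are countable.
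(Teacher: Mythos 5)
Your proof is correct, and it takes a genuinely different route from the paper's. The paper builds a deterministic protocol $\Pi'$ with the \emph{same} state space $S_p$ as $\Pi$ by redefining only the transition function: $\delta'_p(s,a)$ is chosen so that the first step of a \emph{shortest} $p$-solo path from $s$ (a terminating solo run assuming the expected object value $E_p(s)$) is taken. Obstruction-freedom is then established by a potential-function argument: once a solo-running process has done one \id{scan}, its expectation $E_p$ is accurate, a $p$-solo path exists from every subsequent state, and the length of the shortest such path decreases by exactly one at each step, so it hits zero. Your construction instead enlarges the local state to hold an entire cached plan $\xi$ plus a mode bit, interleaves auxiliary ``planning'' \id{scan}s with the simulated $\Pi$-steps, and proves obstruction-freedom operationally: any lingering stale plan is consumed in at most $|\xi|$ steps, and the first plan computed \emph{after} a process goes solo is followed to completion because no one else perturbs the object. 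Your correctness argument (project $\widehat{\sigma}$ onto a $\Pi$-execution $\sigma$ by erasing the read-only auxiliary \id{scan}s, noting that every retained step performs $\nu_p(s)$ and lands in $\delta_p(s,a)$ for the \emph{actual} response $a$) is, if anything, a bit cleaner to state than the paper's ``every execution of $\Pi'$ is an execution of $\Pi$'' and has the advantage of not needing the paper's \emph{alternate scan/component-op} normal form. What the paper's approach buys is that $\Pi'$ lives on exactly $S_p$ with no auxiliary steps, which is tidier and makes it manifestly a state machine of the same shape as $\Pi$; what your approach buys is a simpler, more mechanical termination argument that avoids the potential function and the careful $\ell_{i+1}=\ell_i-1$ bookkeeping.

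One point you should make explicit rather than leave implicit: nondeterministic solo termination, as used in the paper, guarantees a terminating solo execution only from \emph{reachable} configurations of $\Pi$. When a process in planning mode scans and sees $(s,v)$, you are entitled to a terminating plan $\xi$ from $(s,v)$ only because the configuration of the ambient $\Pi$-execution $\sigma$ obtained by erasing auxiliary scans is reachable in $\Pi$ and has $p$ in state $s$ with the object equal to $v$. This is exactly the projection argument you already make for correctness, so the fix is just to invoke it one line earlier to justify that a plan exists at all; without it, the phrase ``from the configuration in which $p$ is in state $s$ and the object equals $v$'' is not backed by the hypothesis. The paper handles the analogous point carefully (see its remark that a $p$-solo path is only guaranteed from reachable configurations), and your write-up should too.
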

\begin{proof}
	Let $\Pi$ be a nondeterministic solo terminating protocol for a task $T$ that uses a single $m$-component object $M$. Let $M_p = (S_p,\nu_p,\delta_p,I_p,F_p)$ be the state machine specified by $\Pi$ of each process $p$. 
	
	Without loss of generality, we may assume that, in $\Pi$, each process alternately performs \id{scan} and operations on components of $M$, starting with a \id{scan}, until it performs a \id{scan} that causes it to output a value and terminate. Moreover, we may assume that each process $p$ stores a vector $E_p$, as a part of its state, which is updated as follows. Initially, $E_p = (v_1,\dots,v_m)$, where each $v_j$ is the initial value of component $j$ of $M$.
	Whenever $p$ performs a \id{scan} on $M$, it updates $E_p$ to the result of the \id{scan}. After performing an operation on component $j$ of $M$, $p$ updates $E_p$ by simulating the operation on component $j$ of $E_p$. At all times, $E_p$ contains what $p$ expects to see when it next performs a \id{scan}, provided no other process has taken any steps since its last \id{scan} of $M$. For each state $s \in S_p$, let $E_p(s)$ denote the contents of $E_p$ stored by $p$ in state $s$. 
	
	\medskip 
	A \emph{$p$-solo path} from a state $s$ of length $t$ is an alternating sequence of states 
	in $S_p$ and responses, $s_0, a_0, s_1, \dots, a_{t-1}, s_t$, such that $s_0 = s$, 
	$s_t \in F_p$, and, for each $0 \leq i < t$,
	\begin{itemize}
		\item 
		$s_{i+1} \in \delta_p(s_i,a_i)$,
		\item 
		if $\nu_p(s_i)$ is an operation
		on component $j$, then $a_i$ is the
		response when
		$p$ simulates this operation on component $j$ of $E_p(s_i)$, and
		\item
		if $\nu_p(s_i)$ is a \id{scan}, then $a_i = E_p(s_i)$.
	\end{itemize}
	
	A $p$-solo path from $s$ represents a solo terminating execution by $p$ from a configuration $C$ where $p$ is in state $s$ and the contents of $M$ are $E_p(s)$. By nondeterministic solo termination, if $C$ is reachable, then there is a terminating solo execution by $p$ from $C$. Hence, a $p$-solo path from $s$ exists.
	In particular, if $C$ is a reachable configuration immediately following a \id{scan} by $p$
	and $p$ is in state $s$ in $C$, then the contents of $M$ in
	$C$ are $E_p(s)$. Hence, there is a $p$-solo path from $s$.
	However, if $C$ is not reachable, then there is not necessarily a solo execution by $p$ from $C$ that
	reaches a final state.
	
	
	\medskip
	
	For each process $p$, we define a deterministic state machine $M_p' = (S_p,\nu_p,\delta'_p,I_{p},F_p)$ from $M_p = (S_p,\nu_p,\delta_p,I_{p},F_p)$.
	The result is a deterministic protocol, $\Pi'$, which uses the same $m$-component object.
	For each state $s \in S_p - F_p$ and response $a$, we define $\delta_p'(s,a)$ as follows.
	If there is a $p$-solo path from $s$ such that the response of step $\nu_p(s)$ is $a$,
	then consider the first state $s'$
	such that there is a shortest
	$p$-solo path that begins with $s,a,s'$
	and define $\delta'_p(s,a) = s'$. 
	Otherwise, consider the first state $s' \in \delta_p(s,a)$ and define $\delta'_p(s,a) = s'$. 
	
	\medskip

	For each process $p$, each state $s \in S_p - F_p$, and each response $a$ to $\nu_p(s)$,
	$\delta'_p(s,a) \in \delta_p(s,a)$.
	Thus $\delta'_p(s,a)$ is a state that $p$ could be in after performing $\nu_p(s)$ in $\Pi$.
	Hence every execution of $\Pi'$ is an execution of $\Pi$.  It follows that if $\Pi$ is a  protocol for the task
	$T$, $\Pi'$ is also a protocol for $T$.
	
	\medskip
	
	Assume, for a contradiction, that there is an infinite solo execution $\alpha$ by some process $p$ from a reachable configuration $C$ of $\Pi'$. In this execution, $p$ never enters a final state and, so, does not output a value. 
	Let $\alpha'$ be the shortest prefix of $\alpha$ that contains a \id{scan} by $p$ and write $\alpha = \alpha'\alpha''$. Let $C_0 = C\alpha'$, $e_0$, $C_1$, $e_1, \dots$ be the alternating sequence of configurations and steps of $\alpha''$. For $i \geq 0$, let $s_i \in S_p - F_p$ be the state of $p$ in $C_i$ and let $a_i$ be the response of $e_i$. 
	
	Since $\alpha'$ ends with a \id{scan} by $p$, the contents of $M$ are equal to $E_p(s_0)$ at $C_0$. Moreover, since $\alpha''$ is a solo execution by $p$, the contents of $M$ are equal to $E_p(s_i)$ at $C_i$, for each $i \geq 0$.
	Hence, there is a $p$-solo path from $s_i$, for all $i \geq 0$.
	
	If $e_i$ is a \id{scan}, then its response, $a_i$, is exactly $E_p(s_i)$. Similarly, if $e_i$ is an operation on a component, its response, $a_i$, is equal to the response $p$ obtains when it simulates this operation on $E_p(s_i)$. Thus, in any  $p$-solo path from $s_i$, the response returned by $\nu_p(s_i) = e_i$ is equal to  $a_i$. 
	
	For $i \geq 0$, let $\ell_i$ be the length of a shortest $p$-solo path from $s_i$. 
	By definition of $\delta'_p(s_i,a_i)$, $s_{i+1}$ is the first state such that there is a shortest $p$-solo path $\sigma_i$ from $s_i$ that begins with $s_i,a_i,s_{i+1}$. Since the suffix of $\sigma_i$ starting from $s_{i+1}$ is also a $p$-solo path from $s_{i+1}$, it follows that $\ell_{i+1} = \ell_i - 1$. 
	
	This implies that, when
	$i = \ell_0$,
	$\ell_{i} = 0$ and, hence, $s_{i} \in F_p$. This contradicts the fact that $s_i \in S_p - F_p$. Hence, $\Pi'$ is obstruction-free.
\end{proof}

\subsection{General Objects}

Consider a nondeterministic solo terminating protocol $\Pi$ that uses $m$ objects, each of which supports a \id{read} operation. We say that $\Pi$ is \emph{ABA-free} if, for any execution $C_0, e_0, C_1, e_1, \dots$ of $\Pi$, there is no $i < j < k$ such that the value of component $c$ at configurations $C_i$ and $C_k$ are the same, but the value of component $c$ at configuration $C_j$ is different.

Observe that if $\Pi$ uses only fetch-and-increment objects or max-registers, then $\Pi$ is ABA-free. If $\Pi$ uses only registers, then $\Pi$ can be made ABA-free by having each process append its identifier and a strictly increasing sequence number to each of its write operations. These extra values are ignored by reads. 
Similarly if $\Pi$ uses only
swap objects or compare-and-swap objects, it can be made ABA-free.

\begin{corollary}
	Suppose there is a non-deterministic solo terminating protocol $\Pi$ that only uses $m$ objects, $r_1,\dots,r_m$, each of which supports read. If $\Pi$ is ABA-free, then there is an obstruction-free protocol that uses the same $m$ objects.
\end{corollary}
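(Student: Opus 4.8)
The plan is to reduce to Theorem~\ref{thm:ndst2of} by bundling $r_1,\dots,r_m$ into a single $m$-component object and then unbundling it again, with ABA-freedom making both directions work. Let $M$ be the $m$-component object whose component $j$ is a copy of $r_j$ and which supports \id{scan} together with, for each $j$, all the operations supported by $r_j$. First I would transform $\Pi$ into a protocol $\hat\Pi$ using only $M$: replace each \id{read} of $r_j$ by an $M.\id{scan}$ followed by projection onto component $j$, and replace every other operation on $r_j$ by the same operation on component $j$ of $M$. Since only the $j$-th coordinate of a scan is ever used (exactly where $\Pi$ read $r_j$), $M.\id{scan}$-then-project is a faithful atomic implementation of \id{read} of $r_j$ in every execution, so every execution of $\hat\Pi$ maps to an execution of $\Pi$ with the same outputs and $\hat\Pi$ solves the same task. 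Moreover $\hat\Pi$ is nondeterministic solo terminating: in a solo execution an $M.\id{scan}$ returns exactly what reading $r_1,\dots,r_m$ in turn would return, so any terminating solo execution of $\Pi$ from a configuration $C$ lifts to one of $\hat\Pi$ from the matching configuration (process states equal, component $j$ of $M$ equal to $r_j$). Applying Theorem~\ref{thm:ndst2of} to $\hat\Pi$ yields an obstruction-free protocol $\hat\Pi'$ for the task using only $M$.

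It remains to implement $M$ from $r_1,\dots,r_m$ so as to turn $\hat\Pi'$ back into an obstruction-free protocol over $r_1,\dots,r_m$. Operations on a single component of $M$ are performed directly on the corresponding $r_j$. An $M.\id{scan}$ is implemented by repeatedly performing a \emph{collect} (reading $r_1,\dots,r_m$ in order) until two consecutive collects return the same vector $(v_1,\dots,v_m)$, which is then returned. The key step, and the place where ABA-freedom does its work, is to show this is a linearizable implementation of \id{scan}: in an execution that uses $r_1,\dots,r_m$ in an ABA-free way, if the read of $r_j$ in the first collect and the read of $r_j$ in the second collect both return $v_j$, then $r_j$ holds $v_j$ throughout the interval between those two reads; hence every $r_j$ holds $v_j$ simultaneously at the instant of the last read of the first collect, and the \id{scan} may be linearized there. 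This \id{scan} is non-blocking (delayed only by infinitely many concurrent operations on components) and each component operation is wait-free. Composing $\hat\Pi'$ with this implementation gives the desired protocol $\Pi^{*}$ over $r_1,\dots,r_m$.

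Finally I would verify the two correctness claims for $\Pi^{*}$. For the task: every execution of $\hat\Pi'$ is an execution of $\hat\Pi$ (by the construction in Theorem~\ref{thm:ndst2of}), hence corresponds to an execution of $\Pi$, whose component-value sequences are ABA-free; expanding the atomic \id{scan}s of $\hat\Pi'$ into double-collects on $r_1,\dots,r_m$ introduces only mid-collect configurations, at which no component value has changed since the previous ``real'' configuration, so $\Pi^{*}$ uses $r_1,\dots,r_m$ in an ABA-free way, the double-collect implementation is linearizable throughout, and correctness of $\hat\Pi'$ transfers to $\Pi^{*}$. For obstruction-freedom: from any configuration, if only $p$ takes steps, $p$ first completes its pending $M$-operation in finitely many steps (a component operation is one step; a \id{scan} halts after at most one further successful double-collect, since no component changes), and thereafter, by linearizability, $p$'s low-level steps correspond to a solo execution of $\hat\Pi'$ by $p$, which terminates; each of its finitely many high-level steps costs only $O(m)$ low-level steps when run solo. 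The main obstacle I expect is this last piece of bookkeeping: making precise that ABA-freedom propagates from $\Pi$ through $\hat\Pi$ and $\hat\Pi'$ down to $\Pi^{*}$, so that the double-collect \id{scan} is genuinely linearizable in every execution of $\Pi^{*}$, and that composing a non-blocking linearizable implementation underneath an obstruction-free protocol preserves obstruction-freedom.
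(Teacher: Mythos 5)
Your proposal is correct and takes essentially the same route as the paper's proof: bundle $r_1,\dots,r_m$ into an $m$-component object, apply Theorem~\ref{thm:ndst2of}, then unbundle via direct component operations and ABA-free double-collects for \id{scan}. You have simply spelled out the bookkeeping (ABA-freedom propagating through the intermediate protocols, linearizability of the double-collect, preservation of obstruction-freedom under a non-blocking linearizable implementation) that the paper's much terser proof asserts without detail.
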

\begin{proof}
	Let $O_j$ denote the set of operations supported by $r_j$.
	It is possible to simulate $\Pi$ using an $m$-component object such that component $j$ of the object supports the same operations, $O_j$. Hence, there is a nondeterministic solo terminating protocol $\Pi'$ that uses an $m$-component object. 
	By Theorem~\ref{thm:ndst2of}, there is an obstruction-free protocol $\Pi''$ that uses the same $m$-component object. Since $\Pi$ is ABA-free, both $\Pi'$ and $\Pi''$ are ABA-free. 
	Observe that we can simulate $\Pi''$ using $r_1,\dots,r_m$. In particular, we may simulate a scan with obstruction-free double collects. Since $\Pi''$ is ABA-free, this is linearizable. We can simulate an operation in $O_j$ on component $j$ with the same operation on $r_j$. The result is an obstruction-free protocol $\Pi'''$ that uses $r_1,\dots,r_m$.
\end{proof}

\section{Conclusions and Future Work}
We conjecture that the space complexity of $x$-obstruction-free $k$-set agreement 
is $n-k+x$, matching the upper bound of~\cite{BRS15}. 
Our paper makes significant progress by proving the first 
  non-constant lower bound for non-anonymous processes.
This lower bound is asymptotically tight when
  $k$ and $x$ are constant and tight 
when $x=1$ and either $k=1$ (obstruction-free consensus) and
  $k=n-1$ (obstruction-free $(n-1)$-set agreement).
Equally importantly, our simulation technique uses a
new
approach.
It is conceivable that 
this approach
can be 
extended to obtain a tight lower bound
when $k > 1$ and
$1 \leq x \leq k$.


We proved that a space lower bound for obstruction-free protocols implies
a
space lower bound for protocols that satisfy nondeterministic solo termination
(including randomized wait-free protocols)
by converting any nondeterministic solo terminating protocol to an obstruction-free
protocol that uses the same number of registers.
This allows researchers to focus on deriving space lower bounds 
  for obstruction-free protocols.

\nopagebreak
It is known how to convert any deterministic obstruction-free protocol for $n$ processes
that has solo step complexity $b$ into a randomized wait-free protocol against  an oblivious adversary, which has expected step complexity polynomial in $n$ and $b$~\cite{GHHW13}.
However, our construction provides no bound on the solo step complexity of the resulting obstruction-free protocol.
It would be interesting to improve the construction to bound the solo
step complexity of the resulting protocols.

%



\bibliographystyle{plainurl}
\bibliography{biblio}



\end{document}